\documentclass[11pt]{article}
\usepackage[english]{babel}
\usepackage{float}
\usepackage{lineno} 
\usepackage{graphicx,multicol}
\usepackage{epic,eepic,epsfig}
\usepackage{amssymb}
\usepackage{amsmath,amsthm}
\usepackage{color}
\usepackage{tikz}
\newcommand{\jbj}[1]{#1}
\newcommand{\JBJ}[1]{#1}

\usepackage{eso-pic} 




\setlength{\topmargin}{0cm}

\setlength{\headheight}{0.1cm}

\setlength{\headsep}{0.1cm}

\setlength{\textheight}{23.4cm}

\setlength{\oddsidemargin}{0.1cm}

\setlength{\evensidemargin}{0.1cm}

\setlength{\textwidth}{15.5cm}


\newcommand{\AY}[1]{#1}



\newcommand{\induce}[2]{\mbox{$ #1 \langle #2 \rangle$}}

\newcommand{\2}{\vspace{2mm}}

\theoremstyle{plain}
\newtheorem{theorem}{Theorem}
\newtheorem{proposition}[theorem]{Proposition}
\newtheorem{claim}{Claim}[theorem]

\newtheorem{corollary}[theorem]{Corollary}
\newtheorem{lemma}[theorem]{Lemma}

\theoremstyle{definition}

\newtheorem{problem}[theorem]{Problem}
\newtheorem{conjecture}[theorem]{Conjecture}

\newtheorem{question}[theorem]{Question}

\DeclareMathOperator{\ind}{ind}

\begin{document}
\bibliographystyle{plain}

\title{Non-separating spanning trees and out-branchings in digraphs of independence number 2}
\author{J. Bang-Jensen\thanks{Department of Mathematics and Computer
    Science, University of Southern Denmark, Odense, Denmark (email:
    jbj@imada.sdu.dk). Part of this work was done while the author was visiting LIRMM, Universit\'e de
    Montpellier as well as  INRIA Sophia Antipolis. Hospitality and financial support by both is gratefully acknowledged. Ce travail a b\'en\'efici\'e d'une aide du gouvernement français, 
g\'er\'ee par l'Agence Nationale de la Recherche au titre du projet Investissements d'Avenir UCAJEDI portant la r\'ef\'erence no ANR-15-IDEX-01.}\and S. Bessy\thanks{LIRMM, CNRS, Universit\'e de
    Montpellier, Montpellier, France (email:stephane.bessy@lirmm.fr), financial suports: PICS CNRS DISCO and ANR DIGRAPHS n.194718.}\and  A. Yeo\thanks{Department of
    Mathematics and Computer Science, University of Southern Denmark,
    Odense, Denmark and  Department of Mathematics, University of Johannesburg, Auckland Park, 2006 South Africa (email: yeo@imada.sdu.dk).}}

\maketitle

\begin{abstract}
  A subgraph $H=(V,E')$ of a graph $G=(V,E)$ is {\bf non-separating} if $G\setminus{}E'$, that is, the graph obtained from $G$ by deleting the edges in $E'$,  is connected. 
  Analogously we say that a subdigraph $X=(V,A')$ of a digraph $D=(V,A)$ is non-separating if $D\setminus A'$ is strongly connected.
  We study non-separating spanning trees and out-branchings in digraphs of independence number 2. 
  Our main results are  that every 2-arc-strong digraph $D$ of independence number $\alpha{}(D)=2$ and minimum \JBJ{in-degree} at least \AY{5} 
  and every 2-arc-strong oriented  graph with $\alpha{}(D)=2$ and minimum \JBJ{in-degree} at least \AY{3 has a non-separating out-branching
  and minimum \JBJ{in-degree} 2 is not enough.}
  We also prove a number of other results, including that every \JBJ{2-arc-strong digraph $D$ with $\alpha{}(D)\leq 2$ and at least 14 vertices has a non-separating spanning tree and that every} graph $G$ with $\delta{}(G)\geq 4$ and $\alpha(G)=2$ has a non-separating hamiltonian path.\\
  \noindent{}{\bf Keywords:} non-separating branching; spanning trees; digraphs of independence number 2; strongly connected; hamiltonian path.
  
\end{abstract}

\section{Introduction}
\JBJ{An {\bf out-tree} in a digraph $D=(V,A)$ is a connected subdigraph $T^+_s$ of $D$ in which every vertex of $V(T^+_s)$, except one vertex $s$ (called the {\bf root}) has exactly one arc entering. This is equivalent to saying that $s$ can reach every other vertex of $V(T^+_s)$ by a directed path using only arcs of $T^+_s$.
An {\bf out-branching} in a digraph $D=(V,A)$ is a spanning  out-tree, that is, every vertex of $V$ is in the tree. We use the notation  
$B^+_s$ for an out-branching rooted at the vertex $s$.  An {\bf in-branching}, $B^-_t$, rooted at the vertex $t$ is defined analogously.}
The following classical result due to Edmonds and the algorithmic proof due to Lov\'asz \cite{lovaszJCT21} implies that one can check the existence of $k$ arc-disjoint out-branchings in polynomial time. 
\begin{theorem}[Edmonds]\cite{edmonds1973}
  \label{thm:Edbrthm}
  Let $D=(V,A)$ be a digraph and let $s\in V$. Then $D$ contains  $k$ arc-disjoint out-branchings, all rooted at $s$, if and only if there are $k$ arc-disjoint $(s,v)$-paths in $D$ for every $v\in V$.
\end{theorem}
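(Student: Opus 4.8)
The plan is to prove the two directions separately, with essentially all of the work in the sufficiency direction. Necessity is immediate: if $B_1,\dots,B_k$ are arc-disjoint out-branchings rooted at $s$, then each $B_i$ contains a unique $(s,v)$-path for every $v\in V$, and since the $B_i$ are arc-disjoint these $k$ paths are arc-disjoint as well.

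For sufficiency I would first translate the path condition into a cut condition using Menger's theorem: there are $k$ arc-disjoint $(s,v)$-paths for every $v$ if and only if $d^-_D(U)\ge k$ for every nonempty $U\subseteq V\setminus\{s\}$, where $d^-_D(U)$ denotes the number of arcs of $D$ entering $U$ (a minimum $(s,v)$-cut is the out-degree of a set containing $s$ but not $v$, equivalently the in-degree of its complement, which ranges over all nonempty $U\subseteq V\setminus\{s\}$ as $v$ and the cut vary). So it suffices to show that this in-degree condition forces $k$ arc-disjoint out-branchings rooted at $s$, and I would prove this by induction on $k$. The case $k=1$ is just the statement that $s$ can reach every vertex: if some vertex were unreachable, the set $U$ of unreachable vertices would be a nonempty subset of $V\setminus\{s\}$ with $d^-_D(U)=0$, contradicting the hypothesis; so any BFS/DFS out-tree from $s$ is the desired single out-branching. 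For the inductive step the heart of the matter is the following \emph{peeling lemma}: if $d^-_D(U)\ge k\ge 1$ for all nonempty $U\subseteq V\setminus\{s\}$, then there is a single out-branching $B$ rooted at $s$ with $d^-_{D-A(B)}(U)\ge k-1$ for all such $U$. Granting the lemma, $D-A(B)$ satisfies the hypothesis for $k-1$, so by induction it contains $k-1$ arc-disjoint out-branchings, and adding $B$ yields the required $k$.

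The crux, and the step I expect to be the main obstacle, is proving the peeling lemma. Call a nonempty $U\subseteq V\setminus\{s\}$ \emph{tight} if $d^-_D(U)=k$; these are exactly the sets whose in-degree would fall below $k-1$ if $B$ entered them more than once. Since $s\notin U$, any out-branching enters each such $U$ at least once, so the goal is to build an out-branching $B$ that enters every tight set \emph{exactly} once. The key structural tool is the submodularity of the in-degree function, $d^-_D(X)+d^-_D(Y)\ge d^-_D(X\cap Y)+d^-_D(X\cup Y)$, which shows that the family of tight sets is closed under union and under intersection (whenever the intersection is nonempty). I would then construct $B$ greedily: maintain a partial out-tree $T$ rooted at $s$ spanning a set $W$ with $s\in W\subsetneq V$, subject to the invariant that no tight set is entered by two arcs of $T$; at each step use the uncrossing of tight sets to choose an arc from $W$ into $V\setminus W$ that extends $T$ without violating this invariant.

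The remaining work is to show that the greedy step never gets stuck before $W=V$. Suppose for contradiction that, for the current tree $T$ spanning $W$, every arc from $W$ to $V\setminus W$ would enter a tight set that $T$ already enters. I would look at the inclusion-wise maximal tight sets currently entered by $T$ and uncross them using submodularity, together with the fact that $T$ enters each of them exactly once, to exhibit a nonempty $U\subseteq V\setminus\{s\}$ across which all of $D$'s entering arcs are already committed; a careful count of entering arcs then gives $d^-_D(U)<k$, contradicting the hypothesis. Hence the greedy process continues until $W=V$, at which point $T$ is the desired out-branching $B$, establishing the peeling lemma and completing the induction.
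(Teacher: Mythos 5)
Your overall skeleton is the classical one (the paper itself gives no proof of this theorem; it cites Edmonds, together with Lov\'asz's algorithmic proof, which is exactly the route you chose): necessity is immediate, Menger's theorem converts the path hypothesis into the cut condition $d^-_D(U)\ge k$ for every nonempty $U\subseteq V\setminus\{s\}$, and the induction on $k$ via the peeling lemma is precisely Lov\'asz's argument. All of that is sound, and the peeling lemma as you state it is the correct key statement.

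The gap is in your proof of the peeling lemma: you define tightness statically, as $d^-_D(U)=k$ in the original digraph $D$, and your invariant only forbids $T$ from entering such a set twice. That invariant is too weak, because a set $U$ with $d^-_D(U)=k+j$, $j\ge 1$, may be entered by an out-branching up to $|U|$ times; if it is entered more than $j+1$ times then $d^-_{D\setminus A(B)}(U)<k-1$, and no statically tight set need ever have been entered twice along the way. Concretely, take $k=2$ and $V=\{s,x,y,z,a,b,c\}$ with arcs $sx,sy,sz$, all arcs in both directions inside $\{x,y,z\}$ and inside $\{a,b,c\}$, and the arcs $xa,yb,zc$. Every nonempty $U\subseteq V\setminus\{s\}$ has $d^-_D(U)\ge 3$, so for $k=2$ there are no tight sets at all and your invariant is vacuous; yet your greedy procedure may legitimately build $B=\{sx,sy,sz,xa,yb,zc\}$, which uses all three arcs entering $\{a,b,c\}$, so $d^-_{D\setminus A(B)}(\{a,b,c\})=0<k-1$ and $D\setminus A(B)$ has no out-branching rooted at $s$ --- even though $D$ does have two arc-disjoint ones, for instance $\{sx,xy,xz,xa,ab,ac\}$ and $\{sy,sz,yx,yb,zc,ca\}$. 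So the flaw is not merely in your ``stuck'' analysis: even a completed run of your greedy need not satisfy the peeling conclusion. The repair, which is what Lov\'asz's proof actually does, is to make the danger condition dynamic: maintain the invariant $d^-_{D\setminus A(T)}(U)\ge k-1$ for every nonempty $U\subseteq V\setminus\{s\}$, call $U$ critical when equality holds for the \emph{current} tree $T$, and extend $T$ only by arcs from $W$ to $V\setminus W$ entering no critical set; submodularity and uncrossing must then be applied to the in-degree function of $D\setminus A(T)$ (not of $D$) to show that such an arc always exists. With that change your outline becomes the standard proof.
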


Deciding the existence of arc-disjoint in-and out-branchings is considerably more difficult as shown by the following result due to Thomassen (the theorem and its proof can be found in \cite{bangJCT51}).

\begin{theorem}
  \label{thm:inoutNPC}
  It is NP-complete to decide whether a digraph contains arc-disjoint branchings $B^+_s,B^-_t$ for given vertices $s,t$.
\end{theorem}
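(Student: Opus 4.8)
The plan is to establish membership in NP and then give a polynomial reduction from a known NP-complete problem. Membership is immediate: a pair of arc-disjoint subdigraphs can be checked in linear time to be an out-branching rooted at $s$ and an in-branching rooted at $t$ (verify that each underlies a spanning tree with the correct arc orientations, and that the two arc sets are disjoint). For hardness I would reduce from the directed arc-disjoint paths problem with two terminal pairs: given a digraph $H$ and vertices $x_1,y_1,x_2,y_2$, decide whether $H$ contains arc-disjoint directed paths $P_1$ from $x_1$ to $y_1$ and $P_2$ from $x_2$ to $y_2$. This $2$-linkage problem is NP-complete (Fortune, Hopcroft and Wyllie), and its solution structure matches what a pair $(B^+_s,B^-_t)$ must encode.

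The key idea is to force the out-branching to contain an $x_1$--$y_1$ path and the in-branching to contain an $x_2$--$y_2$ path, while making all remaining (spanning) arcs ``private'' to one branching. Concretely, I would take $H$ and attach a source gadget at $s$ whose only escape route leads into $x_1$, together with an out-hub $a$ reachable only through $y_1$; since $B^+_s$ must reach the hub, and the hub then broadcasts to every vertex off the first path, any out-branching is forced to route $s\to x_1\rightsquigarrow y_1\to a$, i.e.\ to contain a directed $x_1$--$y_1$ path inside $H$. Symmetrically, I would attach to $t$ a sink gadget whose only entry is through $y_2$, together with an in-hub $b$ that collects every vertex off the second path and feeds $x_2$; then $B^-_t$ is forced to route everything through $b\to x_2\rightsquigarrow y_2\to t$, hence to contain an $x_2$--$y_2$ path in $H$. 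The broadcast and collection arcs incident to $a$ and $b$, and the source/sink arcs at $s,t$, are supplied so that each is used by only one of the two branchings: whenever a single infrastructure vertex would be traversed by both, I duplicate that part so that the out-branching and the in-branching each receive their own copy.

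With this construction the only arcs of $H$ that either branching can use are those of $P_1$ (inside $B^+_s$) and those of $P_2$ (inside $B^-_t$): every vertex off the respective path is served by its hub, so no extra $H$-arc is consumed. Hence, once the infrastructure arcs are genuinely private, $D$ has arc-disjoint $B^+_s$ and $B^-_t$ if and only if $H$ has arc-disjoint $P_1$ and $P_2$. For the forward direction one extracts $P_1,P_2$ from the forced subpaths and observes that they are arc-disjoint because the branchings are; for the backward direction one assembles the two branchings from $P_1,P_2$ plus the hub stars and checks spanning and disjointness directly. Since the reduction clearly runs in polynomial time, NP-completeness follows.

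The main obstacle is exactly the disjointness bookkeeping around the new vertices: a hub vertex that distributes in $B^+_s$ must still be able to reach $t$ in $B^-_t$, and dually, so a naive single-arc gadget creates a forced arc wanted by both branchings. Getting the gadgets right---so that every forced infrastructure arc belongs to a unique branching while the genuine combinatorial choice is confined to the $H$-arcs of $P_1$ and $P_2$---is where the care lies; once that is arranged, correctness is routine. If a cleaner encoding is preferred, the same forcing idea can instead be driven by a reduction from $3$-SAT, with variable gadgets offering each branching a binary routing choice and clause gadgets that are spanned only under a satisfying assignment.
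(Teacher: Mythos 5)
The paper does not actually prove this statement: Theorem~\ref{thm:inoutNPC} is quoted as a result of Thomassen, with the proof found in the cited reference \cite{bangJCT51}. So the relevant comparison is with that known proof, and your proposal does identify its strategy correctly: NP-membership is easy, and hardness comes from a polynomial reduction from the directed two-pair arc-disjoint paths problem (NP-complete following Fortune--Hopcroft--Wyllie; strictly speaking they treat the vertex-disjoint version, and one passes to the arc-disjoint version by the standard vertex-splitting transformation), where the out-branching is forced to encode an $(x_1,y_1)$-path and the in-branching an $(x_2,y_2)$-path.

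However, what you have written is a plan rather than a proof, and the step you defer---``getting the gadgets right''---is precisely where the entire content of the theorem lies. Two concrete failure points. First, the disjointness conflicts you acknowledge are not resolved by ``duplicate that part.'' In an in-branching $B^-_t$ every vertex other than $t$ needs an out-arc, so the root $s$ itself must have an out-arc in $B^-_t$; but your source gadget gives $s$ a single escape route towards $x_1$, which $B^+_s$ is forced to use, so the two branchings collide at $s$ before the reduction even starts. The same tension appears at the hubs: $a$ must have an out-arc in $B^-_t$ and $b$ an in-arc in $B^+_s$, and every candidate arc is one your forcing argument wants to reserve for the other branching. Duplicating vertices does not obviously help, since the copies must themselves be spanned by both branchings, regenerating the conflict. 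Second, your extraction of $P_1$ assumes the forced $s$-to-$a$ path stays inside $H$ between $x_1$ and $y_1$, but nothing in the sketch prevents that path from shortcutting through the other gadget's infrastructure (for instance via some arc into $b$ followed by $b\to x_2$), in which case the subpath between $x_1$ and $y_1$ is not a path of $H$ and the forward direction of the equivalence collapses. These are not routine bookkeeping details; they are exactly why the construction in \cite{bangJCT51} is delicate, and a proof that leaves them open has a genuine gap.
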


It was shown in \cite{bangDAM161a} that the problem remains NP-complete even for 2-arc-strong 2-regular digraphs.

Thomassen conjectured that sufficiently high arc-connectivity will guarantee the existence of arc-disjoint in- and out-branchings with prescribed roots. \JBJ{As defined in Section \ref{sec:term}, $\lambda{}(D)$ denotes is the arc-connectivity of the digraph $D$.}

\begin{conjecture}\cite{thomassen1989}
  \label{conj:higharcBr}
  There exists a natural number $K$ such that every digraph $D$ with $\lambda{}(D)\geq K$
  contains arc-disjoint branchings $B^+_s,B^-_t$ for every choice of $s,t\in V$.
\end{conjecture}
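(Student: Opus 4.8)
The plan is to exploit the abundance of branchings that high arc-connectivity forces and then to \emph{couple} one out-branching with one in-branching so that the two are arc-disjoint. First I would invoke Theorem~\ref{thm:Edbrthm}: if $\lambda(D)\geq K$ then for every $v$ there are $K$ arc-disjoint $(s,v)$-paths, so $D$ contains $K$ arc-disjoint out-branchings rooted at $s$; applying the same theorem to the reverse digraph yields $K$ arc-disjoint in-branchings rooted at $t$. The difficulty is that these two families are produced independently and will in general share arcs, whereas the conjecture only asks for a \emph{single} $B^+_s$ and a \emph{single} $B^-_t$ that avoid each other.

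Second, I would recast the problem as a reservation question: look for an out-branching $B^+_s$ such that the residual digraph $D\setminus A(B^+_s)$ still admits an in-branching to $t$. For the latter it suffices, again by Theorem~\ref{thm:Edbrthm} applied to the reverse digraph, that every vertex can still reach $t$ after the arcs of $B^+_s$ are deleted. So the real target becomes the selection of a \emph{non-separating} out-branching, i.e.\ one whose removal preserves all-to-$t$ reachability, after which $B^-_t$ can be built inside the residual digraph. This is exactly the flavour of non-separating branching studied in the present paper, which suggests attacking the conjecture through the existence of out-branchings whose removal keeps the digraph strongly connected.

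Third, to actually produce such a non-separating out-branching I would bring in connectivity-preserving tools: Mader/Frank splitting-off to reduce $D$ to a more structured digraph in which a direct construction is feasible, and then lift the solution back without losing the arc-connectivity guarantees. An alternative route is to phrase the joint existence of $(B^+_s,B^-_t)$ as finding a common independent set of two matroids of prescribed rank in $A$, and to look for a min--max condition that is forced by a large value of $\lambda(D)$.

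The hard part will be the coupling step, and here I must be candid: this is Thomassen's long-standing conjecture, and it remains open. Theorem~\ref{thm:inoutNPC} shows that for general fixed connectivity the very decision problem is NP-complete, so no purely local or greedy certificate can succeed and the entire force of the hypothesis $\lambda(D)\geq K$ must be used globally to defeat that hardness; in particular, even the correct value of $K$ (or whether a constant suffices) is unknown, and no reduction of the global coupling to a tractable submodular or matroidal condition is currently available. A realistic intermediate goal, and the one the present paper pursues, is to establish the coupling in the restricted regime $\alpha(D)\leq 2$, where the near-completeness of the underlying graph leaves enough room to carry out the construction explicitly.
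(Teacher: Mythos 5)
You were asked to prove a statement that is not a theorem of the paper: it is Thomassen's conjecture (Conjecture~\ref{conj:higharcBr}, cited from \cite{thomassen1989}), which the paper states as an open problem and never proves. So there is no proof in the paper to compare against, and your candid admission that the argument cannot be completed is the correct assessment. Your first two steps are sound as far as they go: Theorem~\ref{thm:Edbrthm} together with Menger's theorem does give $K$ arc-disjoint out-branchings rooted at $s$, and, applied to the reverse digraph, $K$ arc-disjoint in-branchings rooted at $t$; and your reformulation --- find an out-branching $B^+_s$ whose deletion preserves reachability of $t$ from every vertex, then build $B^-_t$ in the residual digraph by Theorem~\ref{thm:Edbrthm} --- is legitimate, and in fact slightly weaker than demanding a non-separating out-branching, since only all-to-$t$ reachability (not full strong connectivity) is needed after deletion. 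The genuine gap is exactly where you place it: no splitting-off, matroid, or min--max argument is currently known that forces a \emph{single} out-branching and a \emph{single} in-branching of opposite orientation to avoid each other under any constant bound on $\lambda(D)$, and the NP-completeness results (Theorem~\ref{thm:inoutNPC}, strengthened in \cite{bangDAM161a} to 2-arc-strong 2-regular digraphs) rule out any argument that does not exploit the connectivity hypothesis globally. Your third step is therefore a list of candidate tools, not a proof sketch; none of them is known to close the coupling.

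It is worth recording how your plan relates to what the paper actually establishes. Your ``realistic intermediate goal'' is precisely the paper's programme: for $\alpha(D)\leq 2$, Theorem~\ref{conj:inoutbralpha2} (from \cite{bangInOutB}) settles Conjectures~\ref{conj:higharcBr} and~\ref{conj:somepairofB} in the weakened form where $s,t$ may be chosen (with $K=2$), and the paper's main result, Theorem~\ref{thm:main}, produces under additional minimum in-degree hypotheses a non-separating out-branching, which immediately yields arc-disjoint branchings $B^+_s,B^-_s$ rooted at the same vertex --- exactly the object your second step asks for, but only in the restricted regime $\alpha(D)\leq 2$. For general digraphs the conjecture remains open, so your proposal should be read as a research plan; as a proof of the stated conjecture it has an unfillable gap, and you correctly said so.
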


It was pointed out in \cite{bangInOutB} that Conjecture \ref{conj:higharcBr} is equivalent to the following (the same value of $K$ would work for both conjectures).

\begin{conjecture}
  \label{conj:somepairofB}
  There exists a natural number $K$ such that every digraph $D$ with $\lambda{}(D)\geq K$
  contains an out-branching which is arc-disjoint from some in-branching.
\end{conjecture}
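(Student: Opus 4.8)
The plan is to reduce the statement to the existence of a \emph{non-separating out-branching} and then to force such a branching once $\lambda(D)$ is large. Recall that a non-separating out-branching is an out-branching $B^+_s$ for which $D\setminus A(B^+_s)$ is still strongly connected. Suppose we have shown that there is a constant $K$ such that every digraph $D$ with $\lambda(D)\geq K$ admits such a $B^+_s$. Then $D\setminus A(B^+_s)$ is strongly connected, so for an arbitrarily chosen vertex $t$ every vertex can still reach $t$ inside $D\setminus A(B^+_s)$; consequently $D\setminus A(B^+_s)$ contains an in-branching $B^-_t$, and this $B^-_t$ is arc-disjoint from $B^+_s$ by construction. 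Since in Conjecture~\ref{conj:somepairofB} the roots are not prescribed, producing one such pair $(B^+_s,B^-_t)$ already suffices. Hence it is enough to prove: there is a natural number $K$ with the property that $\lambda(D)\geq K$ guarantees a non-separating out-branching.

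To construct a non-separating out-branching I would fix any root $s$ and work with the set $\delta^+(X)$ of arcs leaving a set $X$. The decisive reformulation is that $D\setminus A(B^+_s)$ is strongly connected exactly when $B^+_s$ saturates no directed cut, i.e. $|A(B^+_s)\cap\delta^+(X)|<|\delta^+_D(X)|$ for every proper nonempty $X\subseteq V$. Since $\lambda(D)\geq K$ forces $|\delta^+_D(X)|\geq K$ for all such $X$, it would be enough to select an out-branching that crosses every directed cut in the forward direction at most $K-1$ times. High arc-connectivity also supplies the raw material for this selection: by Theorem~\ref{thm:Edbrthm}, $s$ reaches every vertex by $K$ arc-disjoint paths, so $D$ has $K$ arc-disjoint out-branchings rooted at $s$, giving, for each vertex, many interchangeable choices of entering arc.

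The hard part is exactly this last step, because an arbitrary out-branching can cross a single cut $\delta^+(X)$ as many as $|V\setminus X|$ times, so one must produce a \emph{low-congestion} branching whose intersection with every directed cut is bounded by a function of $K$ rather than of $n$. I would attack this either by an augmenting/potential argument, repeatedly rerouting the entering arc of a vertex that oversaturates some tight cut to an alternative entering arc provided by the $K$ arc-disjoint branchings while maintaining the out-branching structure, or by a probabilistic argument that samples an out-branching so that each directed cut is crossed only $O(1)$ times in expectation and then controls the exponentially many cuts through their submodular (laminar) structure. Showing that such a balanced out-branching must exist once $K$ is large enough is the genuine obstacle; it is the reason the conjecture is currently accessible only under extra structural hypotheses, such as $\alpha(D)\leq 2$, where the scarcity of independent vertices sharply limits the family of directed cuts that can be tight and hence makes the congestion controllable.
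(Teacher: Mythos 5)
You were asked to prove a statement that is, in the paper, an open conjecture: the paper gives no proof of Conjecture~\ref{conj:somepairofB}, recording only that it is equivalent to Thomassen's Conjecture~\ref{conj:higharcBr} (as shown in \cite{bangInOutB}) and establishing special cases — Theorem~\ref{conj:inoutbralpha2} for $\alpha(D)=2$, $\lambda(D)\geq 2$, and the stronger non-separating out-branching results of Theorem~\ref{thm:main} under additional in-degree hypotheses. Your proposal is likewise not a proof, and to your credit you say so. The opening reduction is correct and matches the paper's own remark in the introduction: if $B^+_s$ is non-separating then $D\setminus A(B^+_s)$ is strong, hence contains an in-branching $B^-_t$ for any $t$, arc-disjoint from $B^+_s$; and the cut reformulation, that $D\setminus A(B^+_s)$ is strong if and only if $|A(B^+_s)\cap\delta^+(X)|<|\delta^+_D(X)|$ for every proper nonempty $X$, is also correct. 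But the entire content of the conjecture is then concentrated in the step you explicitly leave open, and neither the rerouting sketch nor the probabilistic sketch is carried out, so there is a genuine gap — indeed the gap is the conjecture itself.

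Beyond incompleteness, the specific target you set is wrong: an out-branching that crosses \emph{every} directed cut at most $K-1$ times need not exist, no matter how large $\lambda(D)$ is. Take $X$ such that $V\setminus X$ contains $m$ vertices all of whose in-neighbours lie in $X$ (for instance, attach to a highly arc-connected digraph on $X$ an independent set of $m$ vertices, each with $K$ in-arcs from $X$ and $K$ out-arcs to $X$; then $\lambda(D)\geq K$ can be arranged). Every out-branching must assign each of these $m$ vertices a parent in $X$, so it crosses $\delta^+(X)$ at least $m$ times, and $m$ may be arbitrarily large compared with $K$. The saving grace is that in such a configuration $|\delta^+_D(X)|\geq Km$, so the genuinely needed cut-relative condition $|A(B^+)\cap\delta^+(X)|<|\delta^+_D(X)|$ is not violated — which shows your uniform low-congestion condition is strictly stronger than necessary and false in general. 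Any successful argument must therefore compare the number of branching arcs in a cut with the \emph{size} of that cut rather than with the constant $K$; controlling this over all cuts simultaneously is precisely the open difficulty, and it is why the paper only resolves the problem under $\alpha(D)\leq 2$, where Corollary~\ref{cor:smallstrong} and the semicomplete machinery of Theorem~\ref{thm:SDnonsepB} pin down which cuts can be tight. The honest conclusions available to you were either the conditional reduction you proved (non-separating out-branchings under high $\lambda$ imply the conjecture) or the paper's special case; a proof of the full statement is not currently known.
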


In this paper we study digraphs of independence number 2. The structure of digraphs with independence number 2 is  not  well understood and there are numerous interesting open problems. For instance it is an open problem whether the existence of vertex disjoint paths $P_1,P_2$ such that $P_i$ is an $(s_i,t_i)$-path for $i=1,2$ can be checked in polynomial time (for a partial result see \cite{chudnovskyJCT135}). In the case where we want arc-disjoint paths, a polynomial algorithm was given in \cite{fradkinJCT110}.

  Very recently the following result which settles Conjectures \ref{conj:higharcBr} and \ref{conj:somepairofB} for digraphs of independence
number 2 was obtained. The result is best possible in
terms of the arc-connectivity as there are infinitely many strong digraphs
with independence number 2 and arbitrarily high minimum in-and
out-degrees that have  no out-branching which is arc-disjoint from some in-branching \cite{bangInOutB}. 
  \begin{theorem}\cite{bangInOutB}
    \label{conj:inoutbralpha2}
  Every digraph $D=(V,A)$ with $\alpha{}(D)=2$ and $\lambda{}(D)\geq 2$ contains arc-disjoint branchings $B^+_s,B^-_t$ for some choice of $s,t\in V$.
\end{theorem}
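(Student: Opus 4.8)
The plan is to build both branchings around a single spanning backbone. Since $\lambda(D)\ge 2$ the digraph $D$ is strongly connected, so I would first invoke the classical fact that a strong digraph of independence number $2$ has a Hamiltonian directed path (this follows from the Gallai--Milgram partition into at most $\alpha(D)$ paths, the two resulting paths being merged into one using strong connectivity). Let $P=v_1v_2\cdots v_n$ be such a path. Then $P$ is simultaneously an out-branching $B^+_{v_1}$ and an in-branching $B^-_{v_n}$, so the natural attempt is to keep $P$ as the out-branching and to look for an arc-disjoint in-branching $B^-_t$ inside the remaining digraph $D'=D\setminus A(P)$, choosing the root $t$ freely (which is exactly the freedom the statement grants).

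Applying Theorem~\ref{thm:Edbrthm} with $k=1$ to the reverse digraph $(D')^{\mathrm{rev}}$, the digraph $D'$ contains an in-branching rooted at $t$ if and only if every vertex of $D'$ can reach $t$. Hence it suffices to show that the condensation of $D'$ has a unique terminal (sink) strong component $S$, for then any $t\in S$ gives the desired $B^-_t$. In this way the entire problem reduces to a single reachability statement about $D\setminus A(P)$.

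The key structural observation I would establish is that if $S$ is a terminal strong component of $D'$, then every $v\in S$ satisfies $N^+_D(v)\subseteq S\cup\{v^+\}$, where $v^+$ is the successor of $v$ on $P$: any out-arc of $v$ leaving $S$ is absent from $D'$, hence lies on $P$, and $v$ has only one out-arc on $P$. Consequently the only arcs of $D$ leaving $S$ are path arcs, so the out-cut of $S$ in $D$ equals the number of maximal sub-intervals of $S$ along $P$ whose $P$-successor lies outside $S$. Since $\lambda(D)\ge 2$ forces this out-cut to be at least $2$ for every proper $S$ (and a suffix-interval would give out-cut $0$, contradicting strong connectivity), a proper terminal component must split into at least two intervals of $P$. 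I would then use $\alpha(D)=2$: between two distinct terminal components the only arcs are consecutive path arcs, so if there were three terminal components, picking one vertex from each that is pairwise non-consecutive on $P$ (possible unless the components are very small, which is checked directly) would yield three pairwise non-adjacent vertices, contradicting $\alpha(D)=2$.

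The main obstacle is the remaining case of exactly two terminal strong components, where the three-vertex argument no longer applies. Here I would exploit the freedom still available: re-choose the root pair, reverse the roles of $P$ (using it as the in-branching and seeking an out-branching in $D\setminus A(P)$, which is legitimate since $\alpha(D^{\mathrm{rev}})=\alpha(D)$ and $\lambda(D^{\mathrm{rev}})=\lambda(D)$), or locally reroute $P$ across a chord guaranteed by $\alpha(D)=2$ so as to merge the two terminal components or create a non-path exit arc. Showing that $2$-arc-strong connectivity together with $\alpha(D)=2$ always rules out two such ``blocking'' terminal components, and disposing of the finitely many small configurations, is the technical heart of the argument; by contrast the reduction to reachability and the out-cut bookkeeping are routine, and the tightness of the hypothesis $\lambda(D)\ge 2$ is exactly what the $1$-arc-strong examples of \cite{bangInOutB} witness.
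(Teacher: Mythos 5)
The paper does not actually prove Theorem~\ref{conj:inoutbralpha2}; it is imported wholesale from \cite{bangInOutB}, so there is no in-paper proof to compare against and your proposal must stand on its own. Its scaffolding is sound: a strong digraph with $\alpha\leq 2$ is indeed traceable (though you should attribute this to Chen--Manalastas, i.e.\ the path form of Theorem~\ref{thm:CMthm}; it does \emph{not} follow routinely from Gallai--Milgram ``plus strong connectivity'' -- merging the two paths of the partition is precisely the nontrivial content of that theorem). The reduction via Theorem~\ref{thm:Edbrthm} applied to the converse digraph is correct, as is the key observation that every arc of $D$ leaving a terminal strong component $S$ of $D'=D\setminus A(P)$ must be an arc of $P$, so that $\lambda(D)\geq 2$ forces a proper terminal component to meet $P$ in at least two intervals; and the exclusion of three or more terminal components by extracting an independent triple of pairwise non-consecutive vertices is plausible, modulo the small-configuration checking you only gesture at.

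The genuine gap is the case of exactly two terminal components, which you yourself flag as ``the technical heart of the argument'' and then do not prove. What you offer there is a list of possible moves -- exchange the roles of in- and out-branching, re-root, reroute $P$ across a chord -- with no argument that any of them succeeds, no invariant that improves under rerouting, and no termination measure. Note also that it is not a priori clear that \emph{any} Hamiltonian path $P$ with a unique terminal component of $D\setminus A(P)$ exists in every such digraph: the theorem only asserts some arc-disjoint pair $B^+_s,B^-_t$, and the out-branching in a valid pair need not be a Hamiltonian path, so your approach could conceivably fail outright on some digraphs even though the theorem holds for them. The paper's own Section~\ref{sec:removeHP} shows how delicate Hamiltonian-path removal is under exactly these hypotheses: by Theorem~\ref{thm:noHPok} there are infinitely many 2-arc-strong digraphs with $\alpha=2$ in which deleting the arcs of \emph{any} Hamiltonian path leaves a non-strong digraph, and in the tournaments $T_r$ of Lemma~\ref{lem:TsepHP} every Hamiltonian path starting at $v_0$ destroys all $(v_0,v_r)$-paths. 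These examples do not refute your weaker unique-terminal-component claim, but they show that the configurations you defer are exactly where the difficulty of the theorem lives; the published proof in \cite{bangInOutB} is a long case analysis devoted to precisely such situations. As it stands, your proposal is a correct reduction plus the easy half of a dichotomy, with the hard half missing.
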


\begin{conjecture}\label{conj:2asalpha2}\cite{bangInOutB}
  Every 2-arc-strong digraph $D=(V,A)$ with $\alpha{}(D)=2$ has a pair of arc-disjoint branchings
  $B_s^+,B_s^-$ for every choice of $s\in V$.
\end{conjecture}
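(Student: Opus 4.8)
Fix the root $s$. The plan is to use the following equivalent form of the statement. A pair of arc-disjoint branchings $B_s^+,B_s^-$ exists if and only if there is an out-branching $B_s^+$ rooted at $s$ such that in $D'=D\setminus A(B_s^+)$ every vertex still has a directed path to $s$: reachability of $s$ from all vertices in $D'$ is exactly the condition for an in-branching $B_s^-$ rooted at $s$ to exist inside $D'$, and conversely such a $B_s^-$ certifies the reachability. Thus the whole problem reduces to choosing a \emph{single} out-branching whose removal does not destroy the ability of every vertex to reach $s$. It is worth stressing that this is strictly weaker than requiring $B_s^+$ to be non-separating (which would force $D'$ to be strong), and that is precisely why one may hope the conjecture holds in the entire range $\lambda(D)\ge 2$, $\alpha(D)=2$, even though, as the paper shows, non-separating out-branchings need not exist when the minimum in-degree is only $2$.

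\textbf{Tight cuts.} Writing $d^+(S)$ for the number of arcs leaving a set $S$, every vertex of $D'$ reaches $s$ unless some set $S$ with $\emptyset\neq S\subseteq V\setminus\{s\}$ has $d^+_{D'}(S)=0$. Since $D$ is $2$-arc-strong, $d^+_D(S)\ge 2$ for every such $S$, so the only obstructions come from the \emph{dangerous} sets, those with $d^+_D(S)=2$: for each of these $B_s^+$ must use at most one of the two arcs leaving $S$. I would first analyse the family $\mathcal{F}$ of dangerous sets. Because $d^+$ is submodular, the minimal dangerous sets (and, more generally, the tight sets avoiding $s$) form a well-behaved crossing family, and using $\alpha(D)=2$ I expect to show that dangerous sets are severely constrained; for instance, a large dangerous $S$ should force $\overline{S}$ to be almost a tournament sending nearly all of its arcs into $S$, which in turn limits how two dangerous sets can cross.

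\textbf{Construction and induction.} I would then build $B_s^+$ by induction on $|V|$. The natural reduction is to take a minimal dangerous set $S$, contract it to a single vertex $s_S$, check that the contraction is again $2$-arc-strong with independence number at most $2$, solve the smaller instance, and then expand $s_S$: route the out-branching inside $S$ from its unique entering arc while keeping one of the two arcs leaving $S$ unused, so that the lifted pair remains valid. When no nontrivial dangerous set exists, all out-cuts are ``loose'', and an out-branching obtained directly from Theorem \ref{thm:Edbrthm} (take two arc-disjoint out-branchings from $s$ and discard one) automatically spares a free out-arc at every cut, yielding $B_s^-$ for free. A clean alternative worth pursuing is to split $D$ into two arc-disjoint strongly connected spanning subdigraphs $H_1,H_2$: then $B_s^+\subseteq H_1$ and $B_s^-\subseteq H_2$ are automatically arc-disjoint, so it would suffice to establish such a $2$-decomposition for $2$-arc-strong digraphs with $\alpha=2$.

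\textbf{Main obstacle.} The crux, and the reason the statement is only conjectured, is the simultaneous control of \emph{crossing} dangerous cuts: a branching that spares an out-arc of one tight cut may be forced to saturate a crossing one, and the contraction step must be shown to preserve both $2$-arc-strong connectivity and $\alpha\le 2$ (contraction can create new short cuts or new non-adjacent triples). Equally delicate is uniformity over the root: Theorem \ref{conj:inoutbralpha2} yields only \emph{some} pair of roots $s,t$, and I see no direct rerooting argument, so the induction itself must deliver the pair for every prescribed $s$. I expect the hardest case to be a digraph with many in-degree-$2$ vertices whose in-cuts cross, which is exactly the regime in which the companion non-separating results fail.
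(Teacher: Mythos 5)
The statement you are attempting is Conjecture \ref{conj:2asalpha2}: it is an open conjecture quoted from \cite{bangInOutB}, and the paper contains no proof of it --- only partial support. (Theorem \ref{thm:main} produces a non-separating out-branching under the stronger hypotheses $\delta^-(D)\geq 5$, or $\delta^-(D)\geq 3$ for oriented graphs, which yields arc-disjoint $B^+_s,B^-_s$ only for the particular vertex $s$ that roots that branching; Theorem \ref{conj:inoutbralpha2} yields arc-disjoint $B^+_s,B^-_t$ only for \emph{some} pair $s,t$.) Your text is likewise not a proof. The reformulation in terms of reachability of $s$ in $D\setminus A(B^+_s)$ is correct, as is the observation that the obstructions are exactly the sets $S\subseteq V\setminus\{s\}$ with $d^+_D(S)=2$, from which $B^+_s$ may use at most one leaving arc; but everything after that is conditional (``I expect to show'', ``I would then build''), and you yourself concede that the control of crossing tight cuts, the lifting step after contraction, and the uniformity over the prescribed root are unresolved. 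A research plan whose central steps are acknowledged as open is a genuine gap, not a proof.

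One concrete error is worth flagging: your ``clean alternative'' --- decomposing $D$ into two arc-disjoint strong spanning subdigraphs $H_1,H_2$ and taking $B^+_s\subseteq H_1$, $B^-_s\subseteq H_2$ --- cannot work, because Proposition \ref{prop:no2colstrong} of this very paper exhibits, for every $K$, infinitely many $2$-arc-strong digraphs with $\alpha=2$ and minimum semi-degree at least $K$ that admit no such decomposition; this is precisely why the paper works with a single out-branching rather than a pair of spanning strong subdigraphs. By contrast, your contraction route is not obviously dead: contracting a tight set $S$ (keeping parallel arcs) does preserve $2$-arc-strongness, and it preserves $\alpha\leq 2$ as well, since an independent triple $\{s_S,u,v\}$ in the contraction would give an independent triple $\{w,u,v\}$ in $D$ for any $w\in S$. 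The genuine difficulty is the expansion step you gesture at: after solving the contracted instance one must route an out-tree inside $S$ from the head of the entering arc while simultaneously reserving, inside $S$, an in-tree towards the tail of the spared leaving arc, and arc-disjointness of these two structures inside $S$ is exactly the original problem again, with no decrease in difficulty guaranteed. Making that step work would settle the conjecture, which remains open.
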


\begin{conjecture}\cite{bangInOutB}
  Every 3-arc-strong digraph $D=(V,A)$ with $\alpha{}(D)=2$ has a pair of arc-disjoint branchings
  $B_s^+,B_t^-$ for every choice of $s,t\in V$.
\end{conjecture}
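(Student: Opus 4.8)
The plan is to reduce the conjecture to the existence of \emph{non-separating out-branchings}, which is exactly the object studied in this paper. The crucial observation is: if $B_s^+$ is an out-branching of $D$ rooted at $s$ that is non-separating, meaning $D'=D\setminus A(B_s^+)$ is strongly connected, then for every vertex $t$ the strong digraph $D'$ contains an in-branching $B_t^-$ rooted at $t$ (in a strong digraph every vertex can reach $t$, so the reverse reachability from $t$ yields such a branching). Since $B_t^-$ uses only arcs of $D'$, it is automatically arc-disjoint from $B_s^+$. Hence it suffices to prove the following strengthening: every $3$-arc-strong digraph $D$ with $\alpha(D)=2$ has, for every choice of root $s$, a non-separating out-branching rooted at $s$. (By reversing all arcs, which preserves both $\lambda$ and $\alpha$, one may instead produce non-separating in-branchings; either family, as the free root ranges over $V$, covers all prescribed pairs $(s,t)$.)

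To build a non-separating out-branching rooted at a fixed $s$, I would first apply Theorem~\ref{thm:Edbrthm}: since $\lambda(D)\ge 3$ there are $3$ arc-disjoint $(s,v)$-paths for every $v$, hence $3$ arc-disjoint out-branchings rooted at $s$, which already supplies a large reservoir of spare arcs. The task is to choose one out-branching $B_s^+$ whose deletion leaves a strong digraph, i.e.\ to avoid creating a directed cut. For a proper nonempty $S\subseteq V$, the branching uses exactly one in-arc at each non-root vertex, so the number of its arcs crossing $S\to\overline{S}$ equals the number of entry points of $\overline{S}$ in the branching, and similarly for $\overline{S}\to S$. Strong connectivity of $D\setminus A(B_s^+)$ fails at $S$ only if the branching consumes every forward (or every backward) $D$-arc of the cut. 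Since $\lambda(D)\ge 3$ forces $d_D^+(S),d_D^-(S)\ge 3$, this can happen only at cuts that are simultaneously tight and entered many times, and the goal is to route $B_s^+$ so as to leave at least one arc of $D$ free across every cut in each direction.

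The structural input from $\alpha(D)=2$ is the main lever for controlling these cuts: if two vertices are non-adjacent then every other vertex is adjacent to at least one of them, and the non-neighbours of any vertex form a clique in the underlying graph. This density drastically limits which cuts $(S,\overline{S})$ can carry only three forward arcs, and it should let one bound the number of such dangerous cuts and make in-arc choices at the low-degree vertices that keep a reserve arc across each of them. Concretely I would grow $B_s^+$ by a search from $s$ that, at each vertex of small in-degree (in a $3$-arc-strong digraph the binding case is in-degree exactly $3$), selects the entering arc using the clique structure of non-neighbourhoods so as not to saturate any tight cut through that vertex.

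The hard part will be to control all tight cuts simultaneously: locally safe choices at individual vertices may conflict globally, and vertices of in-degree exactly $3$ leave essentially no slack, so a single unfortunate in-arc can disconnect the complement. This is precisely the gap between $3$-arc-strongness, which only guarantees minimum in-degree $3$, and the hypothesis of minimum in-degree $\ge 5$ under which the analogous non-separating result for $2$-arc-strong digraphs is obtained in this paper; moreover the reduction above needs the branching for every root, not merely some root. Bridging this will require exploiting the genuinely global redundancy of $3$-arc-strong connectivity rather than the degree bound alone, and I expect this to be the crux—and the reason the statement remains a conjecture.
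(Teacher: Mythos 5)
This statement is one of the paper's open conjectures (quoted from an earlier paper); the paper contains no proof of it, and your proposal is not one either --- as you yourself concede in your final paragraph. What you do establish correctly is the reduction: if $B_s^+$ is an out-branching rooted at $s$ with $D\setminus A(B_s^+)$ strong, then that strong digraph contains an in-branching $B_t^-$ rooted at any prescribed $t$, automatically arc-disjoint from $B_s^+$; this mirrors the remark the paper itself makes in the introduction, and since the conjecture quantifies over all pairs $(s,t)$, you correctly identify that you would need a non-separating out-branching rooted at \emph{every} vertex $s$. Everything after that, however, is a plan rather than an argument: Theorem~\ref{thm:Edbrthm} gives three arc-disjoint out-branchings rooted at $s$, but nothing guarantees that any of them (or any out-branching at all) is non-separating, and the proposed routing of $B_s^+$ so as to leave a spare arc across every tight cut in both directions is never carried out. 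Making locally safe in-arc choices globally consistent across all tight cuts is exactly the open difficulty, not a technical detail to be filled in.

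There is also a structural risk in the reduction itself: the statement you reduce to is strictly stronger than the conjecture, and stronger than anything proved in the paper. Theorem~\ref{thm:main} needs $\delta^-(D)\geq 5$ (or orientedness plus $\delta^-\geq 3$) on top of $2$-arc-strongness and produces a non-separating out-branching with \emph{some} root, not a prescribed one; the root-prescribed versions remain conjectural even when many vertices are assumed (Conjectures~\ref{conj:nonsepnoroot} and~\ref{conj:nonseproot}). Worse, at the $\lambda=2$ level the non-separating route provably loses generality: the digraph $\tilde{D}$ of Proposition~\ref{prop:nonsepT} has $\alpha(\tilde{D})=2$ and $\lambda(\tilde{D})=2$ but admits no non-separating spanning tree, hence no non-separating out-branching from any root, and yet Theorem~\ref{conj:inoutbralpha2} still guarantees arc-disjoint branchings $B_s^+,B_t^-$ for some $s,t$ in such digraphs. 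So arc-disjoint in- and out-branchings can exist where non-separating out-branchings do not. A proof of the conjecture along your lines would therefore have to show that $3$-arc-strongness rules out all analogues of $\tilde{D}$ --- and that is precisely the unproven crux you defer, which is why the statement remains a conjecture.
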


In the present paper we are interested in the existence an out-branching $B^+$   in a strongly connected digraph $D$  of independence number 2 such that the digraph $D\setminus{}A(B^+)$ that we obtain by deleting all arcs of $B^+$ is still strongly connected. Clearly if $D$ has such an out-branching, then it also has arc-disjoint in- and out-branchings $B^+_s,B^-_s$ from some vertex $s$ (namely the root of $B^+$).
The main result of the paper is the following which, besides being of interest in connection with Conjecture \ref{conj:twostrong} below, also provides support for  Conjecture \ref{conj:2asalpha2}. 

\begin{theorem}\label{thm:main}
Let $D$ be a $2$-arc-strong digraph with $\alpha(D)\leq 2$.
If either of the following statements hold then there exists an out-branching, $B^+$, in $D$,
such that $D\setminus{}A(B^+)$ is strongly connected.

\begin{description}
\item[(i):]  $\delta^-(D) \geq 5$, or
\item[(ii):]  $\delta^-(D) \geq 3$ and $D$ is \JBJ{an oriented graph (no cycles of length 2).}
\end{description}
\end{theorem}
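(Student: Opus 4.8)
The plan is to turn the problem into a statement about cuts and then produce the branching by an extremal argument supported by local exchanges.

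\textbf{Reformulation.} A spanning subdigraph $H$ of $D$ is strongly connected exactly when $d^+_H(S)\ge 1$ for every proper nonempty $S\subsetneq V$. Applied to $H=D\setminus A(B^+)$, the conclusion of Theorem~\ref{thm:main} is equivalent to finding an out-branching $B^+$ that, for every proper nonempty $S$, omits at least one arc leaving $S$; in other words $B^+$ must never use all of the arcs of any out-cut. Since $D$ is $2$-arc-strong we have $d^+_D(S)\ge 2$ for all such $S$, so a cut can be destroyed only if $B^+$ uses at least two of its leaving arcs, and the only genuinely dangerous cuts are the \emph{tight} ones with $d^+_D(S)$ equal to (or barely above) $2$. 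The first task is to catalogue these, and this is where $\alpha(D)\le 2$ is decisive: if $(S,\bar S)$ is a $2$-arc out-cut with $|S|,|\bar S|\ge 2$, then for any pair in $S$ and any pair in $\bar S$ the adjacencies forced by $\alpha(D)\le 2$ must nearly all point into $S$, which rigidly constrains such cuts; the remaining dangerous sets are essentially the singletons $\{v\}$ with $d^+_D(v)=2$.

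\textbf{Seed and extremal choice.} Two tools feed the construction. First, since $D$ is $2$-arc-strong, Theorem~\ref{thm:Edbrthm} gives two arc-disjoint out-branchings from any chosen root $s$; using one of them already guarantees that $D\setminus A(B^+)$ is out-connected from $s$, so the whole difficulty is concentrated in the cuts $(S,\bar S)$ with $s\notin S$, i.e.\ in making the complement in-connected to $s$. Second, because $\alpha(D)\le 2$, the Gallai--Milgram theorem covers $V(D)$ by at most two vertex-disjoint directed paths, giving a very sparse backbone from which a branching with few crossing arcs at each cut can be grown. I would fix the root and then choose $B^+$ to minimise a suitable potential (for instance the number of saturated tight out-cuts, broken by secondary criteria), so that any remaining defect can be attacked by a single exchange.

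\textbf{Repair by exchange, and the two thresholds.} Suppose the chosen $B^+$ still destroys some out-cut $(S,\bar S)$. The repair is local: pick a vertex $w\in\bar S$ whose entering branching-arc comes from $S$ and reroute it to an in-neighbour of $w$ inside $\bar S$, which frees one leaving arc of $S$ while keeping the single-parent and acyclicity conditions of a branching. Such a reroute exists whenever $w$ has enough entering arcs from within $\bar S$, and this is precisely what the in-degree hypothesis buys: with $\delta^-(D)\ge 5$ every vertex retains sufficiently many entering arcs after the branching removes one, guaranteeing the exchange. In the oriented case the absence of digons sharpens the structure of the tight cuts (the in- and out-neighbourhoods are disjoint, so the counting in the catalogue above is tighter), and $\delta^-(D)\ge 3$ already suffices; the gap between $5$ and $3$ reflects that in a general digraph each digon consumes an in/out pair without helping connectivity and can create additional tight cuts.

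\textbf{Main obstacle and sharpness.} The hard part is that the exchanges must be carried out \emph{simultaneously} over all tight cuts without ever reintroducing a cycle or leaving a vertex of out-degree $0$ (the vertices of out-degree $2$, which may carry only one child, are the bottleneck). I therefore expect the core of the argument to be a careful case analysis organised around these out-degree-$2$ vertices and the rigid $2$-arc-cuts isolated in the first step, together with a global potential that the exchanges strictly decrease. Finally, the thresholds are best possible: the $2$-arc-strong digraphs of independence number $2$ constructed in \cite{bangInOutB} have no out-branching arc-disjoint from any in-branching, so $\delta^-(D)\ge 2$ cannot replace either hypothesis of Theorem~\ref{thm:main}.
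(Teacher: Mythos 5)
Your proposal is a programme rather than a proof: the steps that carry all the difficulty are announced but not carried out, and two of the stated claims are unjustified or false. First, the ``catalogue'' of dangerous cuts is never produced, and the assertion that only tight cuts matter is wrong: an out-branching $B^+$ destroys a cut $(S,\bar S)$ exactly when it contains \emph{every} arc from $S$ to $\bar S$, and since the only obstruction is that arcs of $B^+$ have pairwise distinct heads, a cut of any size $d^+_D(S)\le |\bar S|$ whose leaving arcs have distinct heads can be saturated; nothing confines the danger to cuts of size $2$ or ``barely above''. Second, the repair move is not shown to be legal. When $(S,\bar S)$ is saturated and $w\in\bar S$ has its parent in $S$, then indeed $w$ has exactly one in-neighbour in $S$ and hence, under $\delta^-(D)\ge 5$, at least four in-neighbours inside $\bar S$; but rerouting $w$'s parent arc to $u'\in\bar S$ keeps $B^+$ acyclic only if $u'$ is not a descendant of $w$, and nothing excludes that every in-neighbour of $w$ in $\bar S$ lies in $w$'s subtree. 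Third, the exchanges destroy the very property that made your Edmonds seed useful: after a reroute, $B^+$ need no longer be arc-disjoint from the second branching $B_2^+$, so cuts on the root's side can become saturated and the reduction ``only cuts avoiding the root matter'' does not persist. Finally, you explicitly defer the heart of the matter --- that some potential strictly decreases under the (interacting) repairs --- so the argument has no convergence mechanism; in particular the thresholds $5$ and $3$, and the orientedness hypothesis in case (ii), are never used in any step one could verify.

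Your sharpness remark is also incorrect: by Theorem \ref{conj:inoutbralpha2}, every digraph with $\alpha=2$ and $\lambda\ge 2$ \emph{does} contain arc-disjoint branchings $B_s^+,B_t^-$ for some $s,t$; the examples in \cite{bangInOutB} without such a pair are merely strong, not $2$-arc-strong. The correct witness that $\delta^-(D)\ge 2$ cannot replace the hypotheses is the digraph $\tilde D$ of Proposition \ref{prop:nonsepT}. For contrast, the paper's proof is structural rather than extremal: Corollary \ref{cor:smallstrong} (from Chen--Manalastras) gives a strong spanning subdigraph $H$ in which any two vertices have degree sum at most $6$; one shows that $D\setminus A(H)$ has exactly two initial strong components, that the in-degree hypotheses force each of them to have at least five vertices (in case (ii) via $\binom{n_i}{2}\ge 2n_i-1$) and to be semicomplete, and one concludes by applying Theorem \ref{thm:SDnonsepB} to these semicomplete pieces and stitching the resulting non-separating out-branchings together with a strong subdigraph $Q$. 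That is exactly where $\delta^-\ge 5$, respectively $\delta^-\ge 3$ plus orientedness, do their work --- a role your exchange heuristic never makes concrete.
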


Theorem~\ref{thm:main} (ii) is best possible in the sense that \JBJ{there exists a digraph $\tilde{D}$, with $\alpha{}(\tilde{D})=2$ and $\lambda{}(\tilde{D})\geq 2$ (and therefore
  $\delta^-{}(\tilde{D}) \geq 2$)} which does not contain a non-separating out-branching. See Figure~\ref{fig:nonsepT} and Proposition~\ref{prop:nonsepT}.\\

\JBJ{In Section \ref{sec:term} we provide some preliminary results. In particular, we show in} Proposition \ref{prop:no2colstrong} that there are infinitely many 2-arc-strong digraphs with independence number 2 and high in- and out-degrees that do not have an arc-partition into two spanning strong subdigraphs, implying that we cannot replace $B^+$ \JBJ{in Theorem \ref{thm:main}} by some spanning strong subdigraph. \JBJ{We also describe some structural results on semicomplete digraphs that will be used in later sections. Finally we prove a structural result on strong spanning subdigraphs with few arcs in digraphs with $\alpha{}(D)=2\leq\lambda{}(D)$.}\\

\JBJ{In Section \ref{sec:SDcase} we characterize semicomplete digraphs with non-separating out-branchings and prove a more general result which will be used in the proof of Theorem \ref{thm:main}.}\\

\JBJ{In Section \ref{sec:proof} we prove Theorem \ref{thm:main} and in Section \ref{sec:nonsepT} we study non-separating spanning trees in digraphs of independence number 2. The main result here is Theorem \ref{thm:nonsep14} which says that every 2-arc-strong digraph $D$ with $\alpha{}(D)=2$ and $n\geq 14$ vertices has a non-separating spanning tree. We conjecture that already $n\geq 9$ is enough, which would be best possible, and prove this in the case when $D$ has a hamiltonian cycle and no cycle of length 2. In Section \ref{sec:removeHP} we construct an infinite family of 2-arc-strong  digraphs with  $\alpha=2$ for which every hamiltonian path is separating and in Section \ref{sec:nonsepHPG} we show that for undirected graphs with independence number 2  a non-separating hamiltonian path always exists, provided the minimum degree is at least 4. Finally, in Section \ref{sec:remarks} we pose a number of open problems. }

  \section{Terminology and Preliminaries}\label{sec:term}
  Terminology not defined here or above is consistent with \cite{bang2009}.
  Let $D=(V,A)$ be a digraph. \JBJ{The {\bf underlying graph} of $D$ is the graph $UG(D)=(V,E)$ where $uv\in E$ if and only if there is at least one arc between $u$ and $v$ in $D$.}
  For a non-empty subset $X\subset V$ we denote by $d_D^+(X)$
(resp. $d_D^-(X)$) the number of arcs with tail (resp. head) in $X$ and head
(resp. tail) in $V\setminus{}X$. We call $d_D^+(X)$ (resp. $d_D^-(X)$) the {\bf out-degree}
(resp. {\bf in-degree}) of the set $X$. Note that $X$ may be just a
vertex. We will drop the subscript when the digraph is clear from the
context. We denote by $\delta^0(D)$ the minimum over all in- and
out-degrees of vertices of $D$. This is also called the minimum {\bf
  semi-degree} of a vertex in $D$.  The {\bf arc-connectivity} of $D$,
denoted by $\lambda{}(D)$, is the minimum out-degree of a proper subset
of $V$. A digraph is {\bf strongly connected} (or just {\bf strong}) if
$\lambda{}(D)\geq 1$. \JBJ{An arc $a$ of a strong digraph $D$ is a {\bf cut-arc} if $D\setminus\{a\}$ is not strong.}

\JBJ{When $X$ is a subset of the vertices of a digraph $D$, we denote by $D[X]$ the subdigraph {\bf induced} by $X$, that is, the vertex set of $D[X]$ is $X$ and the arc set consists of those arcs of $D$ which have both end vertices in $X$.}

The {\bf independence number}, denoted $\alpha{}(D)$, of a digraph $D=(V,A)$ is the size of a largest subset  $X\subseteq V$ such that the subdigraph of $D$ induced by $X$ has no arcs. 

  A {\bf strong component} of a digraph $D$ is a maximal (with
  respect to inclusion) subdigraph $D'$ which is strongly
  connected. The strong components of $D$ are vertex disjoint and
  their vertex sets form a partition of $V(D)$. If $D$ has more than
  one strong component, then we can order these as $D_1,\ldots{},D_k$
  such that there is no arc from a vertex in $V(D_j)$ to a vertex in
  $V(D_i)$ when $j>i$. A strong component $D_i$ is {\bf initial} ({\bf
    terminal}) if there is no arc of $D$ which enters (leaves)
  $V(D_i)$.

  The following result is well-known and easy to show.
  
\begin{proposition}
  \label{prop:outbranching}
  A digraph $D$ has an out-branching if and only it it has precisely one initial strong component. In that case every vertex of the initial strong component can be the root of an out-branching in $D$.
\end{proposition}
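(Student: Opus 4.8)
The plan is to work with the \emph{condensation} of $D$: the acyclic digraph $C$ whose vertices are the strong components $D_1,\dots,D_k$ of $D$, with an arc $D_i\to D_j$ precisely when $D$ has at least one arc from $V(D_i)$ to $V(D_j)$. Under this correspondence a strong component is initial exactly when the corresponding vertex of $C$ has in-degree $0$, i.e.\ is a \emph{source}. Since the strong components partition $V(D)$ and every arc of $D$ either lies inside a component or projects to an arc of $C$, reachability in $D$ reduces to reachability in $C$ combined with the fact that each $D_i$ is strong, so its vertices are mutually reachable.

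First I would isolate the auxiliary claim that a vertex $s$ is the root of some out-branching in $D$ if and only if $s$ can reach every vertex of $D$ by a directed path. The forward implication is immediate from the definition of an out-tree. For the converse, assuming $s$ reaches all of $V$, I would build a spanning out-tree greedily (equivalently, take a breadth-first search tree from $s$): start with $T=\{s\}$ and, as long as $T$ is not spanning, use reachability to locate an arc leaving $V(T)$ and entering a not-yet-reached vertex, then add that arc and vertex to $T$. This preserves the property that $T$ is an out-tree rooted at $s$ and terminates with an out-branching.

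Next I would translate the initial-component condition into the language of $C$. If $D$ has a unique initial strong component $D_1$, then $C$ has a unique source. In the finite acyclic digraph $C$, following in-arcs backward from any vertex terminates at a source; with only one source present, every vertex of $C$ is therefore reachable from that source $D_1$. Lifting back to $D$ and using that $D_1$ is strong, every vertex of $D$ is reachable from every vertex of $V(D_1)$, so by the auxiliary claim each such vertex is the root of an out-branching. Conversely, suppose $D$ has an out-branching rooted at $s$ and let $S$ be the strong component containing $s$. Then $S$ reaches every vertex of $C$; if $S$ had an in-arc from some $S'$, then $S\to\cdots\to S'\to S$ would be a cycle in $C$, contradicting acyclicity, so $S$ is a source, and any other source would have to be reachable from $S$ yet have in-degree $0$, which is impossible. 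Hence $S$ is the unique initial strong component, which finishes both directions together with the final assertion about roots.

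The only genuinely non-routine step is the converse of the auxiliary claim, namely turning pure reachability into an actual spanning out-tree; this is handled cleanly by the greedy/BFS construction above, and everything else follows directly from the acyclicity of the condensation.
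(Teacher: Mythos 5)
Your proof is correct; note that the paper itself offers no proof of this proposition, stating it only as ``well-known and easy to show,'' so there is nothing to diverge from. Your argument via the condensation (unique initial component $\Leftrightarrow$ unique source, backward walks in a finite acyclic digraph terminating at a source) together with the BFS/greedy construction turning reachability from $s$ into an out-branching rooted at $s$ is exactly the standard argument the authors are implicitly invoking, and all steps, including the one non-routine step you flag, are sound.
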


\JBJ{A digraph is {\bf semicomplete} if it has no pair of nonadjacent vertices. A {\bf tournament} is a semicomplete digraph with no directed cycle of length 2.
  A digraph $D=(V,A)$ is {\bf co-bipartite} is it has a vertex-partition $V_1,V_2$ such that $D[V_i]$ is semicomplete for $i\in [2]$.\\
  
  We shall make use of the following classical result due Camion. He formulated it only for tournaments but it is easy to see that it holds for semicomplete digraphs also. 
\begin{theorem}\cite{camionCRASP249}
    \label{thm:camion}
    Every strongly connected semicomplete digraph is hamiltonian.
    \end{theorem}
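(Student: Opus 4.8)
The plan is to argue by an extremal, longest-cycle argument that treats semicomplete digraphs and genuine tournaments uniformly and never uses the absence of $2$-cycles as a hypothesis. Let $D$ be strong and semicomplete; I may assume $D$ has at least two vertices. Choose a directed cycle $C=v_1v_2\cdots v_kv_1$ of maximum length (one exists because a strong digraph on at least two vertices contains a cycle). The goal is to show $k=|V(D)|$, so suppose for contradiction that some vertex lies off $C$.

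First I would establish a domination dichotomy for the vertices outside $C$. Fix $u\notin V(C)$. Since $D$ is semicomplete, between $u$ and each $v_i$ there is at least one arc. If for some $i$ we had both $v_i\to u$ and $u\to v_{i+1}$ (indices taken mod $k$), then replacing the arc $v_iv_{i+1}$ of $C$ by the path $v_i\to u\to v_{i+1}$ would produce a cycle longer than $C$, contradicting maximality. Hence, for every $i$, $v_i\to u$ forces $v_{i+1}\to u$, because $u\to v_{i+1}$ is excluded and semicompleteness then leaves only $v_{i+1}\to u$. Propagating cyclically, either every vertex of $C$ dominates $u$, or none does; in the latter case semicompleteness gives $u\to v_i$ for all $i$. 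Thus each $u\notin V(C)$ either is dominated by all of $C$ or dominates all of $C$, and maximality makes these mutually exclusive, so no off-cycle vertex forms a $2$-cycle with a vertex of $C$. Writing $B$ for the off-cycle vertices dominated by all of $C$ and $A$ for those dominating all of $C$, I obtain a disjoint union $V(D)\setminus V(C)=A\cup B$.

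Next I would exploit strong connectivity to build a longer cycle. Since $V(C)\neq V(D)$ and $D$ is strong, $d^+(V(C))\geq 1$ and $d^-(V(C))\geq 1$; an arc leaving $V(C)$ has its head in $B$ and an arc entering $V(C)$ has its tail in $A$, so both $A$ and $B$ are nonempty. Because all arcs between $C$ and $A$ point from $A$ into $C$, the in-neighbours of $A$ lie entirely in $B$; as $A\neq V(D)$ and $D$ is strong, $d^-(A)\geq 1$, so some arc runs from a vertex $b\in B$ to a vertex $a\in A$. Now $v_k\to b$ (since $b\in B$), $b\to a$, and $a\to v_1$ (since $a\in A$), whence $v_1v_2\cdots v_k\,b\,a\,v_1$ is a cycle of length $k+2$, contradicting the maximality of $C$. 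Therefore no vertex lies off $C$, and $C$ is Hamiltonian.

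The only genuinely delicate point is the dichotomy step: I must ensure it is the maximality of $C$, and not merely the semicomplete structure, that rules out $2$-cycles between $C$ and its complement and forces each off-cycle vertex into exactly one of $A$ and $B$. Once this clean bipartition is secured, the strong-connectivity counting and the final two-vertex cycle extension are routine. This is also exactly the place where the semicomplete case departs from Camion's original tournament formulation: in a tournament the dichotomy is immediate from having precisely one arc per pair, whereas here the extremal choice of $C$ is what supplies it.
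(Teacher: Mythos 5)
Your proof is correct. Note, however, that the paper gives no proof at all for this statement: it cites Camion's original result for tournaments and merely remarks that ``it is easy to see that it holds for semicomplete digraphs also.'' So your argument is not competing with a written proof but supplying one, and it does so well. Each step checks out: the maximality of the longest cycle $C$ forbids the configuration $v_i\to u$, $u\to v_{i+1}$, which both forces the propagation $v_i\to u \Rightarrow v_{i+1}\to u$ and rules out $2$-cycles between $C$ and off-cycle vertices, giving the clean partition into $A$ (dominating $C$) and $B$ (dominated by $C$); strong connectivity then yields $A,B\neq\emptyset$ and an arc from $B$ to $A$, producing the length-$(k+2)$ cycle $v_1\cdots v_k\,b\,a\,v_1$. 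Your closing remark identifies exactly why the semicomplete case is not a free corollary of the tournament case: one cannot simply pass to a spanning tournament, since a spanning tournament of a strong semicomplete digraph need not be strong (already the $2$-cycle on two vertices shows this). Your uniform extremal argument is precisely the kind of reasoning the paper's ``easy to see'' waves at, and making it explicit is a genuine contribution relative to what the paper contains.
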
}

  \subsection{Non-separating strong spanning subdigraphs}

  The following conjecture, which would clearly imply Conjecture \ref{conj:higharcBr}, has been verified for semicomplete digraphs (see Theorem \ref{thm:decomp2asSD} below).

\begin{conjecture}\cite{bangC24}
  \label{conj:twostrong}
  There exists a natural number $K$ such that every digraph $D$ with $\lambda{}(D)\geq K$
  contains arc-disjoint spanning strong subdigraphs $D_1,D_2$.
  \end{conjecture}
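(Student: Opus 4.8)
The plan is to reduce the existence of two arc-disjoint strong spanning subdigraphs to a packing problem for branchings and then to locate the real difficulty. Observe first that if, for some vertex $s$, the digraph $D$ contains four pairwise arc-disjoint branchings --- two out-branchings $B^{+}_1,B^{+}_2$ and two in-branchings $B^{-}_1,B^{-}_2$, all rooted at $s$ --- then setting $D_i := B^{+}_i \cup B^{-}_i$ yields two arc-disjoint subdigraphs, each of which is spanning and strong (every vertex reaches $s$ through $B^{-}_i$ and is reached from $s$ through $B^{+}_i$). So it suffices to find a root $s$ admitting such a quadruple, and the whole problem becomes a question of packing in- and out-branchings that avoid one another.

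Next I isolate which half is routine. By Edmonds' theorem (Theorem~\ref{thm:Edbrthm}), as soon as $\lambda(D) \geq 2$ every vertex $s$ is the root of two arc-disjoint out-branchings, and dually the root of two arc-disjoint in-branchings. Thus getting the out-pair and the in-pair separately costs nothing; the entire content of the statement lies in forcing the out-collection and the in-collection to be mutually arc-disjoint. This is exactly the mixed in/out regime which, for prescribed roots, is NP-complete by Theorem~\ref{thm:inoutNPC}. Consequently no clean min-max theorem can be expected, and any proof must buy arc-disjointness entirely from the slack of a large connectivity $K$ together with the freedom to choose $s$.

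The concrete attack I would pursue is an augmenting/rerouting scheme. Start from two arc-disjoint out-branchings and two arc-disjoint in-branchings at a common root $s$, produced by Edmonds' theorem, and regard their shared arcs as defects to be removed. For an arc $a=uv$ used by both an in-branching and an out-branching, I would reroute one of the two branchings around $a$: replacing $a$ in, say, the out-branching means re-attaching the subtree below $v$ through an alternative short path into $v$, which should exist because $\lambda(D) \geq K$ keeps every head richly reachable even after the currently used arcs are forbidden. A potential function counting total overlap, shown to strictly decrease at each successful reroute, would drive the process to a defect-free configuration, the freedom of $K$ guaranteeing at each step a fresh augmenting path that neither reintroduces an old overlap nor destroys the spanning/strong reachability of the branchings being preserved.

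I expect the termination and feasibility of this rerouting to be the main obstacle, and it is precisely the point at which the claim is open. The trouble is that deleting even the two in-branchings can lower the out-degree $d^{+}(X)$ of a set $X$ with $s \notin X$ by far more than a constant --- a single in-branching may use many arcs leaving $X$ --- so a naive ``reserve a few units of connectivity'' argument fails and one cannot cheaply certify that the residual digraph still admits the needed out-branchings. A successful proof would have to control these cuts globally, most plausibly through a directed splitting-off theorem in the spirit of Mader and Frank that preserves the relevant $(s,v)$- and $(v,s)$-connectivities simultaneously, reducing $D$ to a small gadget on which the quadruple is built by induction on $K$. The resolution of the semicomplete case (Theorem~\ref{thm:decomp2asSD}, with $K=2$) supports the belief that a finite threshold exists and offers a template --- exploit Hamiltonicity of strong semicomplete digraphs (Theorem~\ref{thm:camion}) to build the branchings explicitly --- but carrying any such construction into the general, non-semicomplete setting, and thereby pinning down an admissible value of $K$, is the crux that remains to be overcome.
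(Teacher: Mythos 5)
The statement you were asked about is a \emph{conjecture} (Conjecture~\ref{conj:twostrong}, attributed to \cite{bangC24}): the paper contains no proof of it, and it is open to this day --- the paper only records that it holds for semicomplete digraphs (Theorem~\ref{thm:decomp2asSD}) and, via Proposition~\ref{prop:no2colstrong}, that no minimum semi-degree condition can substitute for large arc-connectivity. Your write-up, to its credit, does not actually claim a proof either: you explicitly concede that the termination and feasibility of your rerouting scheme ``is precisely the point at which the claim is open.'' So the verdict is that there is a genuine gap, and the gap is the entire theorem: everything before the admitted obstruction is either routine or strictly harder than needed, and nothing after it is argued.

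Concretely: your reduction to four pairwise arc-disjoint branchings $B^+_1,B^+_2,B^-_1,B^-_2$ at a common root is valid but asks for more than the conjecture does (a strong spanning subdigraph need not decompose as $B^+_i\cup B^-_i$, and insisting on this form may make the problem harder, not easier). Edmonds' theorem (Theorem~\ref{thm:Edbrthm}) does give the out-pair and in-pair separately when $\lambda(D)\geq 2$, as you say. But the rerouting step carries no argument at all: you never exhibit the ``alternative short path,'' never define the potential function precisely, and never show a reroute exists that strictly decreases overlap without creating new overlap --- and your own correct observation that deleting a single in-branching can destroy $d^+(X)$ for a set $X$ by an amount growing with $|X|$ (rather than a constant) is exactly why no ``reserve a few units of $\lambda$'' or local augmentation argument is known to close this. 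Whether directed splitting-off can be made to preserve the needed connectivities simultaneously is itself unresolved; invoking it ``in the spirit of Mader and Frank'' names a hope, not a lemma. In short, your analysis of where the difficulty lies is accurate and consistent with the paper's framing, but as a proof the proposal establishes nothing beyond what Edmonds' theorem already gives.
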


The \JBJ{infinite family of digraphs described below} shows that no condition on semi-degree is enough
to imply the conclusion of Conjecture~\ref{conj:twostrong} for
2-arc-strong digraphs, even for digraphs with independence number 2.
  
\begin{proposition}
  \label{prop:no2colstrong}
  For every natural number $K$ there are infinitely many 2-arc-strong digraphs $D$ with $\alpha{}(D)=2$ and  $\delta^0(D)\geq K$  that have no pair of arc-disjoint spanning strong subdigraphs.
\end{proposition}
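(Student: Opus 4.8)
The plan is to exhibit, for each $K$, a co-bipartite digraph $D$ (so that $\alpha(D)\le 2$ automatically) built from two large semicomplete ``blobs'' $V_1,V_2$, joined and internally decorated so that $D$ is $2$-arc-strong, $\delta^0(D)\ge K$, yet the few minimum arc-cuts of $D$ are arranged to make the required arc-partition impossible. Taking the blobs arbitrarily large (their sizes growing with $K$) gives both $\delta^0(D)\ge K$ and infinitely many examples, and a single designated non-adjacent pair makes $\alpha(D)=2$ rather than $1$.

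The obstruction I would exploit is the following colouring reformulation. If $D_1,D_2$ are arc-disjoint spanning strong subdigraphs, then for \emph{every} non-empty proper $S\subset V$ each $D_i$ must contain at least one arc of the cut leaving $S$; hence whenever $d^+(S)=2$ the two arcs leaving $S$ must go one to $D_1$ and one to $D_2$ (likewise when $d^-(S)=2$). Form the \emph{conflict graph} $H$ whose vertices are the arcs lying in some cut of value exactly $2$ and whose edges join the two arcs of each such cut: a decomposition then yields a proper $2$-colouring of $H$. The plan is therefore to engineer $D$ so that $\lambda(D)=2$ (so these value-$2$ cuts are genuine minimum cuts) while $H$ contains an \emph{odd cycle}; then no proper $2$-colouring exists and $D$ has no pair of arc-disjoint spanning strong subdigraphs. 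Note a blob cannot itself be $2$-arc-strong and semicomplete, since by Theorem~\ref{thm:decomp2asSD} such a digraph \emph{does} decompose; so the tight cuts must be created by sparse linking arcs rather than inside a single $2$-arc-strong blob.

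Concretely I would first fix a short cyclic list of sets $S_1,\dots,S_{2k+1}$ with $d^{\pm}(S_i)=2$ such that consecutive cuts share exactly one arc, so that the shared arcs form an odd cycle in $H$; then realise the $S_i$ as dense subsets of the two blobs, drawing all but two of each set's boundary arcs inside the blobs (keeping every in- and out-degree at least $K$) and orienting the $2k+1$ special arcs to produce exactly the intended value-$2$ cuts. The verification then splits into: (i) $\alpha(D)=2$, immediate from the co-bipartite structure together with the fixed non-edge; (ii) $\delta^0(D)\ge K$, from the blob sizes; (iii) $\lambda(D)=2$ with the \emph{only} value-$2$ cuts being the intended ones (so that the conflict graph is exactly $H$ and the differ-constraints stay tight); and (iv) that $H$ is an odd cycle.

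The hard part will be step (iii) together with keeping $\alpha(D)=2$. The naive realisation---three pairwise-overlapping cuts of the \emph{same} orientation, forming a triangle in $H$---forces three mutually non-adjacent vertices and hence $\alpha\ge 3$; so I expect to need a genuinely mixed family of in- and out-cuts (or a longer odd cycle) embedded in the blobs, chosen so that no unintended non-adjacency appears and, crucially, so that no unintended cut of value $2$ is created that would add a chord to $H$ and restore $2$-colourability. Proving that the minimum cuts are \emph{exactly} the designed ones---an exhaustive cut analysis of the dense host---is where the real work lies.
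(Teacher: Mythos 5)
Your colouring reformulation is sound: in any pair of arc-disjoint spanning strong subdigraphs, every cut of value $2$ must contribute exactly one arc to each subdigraph, so the arcs lying in tight cuts inherit a proper $2$-colouring of the conflict graph $H$, and an odd cycle in $H$ is a genuine obstruction. Indeed, this is precisely the mechanism behind the paper's own proof. But your proposal stops exactly where the proposition begins: no digraph is ever constructed. The entire content of the statement is the existence, for every $K$, of infinitely many $2$-arc-strong digraphs with $\alpha=2$ and $\delta^0\geq K$ realising such an obstruction, and your text explicitly defers the realisation (``realise the $S_i$ as dense subsets of the two blobs\dots'') and the verification of your steps (iii) and (iv) to future work. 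As it stands this is a strategy for a proof, not a proof; the missing construction is not routine bookkeeping but the one piece of mathematics the proposition asks for.

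The gap is, however, smaller than you fear, and your difficulty analysis misjudges it in two places. First, your step (iii) -- that the designed tight cuts be the \emph{only} cuts of value $2$ -- is unnecessary: unintended tight cuts only add edges to $H$, and a supergraph of a graph containing an odd cycle still contains that odd cycle, so extra constraints can never ``restore $2$-colourability''. All you need is $\lambda(D)\geq 2$ together with the designed cuts having value exactly $2$. Second, in a co-bipartite host $\alpha\leq 2$ is automatic, so no unintended non-adjacency can arise and the worry about same-orientation cut triangles forcing $\alpha\geq 3$ is moot. For comparison, the paper completes your plan with a compact explicit gadget: take four disjoint copies $T_1,\dots,T_4$ of a $2$-arc-strong tournament $T$ with $\delta^0(T)\geq K$, let $x_i$ be the copy in $T_i$ of a fixed vertex $x$, and add the $4$-cycle $x_1x_3x_2x_4x_1$, the arcs $x_1x_2$ and $x_3x_4$, all arcs from $V(T_2)$ to $V(T_1)$, and all arcs from $V(T_4)$ to $V(T_3)$ (Figure~\ref{fig:no2strong}). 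This digraph is co-bipartite with parts $V(T_1)\cup V(T_2)$ and $V(T_3)\cup V(T_4)$, is $2$-arc-strong, and its tight cuts (entering $V(T_2)$, leaving $V(T_1)$, leaving $V(T_3)$, entering $V(T_4)$, and the two cuts across the bipartition) produce in $H$ exactly the odd cycle your framework calls for: $x_3x_2,\,x_1x_2,\,x_1x_3,\,x_2x_4,\,x_3x_4$ is a $5$-cycle, built -- as you predicted would be necessary -- from a mixed family of in- and out-cuts rather than a same-orientation triangle; the paper's case analysis (start with $x_3x_2\in A_2$, $x_4x_1\in A_1$ and chase the forced alternation around to a contradiction) is just the parity argument along this cycle. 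Letting $T$ grow gives the infinitely many examples for each $K$, and your observation via Theorem~\ref{thm:decomp2asSD} that the tight cuts must slice into the blobs (so that neither co-bipartite part is itself $2$-arc-strong) is confirmed there: the cut-arcs $x_1x_2$ and $x_3x_4$ sit inside the two parts.
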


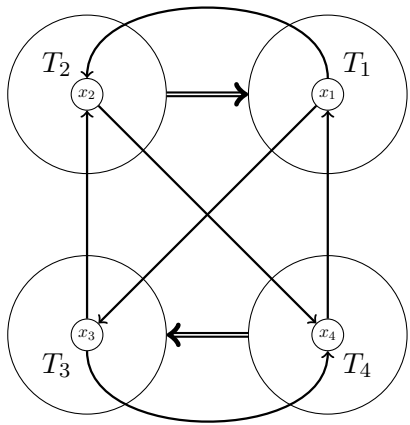
\begin{figure}[H] 
  \begin{center}
\tikzstyle{vertexL}=[circle,draw, minimum size=100pt, scale=0.6, inner sep=0.5pt]    
\tikzstyle{vertexB}=[circle,draw, minimum size=20pt, scale=0.6, inner sep=0.5pt]
\tikzstyle{vertexR}=[circle,draw, color=red!100, minimum size=14pt, scale=0.6, inner sep=0.5pt]

\begin{tikzpicture}[scale=0.8]
  \node (x1) at (5,5) [vertexB] {$x_1$};
  \node (t1) at (5,5)  [vertexL]{};
  \node (x2) at (1,5) [vertexB] {$x_2$};
  \node (t2) at (1,5)  [vertexL]{};
  \node (x3) at (1,1) [vertexB] {$x_3$};
  \node (t4) at (5,1)  [vertexL]{};
  \node (x4) at (5,1) [vertexB] {$x_4$};
  \node (t3) at (1,1)  [vertexL]{};
  \draw[->, line width=0.03cm] (x1) to (x3);
  \draw[->, line width=0.03cm] (x3) to (x2);
  \draw[->, line width=0.03cm] (x2) to (x4);
  \draw[->, line width=0.03cm] (x4) to (x1);
  \draw [->, thick, double] (t2) to (t1);
  \draw [->, thick, double] (t4) to (t3);
  \draw [->, line width=0.03cm] (x1) to  [in=90, out=90] (x2);
  \draw [->, line width=0.03cm] (x3) to  [in=-90, out=-90] (x4);
  \node  at (0.5,0.5) {$T_3$};
  \node at (5.5,0.5) {$T_4$};
  \node at (0.5,5.5) {$T_2$};
  \node at (5.5,5.5) {$T_1$};
   \end{tikzpicture}
   \caption{\JBJ{A 2-arc-strong digraph $D$ with $\alpha(D)=2$ and no decomposition into 2 arc-disjoint spanning subdigraphs}.}\label{fig:no2strong}
  
\end{center}
\end{figure}
\begin{proof}
Let $T$ be a 2-arc-strong tournament with $\delta^0(T)\geq K$ and let
$x$ be a vertex of $T$. Let $D=(V,A)$ be the digraph that we obtain
from 4 disjoint copies $T_i$, $i\in [4]$, of $T$ by adding the arcs of
the 4-cycle $x_1x_3x_2x_4x_1$, the arcs $x_1x_2,x_3x_4$, all possible
arcs from $V(T_2)$ to $V(T_1)$ and from \AY{$V(T_4)$ to $V(T_3)$.}
Here $x_i$ is the copy of $x$ in $T_i$. See Figure
\ref{fig:no2strong}.  Then $D$ is co-bipartite and 2-arc-strong and we
claim that $D'$ does not contain a pair of arc-disjoint spanning
strong subdigraphs.

Indeed, suppose there is a partition $A=A_1\cup A_2$ such that
$D_i=(V,A_i)$ is strong for $i=1,2$. There are exactly two arcs in
$D$ in both directions between $V(T_1)\cup V(T_2)$ and $V(T_3)\cup
V(T_4)$.  Without loss of generality we have $x_4x_1\in A_1$ and
$x_3x_2\in A_2$.  As there are only two arcs \AY{entering $V(T_2)$},
this implies that the arc $x_1x_2$ must be in $A_1$ (in order to reach
the vertices in $V(T_2)$) and \AY{as there are only two arcs leaving
  $V(T_1)$} we have $x_1x_3\in A_2$.  We must also have $x_3x_4\in
A_1$, since the only other arc leaving $V(T_3)$ is in $A_2$.  This
implies that the arc $x_2x_4$ must be in $A_2$ now we see that there
is no path from $V(T_1)\cup V(T_2)$ to $V(T_3)\cup V(T_4)$ in $D_1$,
contradiction.
\end{proof}

\subsection{Structure of semicomplete digraphs}\label{subsec:structure}

Let $D$ be a digraph.  A {\bf decomposition} of $D$ is a partition
$(S_1, \dots , S_p)$, $p\geq 1$, of its vertex set.  The {\bf index}
of vertex $v$ in the decomposition, denoted by $\ind(v)$, is the
integer $i$ such that $v\in S_i$.  An arc $uv$ is {\bf forward} if
$\ind(u) < \ind(v)$, {\bf backward} if $\ind(u) > \ind(v)$, and {\bf
  flat} if $\ind(u) = \ind(v)$.

A decomposition $(S_1,\ldots{},S_p)$ is {\bf strong} if $D\langle S_i
\rangle$ is strong for all $1\leq i\leq p$.  The following proposition
is well-known (just consider an acyclic ordering of the strong
components of $D$).
\begin{proposition}\label{prop:noback}
Every  digraph has a strong decomposition with no backward arcs.
\end{proposition}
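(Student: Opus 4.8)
The plan is to let the parts of the decomposition be exactly the vertex sets of the strong components of $D$, ordered by the acyclic ordering of those components recalled in Section~\ref{sec:term}. Concretely, I would take the strong components $D_1,\ldots,D_k$ of $D$ ordered so that there is no arc from a vertex of $V(D_j)$ to a vertex of $V(D_i)$ whenever $j>i$, and then set $S_i=V(D_i)$ for each $i$. Since the strong components are vertex-disjoint and their vertex sets partition $V$, the tuple $(S_1,\ldots,S_k)$ is a decomposition of $D$ in the required sense.

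First I would argue that this decomposition is strong. A strong component is a maximal strongly connected subdigraph, hence an induced subdigraph, so $D\langle S_i\rangle=D_i$ and thus $D\langle S_i\rangle$ is strong for every $i$ (a single vertex being trivially strong). Next I would check that no arc is backward: given any arc $uv$, if $u$ and $v$ lie in the same part the arc is flat, while if they lie in different parts, say $u\in S_j$ and $v\in S_i$ with $j\neq i$, then the acyclic ordering rules out the possibility $j>i$ (it forbids any arc from $V(D_j)$ to $V(D_i)$ in that case), forcing $\ind(u)=j<i=\ind(v)$, so $uv$ is forward. In either case $uv$ is not backward, which is exactly what the statement requires.

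There is really no obstacle to overcome here, as the proposition is just a reformulation of the well-known fact that the condensation of $D$ (the digraph obtained by contracting each strong component to a single vertex) is acyclic, so its vertices admit a topological ordering. The only minor point worth recording is the boundary case $p=1$: when $D$ is itself strong there is a single part and no arcs between distinct parts, so the conclusion holds vacuously, consistent with the definition permitting $p\geq 1$.
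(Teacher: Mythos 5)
Your proof is correct and follows exactly the route the paper indicates: the paper dismisses this proposition with the parenthetical remark ``just consider an acyclic ordering of the strong components of $D$,'' which is precisely the decomposition $(S_1,\ldots,S_k)$ with $S_i=V(D_i)$ that you construct and verify in detail. Nothing is missing; you have simply written out the standard argument the authors left implicit.
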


A {\bf nice decomposition} of a strong digraph $D$ is a strong
decomposition such that the set of cut-arcs of $D$ is exactly the set
of backward arcs. \JBJ{Note that if $D$ has no cut-arc, that is, $\lambda(D)\geq 2$, then the strong decomposition with just one set $S_1=V(D)$ is nice.}

\begin{proposition}\cite{bangEulerSD}
  \label{prop:nice}
Every strong semicomplete digraph of order at least $4$ admits a nice
decomposition.
\end{proposition}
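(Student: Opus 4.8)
The plan is to separate the case where $D$ has no cut-arc from the case where it does. If $\lambda(D)\geq 2$ then $D$ has no cut-arc, and as observed just before the statement the single-block decomposition $S_1=V(D)$ is already nice; so I would assume from now on that the set $F$ of cut-arcs of $D$ is non-empty. Write $H=D\setminus F$ for the spanning subdigraph obtained by deleting all cut-arcs, and let $S_1,\dots,S_p$ be the vertex sets of the strong components of $H$. Since $H[S_i]$ is strong and $H[S_i]\subseteq D[S_i]$, each $D[S_i]$ is strong, so every ordering of $S_1,\dots,S_p$ is automatically a strong decomposition; the whole problem is to order the blocks so that the backward arcs are exactly the members of $F$.

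First I would establish the local picture attached to a single cut-arc. If $e=st$ is a cut-arc then $D\setminus e$ is not strong, yet adding the one arc $e$ back makes it strong; this forces the condensation of $D\setminus e$ to be a total order $U_1\to U_2\to\cdots\to U_m$ on its strong components, with $t\in U_1$ and $s\in U_m$. Setting $W=U_1$, the arc $e$ is then the unique arc of $D$ entering $W$, we have $t\in W$, $s\notin W$, and $U_1,U_m$ are strong in $D$. Because $H\subseteq D\setminus e$, the component of $H$ containing $t$ lies inside $W$ and the one containing $s$ lies inside $V\setminus W$; in particular the two endpoints of every cut-arc land in distinct blocks $S_i$, so no cut-arc can become a flat arc. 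Moreover $s$ sits in the terminal component $U_m$, so there is no $H$-path from $S(s)$ to $S(t)$. This lets me fix, for each cut-arc $st$, the required relative order ``$S(t)$ before $S(s)$'' so that $st$ is backward, with the guarantee that this single requirement never contradicts the $H$-reachability order.

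The decomposition is then obtained by finding a common linear order of $S_1,\dots,S_p$ making every $H$-arc between two blocks forward while honouring all these requirements; equivalently, one must show that the relation on the blocks generated by the $H$-reachability order together with the requirements ``$S(t)$ before $S(s)$'' (one per cut-arc) is acyclic. Granting acyclicity, any topological order yields a nice decomposition: flat arcs stay inside the strong blocks $D[S_i]$ and include no cut-arc, forward arcs lie in $H$ and so are not cut-arcs, and the backward arcs are precisely $F$. I expect the acyclicity to be the main obstacle, and it is exactly here that semicompleteness and the hypothesis $n\geq 4$ enter. For a one-way cut $(V\setminus W,W)$ as above, semicompleteness forces every cross pair other than $\{s,t\}$ to be oriented out of $W$, so $W$ has at least $|W|\,|V\setminus W|-1$ arcs leaving it; since $n\geq 4$ gives $|W|\,|V\setminus W|\geq 3$, at least two arcs leave $W$. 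Thus deleting any one leaving arc keeps $W$ reachable from and to the rest, which rules out a second cut-arc running antiparallel to $st$ between $S(t)$ and $S(s)$. Turning this local abundance of leaving arcs into the global statement that the requirement relation has no directed cycle is the technical heart of the argument.

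Finally I would record why $n\geq 4$ cannot be dropped: the only strong semicomplete configurations in which the required order fails are the small cyclic ones, and indeed the directed triangle (and the $2$-cycle on two vertices) are strong semicomplete digraphs in which every arc is a cut-arc, so all arcs would have to be simultaneously backward, which is impossible in a linear order. Hence the bound $|W|\,|V\setminus W|\geq 3$, which is precisely what $n\geq 4$ buys, is the feature that makes the argument go through.
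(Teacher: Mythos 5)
Your setup is correct as far as it goes, and several intermediate claims are right: with $F$ the set of cut-arcs and $H=D\setminus F$, for a cut-arc $e=st$ the digraph $D\setminus e$ has a unique initial strong component $W\ni t$ and a unique terminal one containing $s$ (in $D\setminus e$, $t$ reaches every vertex and every vertex reaches $s$, since a path starting at $t$ or ending at $s$ cannot use $e$), so $e$ is the unique arc of $D$ entering $W$, the two ends of every cut-arc lie in distinct strong components of $H$, and there is no $H$-path from $S(s)$ to $S(t)$. (Your claim that the condensation of $D\setminus e$ is a \emph{total order} needs semicompleteness, not merely the fact that restoring $e$ makes the digraph strong; but the facts you actually use follow anyway.) The trouble is that all of this only shows each \emph{single} constraint ``$S(t)$ before $S(s)$'' is compatible with the $H$-reachability order, whereas the proposition is \emph{equivalent} to the simultaneous satisfiability of all of them, i.e.\ to the acyclicity of the relation generated by the $H$-condensation together with one reversal per cut-arc. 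Indeed, in any nice decomposition the blocks are forced to be exactly the strong components of $H$ (flat arcs are never cut-arcs, so $D[S_i]\subseteq H$, and all crossing $H$-arcs must be forward), and the block ordering is forced to be a topological order of exactly this relation. So your argument reformulates the proposition rather than proving it, and you say so yourself: the acyclicity is deferred as ``the technical heart''.

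Moreover, the one concrete step you offer toward acyclicity does not hold up even in the special case it targets. From ``at least two arcs of $D$ leave $W$'' you conclude that deleting any leaving arc ``keeps $W$ reachable from and to the rest'' and hence that no cut-arc runs antiparallel to $st$; but having arcs in both directions across the single cut $(W,V\setminus W)$ after the deletion does not make the resulting digraph strong, which is what ``not a cut-arc'' requires. And precedence cycles using two or more cut-arcs linked by nontrivial $H$-paths, say $S(t_1)\prec S(s_1)\preceq S(t_2)\prec S(s_2)\preceq S(t_1)$, are not addressed at all. Ruling these out is where strong connectivity, semicompleteness and $n\geq 4$ must genuinely be used; for instance, in the case of two cut-arcs $e_1,e_2$ with disjoint sets $W_{e_1},W_{e_2}$, the cycle hypothesis forces every arc entering $W_{e_1}\cup W_{e_2}$ to have its tail inside that union, so $d^-_D(W_{e_1}\cup W_{e_2})=0$, contradicting strongness unless $W_{e_1}\cup W_{e_2}=V$, and only then do semicompleteness and $n\geq 4$ finish the job; the overlapping case needs a submodularity analysis of the sets with in-degree one, and longer cycles need more still. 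Note also that the paper gives no proof of this proposition (it is quoted from \cite{bangEulerSD}), so there is no argument in the surrounding text to lean on: the missing acyclicity proof is exactly the content of the cited result and has to be supplied in full.
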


Given a semicomplete digraph and a nice decomposition of it, the {\bf
  natural ordering} of its backward arcs is the ordering \JBJ{of these arcs} in decreasing
order according to the index of their tail.  Note that this ordering
is unique \cite{bangEulerSD}. 

Denote by $S_4$ the semicomplete digraph on 4 vertices that we obtain
from a directed 4-cycle $v_0v_1v_2v_3v_0$ by adding the arcs
$v_0v_2,v_2v_0,v_1v_3,v_3v_1$. The following result shows that
Conjecture \ref{conj:twostrong} holds for semicomplete digraphs.

\begin{theorem}\cite{bangC24}
\label{thm:decomp2asSD}
  Let $D=(V,A)$ be a 2-arc-strong semicomplete digraph which is not
  isomorphic to $S_4$. Then $D$ contains two arc disjoint strong
  spanning subdigraphs $D_1,D_2$.
\end{theorem}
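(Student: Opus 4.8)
The plan is to take $D_1$ to be a single Hamiltonian cycle $H$ and $D_2:=D-A(H)$, so the whole problem reduces to choosing $H$ so that $D_2$ remains strong. Since $D$ is strong and semicomplete, a Hamiltonian cycle exists by Camion's theorem (Theorem~\ref{thm:camion}), and the real content is the choice of $H$. First I would dispose of the small orders by inspection. For $n\le 4$ a $2$-arc-strong semicomplete digraph has a very restricted arc set, and a short enumeration shows that the only one admitting no such partition is $S_4$: there the four single arcs form the unique Hamiltonian cycle and the remaining arcs are two vertex-disjoint digons, which is disconnected. This both starts the induction and shows the exception is genuinely needed.

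For the main range I would analyse which cuts can spoil strong connectivity of $D_2$. A spanning subdigraph is strong exactly when every proper nonempty $X\subsetneq V$ has an outgoing arc, so $D_2$ fails iff some cut $(X,V\setminus X)$ has all its forward arcs on $H$. A Hamiltonian cycle crosses such a cut forward and backward the same number of times and at least once each way, so it uses $d^+_H(X)\ge 1$ forward arcs, and the cut is destroyed precisely when $d^+_H(X)=d^+_D(X)$. Since $\lambda(D)\ge 2$ gives $d^+_D(X)\ge 2$, the binding constraints are the \emph{tight} out-cuts, those with $d^+_D(X)=2$, for which we must force $H$ to cross forward exactly once. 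The key leverage is semicompleteness: every pair between $X$ and $V\setminus X$ is adjacent, so $d^+_D(X)+d^-_D(X)\ge |X|\,|V\setminus X|$, whence a tight set receives almost all of its crossing arcs and behaves like a near-sink, while its complement is a near-source; thus $H$ has very few forward arcs but many backward arcs at its disposal across such a cut. Moreover, submodularity of $d^+$ shows that two crossing tight out-cuts have tight intersection and union, so the tight sets form a well-structured lattice family that can be processed in a canonical order.

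The heart of the argument is then to route a single Hamiltonian cycle that enters and leaves every tight set exactly once, so that its intersection with each tight $X$ (and with each complement) is a single subpath. Using the classical fact that semicomplete digraphs have Hamiltonian paths and that strong semicomplete subdigraphs are themselves Hamiltonian (Theorem~\ref{thm:camion}), one can realise such a ``one-lap'' crossing across a single tight cut by concatenating Hamiltonian subpaths of $D[X]$ and $D[V\setminus X]$ through the two forward and backward arcs. When several tight sets nest or cross, I would order their boundary crossings using the natural ordering of backward arcs supplied by the nice-decomposition machinery (Proposition~\ref{prop:nice}), and reroute $H$ locally so as never to create a second forward crossing of an already-handled tight set.

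I expect the main obstacle to be exactly this simultaneous control: a local reroute that fixes one tight cut can introduce a double forward crossing of another, so the argument must respect the whole lattice of tight sets at once rather than one cut at a time. If the direct routing proves too delicate, the natural fallback is induction on $n$: delete or contract a carefully chosen vertex or tight set to obtain a smaller $2$-arc-strong semicomplete digraph (distinct from $S_4$), apply the inductive hypothesis, and re-extend both strong spanning subdigraphs across the removed part, splitting the two forward arcs of each tight cut one to each colour and using $\delta^0(D)\ge 2$ (a consequence of $\lambda(D)\ge 2$) to keep an in-arc and an out-arc of each colour at every vertex. The decisive effort in either route lies in taming the tight out-cuts.
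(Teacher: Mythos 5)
There is a fatal gap at the very first step, and it is witnessed by a construction in this very paper. You reduce the whole problem to choosing a Hamiltonian cycle $H$ with $D\setminus A(H)$ strong, and all of your subsequent machinery (tight out-cuts, submodularity, routing) is in service of building such an $H$. But such an $H$ need not exist. The tournament $T_r$ of Section~\ref{sec:removeHP} (Figure~\ref{fig:Tr}) is, for every $r\geq 4$, a $2$-arc-strong semicomplete digraph not isomorphic to $S_4$, and by Lemma~\ref{lem:TsepHP} the vertex $v_0$ cannot reach $v_r$ in $T_r\setminus A(P)$ for \emph{any} Hamiltonian path $P$ starting at $v_0$. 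Every Hamiltonian cycle $C$ of $T_r$ contains such a path (break $C$ at $v_0$), so $A(P)\subseteq A(C)$ and hence $v_0$ cannot reach $v_r$ in $T_r\setminus A(C)$ either. Thus \emph{no} Hamiltonian cycle of $T_r$ is non-separating, while the theorem still guarantees two arc-disjoint strong spanning subdigraphs; in $T_r$ neither part of any valid decomposition can be a Hamiltonian cycle. Concretely, the tight $2$-cuts $\{u_iu_{i+1},v_iv_{i+1}\}$ and $\{u_iu_{i+1},v_{i+1}v_{i+2}\}$ of $T_r$ form exactly the kind of interlocking family you worried about in your ``main obstacle'' paragraph, and the alternation argument in the proof of Lemma~\ref{lem:TsepHP} shows this family admits no simultaneous ``cross forward exactly once'' routing at all. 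So the difficulty you flagged is not a technical delicacy to be finessed; it is an impossibility, and your fallback induction (which is only gestured at, with no specified deletion/contraction operation and no re-extension argument) would have to abandon the Hamiltonian-cycle form of the decomposition anyway.

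For context: the paper does not prove Theorem~\ref{thm:decomp2asSD} at all; it quotes it from \cite{bangC24}, where the proof is a substantial argument that makes no attempt to realize one of the two subdigraphs as a Hamiltonian cycle (indeed, by the above, it cannot). Your local observations are sound as far as they go — a cut $(X,V\setminus X)$ is destroyed in $D\setminus A(H)$ exactly when $d_H^+(X)=d_D^+(X)$, and crossing tight out-cuts do form a lattice by submodularity — but they support a strategy that the family $T_r$ rules out, so the proposal cannot be repaired along its present lines.
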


\begin{figure}[H]
\begin{center}
\tikzstyle{vertexX}=[circle,draw, top color=gray!5, bottom color=gray!30, minimum size=16pt, scale=0.6, inner sep=0.5pt]
\tikzstyle{vertexY}=[circle,draw, top color=gray!5, bottom color=gray!30, minimum size=20pt, scale=0.7, inner sep=1.5pt]
\begin{tikzpicture}[scale=0.35]
 \node (v1) at (1.0,6.0) [vertexX] {$d$};
 \node (v2) at (7.0,6.0) [vertexX] {$c$};
 \node (v3) at (1.0,1.0) [vertexX] {$a$};
 \node (v4) at (7.0,1.0) [vertexX] {$b$};
\draw [->, line width=0.03cm] (v1) -- (v2);
\draw [->, line width=0.03cm] (v2) -- (v3);
\draw [->, line width=0.03cm] (v3) -- (v4);
\draw [->, line width=0.03cm] (v4) -- (v1);
\draw [->, line width=0.03cm] (v1) to [out=285, in=75] (v3);
\draw [->, line width=0.03cm] (v3) to [out=105, in=255] (v1);
\draw [->, line width=0.03cm] (v3) to [out=135, in=225] (v1);
\draw [->, line width=0.03cm] (v2) to [out=285, in=75] (v4);
\draw [->, line width=0.03cm] (v4) to [out=105, in=255] (v2);
\node at (4,-0.4) {$S_{4,1}$};
\end{tikzpicture}\hspace{1.1cm}
\begin{tikzpicture}[scale=0.35]
 \node (v1) at (1.0,6.0) [vertexX] {$d$};
 \node (v2) at (7.0,6.0) [vertexX] {$c$};
 \node (v3) at (1.0,1.0) [vertexX] {$a$};
 \node (v4) at (7.0,1.0) [vertexX] {$b$};
\draw [->, line width=0.03cm] (v1) to [out=25, in=155] (v2);
\draw [->, line width=0.03cm] (v1) to [out=355, in=185]  (v2);
\draw [->, line width=0.03cm] (v2) -- (v3);
\draw [->, line width=0.03cm] (v3) -- (v4);
\draw [->, line width=0.03cm] (v4) -- (v1);
\draw [->, line width=0.03cm] (v1) to [out=285, in=75] (v3);
\draw [->, line width=0.03cm] (v3) to [out=105, in=255] (v1);
\draw [->, line width=0.03cm] (v2) to [out=285, in=75] (v4);
\draw [->, line width=0.03cm] (v4) to [out=105, in=255] (v2);
\node at (4,-0.4) {$S_{4,2}$};
\end{tikzpicture} \hspace{1.1cm}
\begin{tikzpicture}[scale=0.35]
 \node (v1) at (1.0,6.0) [vertexX] {$d$};
 \node (v2) at (7.0,6.0) [vertexX] {$c$};
 \node (v3) at (1.0,1.0) [vertexX] {$a$};
 \node (v4) at (7.0,1.0) [vertexX] {$b$};
\draw [->, line width=0.03cm] (v1) -- (v2);
\draw [->, line width=0.03cm] (v2) -- (v3);
\draw [->, line width=0.03cm] (v3) -- (v4);
\draw [->, line width=0.03cm] (v4) -- (v1);
\draw [->, line width=0.03cm] (v1) to [out=285, in=75] (v3);
\draw [->, line width=0.03cm] (v3) to [out=105, in=255] (v1);
\draw [->, line width=0.03cm] (v3) to [out=135, in=225] (v1);
\draw [->, line width=0.03cm] (v2) to [out=315, in=45] (v4);
\draw [->, line width=0.03cm] (v2) to [out=285, in=75] (v4);
\draw [->, line width=0.03cm] (v4) to [out=105, in=255] (v2);
\node at (4,-0.4) {$S_{4,3}$};
\end{tikzpicture}
\caption{The digraphs $S_{4,1}$, $S_{4,2}$, $S_{4,3}$}\label{fig:S4add}
\end{center}
\end{figure}
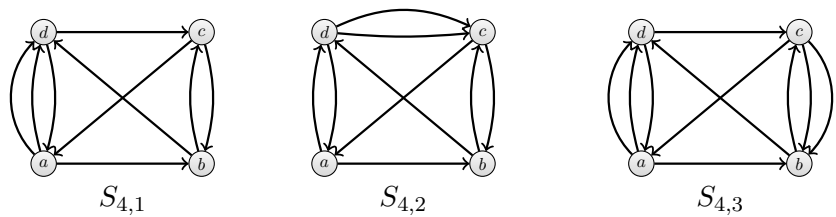

Recently Theorem \ref{thm:decomp2asSD} was extended to strong
decompositions of 2-arc-strong semicomplete directed multigraphs \JBJ{(parallel arcs allowed).}

  \begin{theorem}\cite{bangJGTSD}
    \label{thm:SDmultidecomp}
    Let $D$ be a 2-arc-strong semicomplete directed multigraph. Then
    $D$ has a pair of arc-disjoint strong spanning subdigraphs if and
    only if $D$ is not isomorphic to $S_4$ or one of three directed
    multigraphs shown in Figure \ref{fig:S4add} that can be obtained
    from $S_4$ by adding one or two extra arcs parallel to existing
    ones. Furthermore, if $D$ is not one of those four digraphs, then
    we can find a pair of arc-disjoint strong spanning subdigraphs in
    polynomial time.
  \end{theorem}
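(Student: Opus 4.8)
The statement has two directions, and I would treat them separately.

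For \emph{necessity} we must check that none of $S_4$ and the three multigraphs $S_{4,1}$, $S_{4,2}$, $S_{4,3}$ of Figure~\ref{fig:S4add} splits into two arc-disjoint strong spanning subdigraphs. The key observation is that all of these arise from $S_4$ by adding parallel copies only to the \emph{diagonal} arcs $v_0v_2,v_2v_0,v_1v_3,v_3v_1$, so across the cut $\{v_0,v_2\}$ versus $\{v_1,v_3\}$ there are, in every case, exactly two arcs in each direction, namely the four arcs of the $4$-cycle $v_0v_1v_2v_3v_0$. Since a strong spanning subdigraph must cross this cut at least once in each direction, in any decomposition the two forward arcs are split one to each part and likewise the two backward arcs. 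Each part therefore has a single arc leaving and a single arc entering $\{v_0,v_2\}$, and tracing the consequences in a short case analysis shows that in one of the two parts some vertex is left with in-degree or out-degree zero (for $S_4$ itself this specializes to the remark that its only Hamiltonian cycle is $v_0v_1v_2v_3v_0$, whose complementary arc set is two vertex-disjoint $2$-cycles). Hence none of the four digraphs decomposes, and the parallel arcs, being internal to the two sides of the cut, are useless.

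For \emph{sufficiency}, let $D$ be a $2$-arc-strong semicomplete multigraph on $n$ vertices that is not one of the four exceptions, and let $D^{s}$ be the underlying simple semicomplete digraph obtained by keeping one copy of each arc. My first move is to reduce to the already-known simple case: if $D^{s}$ is $2$-arc-strong and is not isomorphic to $S_4$, then Theorem~\ref{thm:decomp2asSD} applied to $D^{s}$ yields two arc-disjoint strong spanning subdigraphs, which are also strong spanning subdigraphs of $D$, and we are done. Thus the only situations left are those in which the multiplicities are essential, namely $D^{s}\cong S_4$ (so $n=4$, settled by the finite check above together with a direct verification that every other way of adding parallel arcs to $S_4$ does decompose), or $D^{s}$ has a set $X$ with $d^{+}_{D^{s}}(X)=1$ while $d^{+}_{D}(X)\geq 2$. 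Because $D$ is semicomplete, in the latter case all arcs leaving $X$ are parallel copies of a single arc $u\to w$ with $u\in X$ and $w\notin X$; I call this a \emph{tight parallel cut}.

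To handle a tight parallel cut I would argue by induction on $n$ via a contraction reduction. Contract $V\setminus X$ to a single new vertex $w^{*}$, keeping one arc $u\to w^{*}$ for each copy of $u\to w$ and an arc $w^{*}\to z$ for each arc entering $X$ at $z$; since every vertex of $X$ is adjacent in $D$ to every vertex outside $X$, the resulting digraph $D_X$ is again semicomplete, and one checks that the extra arc $u\to w^{*}$ restores $2$-arc-strength even at sets $Y\ni u$ that were tight inside $D[X]$. Applying the induction hypothesis to $D_X$ and to the symmetric contraction $D_{\bar X}$, and then gluing the two decompositions along the two crossing copies of $u\to w$ and the chosen return arcs, produces the required pair for $D$; for the few small or near-exceptional sides the cleaner alternative is vertex deletion and reinsertion: delete a vertex $v$ keeping $D-v$ $2$-arc-strong and non-exceptional, decompose $D-v$, and reinsert $v$ by handing one of its (at least two) out-arcs and one of its (at least two) in-arcs to each part, which keeps each part strong and spanning.

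The main obstacle is the bookkeeping needed to guarantee that these reductions never leave the class we are inducting on, i.e.\ that each contracted or vertex-deleted digraph is again a $2$-arc-strong semicomplete multigraph that is \emph{not} one of the four forbidden digraphs, and that the crossing arcs used on the two sides of a cut can be chosen consistently. This is where the structure theory of Section~\ref{subsec:structure}—nice decompositions, the natural ordering of backward arcs, and Camion's Theorem~\ref{thm:camion}—should be invoked to locate a usable cut or deletable vertex, with the delicate point being $n=5$ (where $D-v$ could itself be an exception) and tight cuts whose sides are themselves near-exceptional; showing that the only global obstructions surviving all reductions are exactly the four listed digraphs is the heart of the proof. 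Finally, the whole argument is constructive: computing $D^{s}$, testing $2$-arc-strength, locating a tight parallel cut, invoking the polynomial algorithm behind Theorem~\ref{thm:decomp2asSD}, and performing the gluing or reinsertion are each polynomial, and the recursion has depth at most $n$, giving the claimed polynomial-time algorithm.
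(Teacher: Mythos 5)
The paper itself offers no proof of Theorem~\ref{thm:SDmultidecomp}: it is quoted from \cite{bangJGTSD} as a known result, so your attempt can only be judged on its own merits, and it contains two genuine gaps.

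The first gap is factual and contaminates both directions of your argument. You claim that all three multigraphs of Figure~\ref{fig:S4add} arise from $S_4$ by adding parallel copies of \emph{diagonal} arcs only, so that the cut between the two diagonal pairs always carries exactly two arcs each way. This is false for $S_{4,2}$: there the doubled arc is $dc$, an arc of the $4$-cycle (with the identification $v_0=d$, $v_1=c$, $v_2=a$, $v_3=b$, it is the cut arc $v_0v_1$), so the cut $\{v_0,v_2\}$ versus $\{v_1,v_3\}$ carries three arcs in one direction and two in the other. Your counting argument therefore does not apply to $S_{4,2}$, and the added arc is neither ``internal to the two sides of the cut'' nor ``useless''. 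Worse, the same misreading breaks your base case for sufficiency: when $D^s\cong S_4$ you propose ``a direct verification that every other way of adding parallel arcs to $S_4$ does decompose'', but under your description $S_4$ plus a doubled cycle arc is precisely such an ``other way'', and this verification would be of a false statement, since that digraph is $S_{4,2}$ and does \emph{not} decompose. So the finite check anchoring your induction, as you describe it, is wrong.

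The second gap is the inductive gluing, which you name but do not resolve. When $D^s$ has a set $X$ with $d^+_{D^s}(X)=1$, you contract each side and apply induction to $D_X$ and $D_{\bar X}$ separately. But the arcs entering $X$ are actual arcs of $D$ that appear in \emph{both} contracted digraphs, and the two inductive decompositions distribute them between their respective parts independently; nothing guarantees that an arc $y\to z$ entering $X$ lands in ``the same'' part on both sides, and unlike the parallel copies of $u\to w$ these arcs are not interchangeable, since they have distinct heads and tails. To repair this you would need a stronger inductive statement (for instance, a decomposition in which prescribed crossing arcs are forced into prescribed parts) or a genuinely one-sided extension argument; as written, the glued pairs need not be arc-disjoint, let alone strong. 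In addition, you must rule out that $D_X$, $D_{\bar X}$, or $D-v$ is itself one of the four exceptions, which you flag (the $n=5$ case, near-exceptional sides) but do not handle. Since you explicitly defer ``showing that the only global obstructions surviving all reductions are exactly the four listed digraphs'' --- which \emph{is} the content of the theorem --- what you have is a plausible plan of attack, not a proof.
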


  \subsection{Strong spanning subdigraphs with few arcs in digraphs with $\alpha=2$.}\label{sec:strongspanning}

\begin{theorem}[Chen-Manalastras]\cite{chenDM44}
  \label{thm:CMthm}
  Let $D$ be a strongly connected digraph with $\alpha{}(D)=2$. Then
  either $D$ has a hamiltonian cycle or it has two cycles $C_1,C_2$
  that cover $V(D)$ and whose intersection is a (possibly empty)
  subpath of both cycles.
\end{theorem}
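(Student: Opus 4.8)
The plan is to begin from the classical Gallai--Milgram theorem, which guarantees that a digraph with $\alpha(D)\le 2$ admits a cover of $V(D)$ by at most two vertex-disjoint directed paths $P_1,P_2$ (a single Hamiltonian path in the degenerate case). I would then use the strong connectivity of $D$ together with the independence hypothesis to close these paths into cycles. The key leverage is that among any three of the four path-endpoints two must be adjacent (since $\alpha(D)=2$ forbids an independent triple), and strong connectivity supplies whatever remaining connecting arcs are needed to route between the two paths. After a short exchange argument on the endpoints one may replace $P_1,P_2$ either by a single Hamiltonian cycle (in which case we are done), or by two cycles $C_1,C_2$ whose union still covers $V(D)$. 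This reduces the problem to controlling the overlap of a two-cycle cover.

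Having obtained a covering by two cycles, I would run an extremal argument. Among all pairs $(C_1,C_2)$ of cycles with $V(C_1)\cup V(C_2)=V(D)$, choose one minimizing $|V(C_1)\cap V(C_2)|$, and, subject to that, maximizing $|V(C_1)|+|V(C_2)|$. Writing $I=V(C_1)\cap V(C_2)$, the goal is to show that $I$ induces a subpath traversed identically by both cycles. Suppose not: then the vertices of $I$ either fail to occur consecutively on one of the cycles, or occur in incompatible cyclic orders on $C_1$ and $C_2$. In either configuration there are two shared vertices $x,y$ such that the $C_1$- and $C_2$-segments running between them can be interchanged; splicing them produces a new pair of covering cycles with strictly smaller intersection, contradicting minimality. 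Where a naive splice would break a cycle into a non-closed walk, I would invoke $\alpha(D)=2$ to locate a chord between the relevant segment endpoints (again, among three such endpoints two are adjacent) and re-route through it, so that both objects remain cycles.

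The hard part will be the surgery in this extremal step: classifying the ways two cycles can overlap and verifying that every ``bad'' overlap pattern admits a re-splicing that genuinely decreases $|V(C_1)\cap V(C_2)|$ while keeping both objects cycles that together still cover $V(D)$. This is precisely where $\alpha(D)=2$ does the real work, since it is what forces the chords needed to reconnect after a swap; without that hypothesis the exchange can fail and more than two cycles may be required. A secondary and more routine difficulty is the degenerate case in which $D$ has a Hamiltonian path but no Hamiltonian cycle, where the second cycle must be manufactured directly from the first and last blocks of the path using strong connectivity before the extremal argument can be applied.
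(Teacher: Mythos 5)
There is no in-paper proof to compare against here: the paper imports this statement verbatim as the Chen--Manalastas theorem with the citation \cite{chenDM44} and never proves it. So your attempt has to stand on its own, and as written it is a plan rather than a proof, with genuine gaps at exactly the two places you yourself flag as hard. First, the passage from the Gallai--Milgram two-path cover to a covering pair of cycles is not a routine consequence of strong connectivity plus $\alpha(D)\le 2$. Strong connectivity supplies directed \emph{walks} between the four path endpoints, not single arcs, and any connecting route in general passes through internal vertices of both paths; after such a re-routing the two objects need not remain cycles, need not remain two in number, and need not still cover $V(D)$. Closing the two paths into two covering cycles is essentially the content of the theorem, and ``a short exchange argument on the endpoints'' does not accomplish it.

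Second, the extremal step is asserted, not executed, and your stated exchange can fail to make progress. Take the configuration where $C_1$ and $C_2$ share exactly two vertices $x,y$ that are not consecutive on either cycle, so $C_1 = xP_ayP_bx$ and $C_2 = xP_cyP_dx$ with all four segments internally disjoint. Splicing at $x$ and $y$ yields $xP_ayP_dx$ and $xP_cyP_bx$: again two covering cycles, with intersection still exactly $\{x,y\}$, the same total length, and still not a common subpath. So neither your primary criterion (minimize $|V(C_1)\cap V(C_2)|$) nor your tie-breaker (maximize $|V(C_1)|+|V(C_2)|$) is decreased, and the ``bad'' pattern persists. Resolving this configuration -- and the many analogous ones -- is precisely where Chen and Manalastas (and Bondy's later proof) need an extended case analysis using $\alpha\le 2$ on \emph{internal} vertices of the segments; a single chord between ``the relevant segment endpoints'' is not enough, since the chord that $\alpha(D)=2$ guarantees may join the wrong pair of segments or produce two cycles that no longer cover $V(D)$. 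In short, the ingredients you name (Gallai--Milgram, chords forced by independence number 2, splicing) are the right ones, but both pivotal steps are missing and neither can be filled by a one-line appeal to $\alpha(D)=2$.
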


\begin{corollary}
  \label{cor:smallstrong}
  Let $D=(V,A)$ be a strong digraph with $\alpha{}(D)=2$.  Then (A) or
  (B) below holds.

  \begin{description}
  \item[(A):] $V(D)$ can be partitioned into $V_1$ and $V_2$, such
    that $D[V_i]$ are strong semicomplete digraphs for $i \in [2]$ and
    there exists $u_i \in V_i$ that is not adjacent to any vertex in
    $V_{3-i}$.

  \end{description}

  \begin{description}
    \item[(B):] $D$ has a strong spanning subdigraph $S$ with one of
      the following properties.

  \begin{description}
  \item[(B1)] $S$ is a hamiltonian cycle of $D$.

  \item[(B2)] There are two vertices $x,y$ of $S$ such that
    $d_S^+(x)=d_S^-(y)=1$, $d_S^-(x)=d_S^+(y)=2$ and
    $d_S^+(z)=d_S^-(z)=1$ for all $z\in V-\{x,y\}$.

Furthermore \JBJ{$N_S^-(x)$ and $N_S^+(y)$} are independent sets in $D$.

  \item[(B3)] There exists a vertex $x\in V$ such that
    $d_S^+(x)=d_S^-(x)=2$ and $d_S^+(v)=d_S^-(v)=1$ for all $v\neq x$.

Furthermore $N_S^+(x)$ and $N_S^-(x)$ are independent sets in $D$.
    \end{description}
  \end{description}

    \noindent{}In particular when one of (B1)-(B3) holds, the sum of
    the degrees of any two distinct vertices of $S$ is at most 6.
  \end{corollary}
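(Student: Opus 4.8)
The plan is to deduce everything from the Chen--Manalastras dichotomy (Theorem~\ref{thm:CMthm}). If $D$ has a hamiltonian cycle, take $S$ to be that cycle: this is exactly (B1). Otherwise Theorem~\ref{thm:CMthm} supplies two cycles $C_1,C_2$ covering $V$ whose intersection is a subpath. I would work with the family of \emph{all} pairs $(C_1,C_2)$ of cycles with $V(C_1)\cup V(C_2)=V$ whose intersection $P$ is a (possibly empty) subpath of both, and fix a pair for which $|V(P)|$ is maximum. If in this pair one cycle contains the other, then that cycle is hamiltonian and we are in (B1); so I may assume each cycle has a private vertex and set $S=C_1\cup C_2$.

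Suppose first that $P=p_1p_2\cdots p_k$ is nonempty. Each vertex off $P$ lies on a single cycle, and each internal vertex of $P$ receives the same entering and leaving arc from both cycles; hence in $S$ every vertex has in- and out-degree $1$ except that $p_1$ receives the two closing arcs of the cycles (so $d_S^-(p_1)=2$, $d_S^+(p_1)=1$) and $p_k$ emits the two opening arcs (so $d_S^+(p_k)=2$, $d_S^-(p_k)=1$). Since $C_1,C_2$ are strong and share $p_1$, $S$ is strong. If $k\geq 2$ then $p_1\neq p_k$ and $S$ realises (B2) with $x=p_1,\,y=p_k$; if $k=1$ the two vertices coincide and $S$ realises (B3) with $x=p_1$. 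The independence requirements are forced by the choice of pair: if the two out-neighbours of $p_k$ were adjacent in $D$, say by an arc $e$, I could splice $e$ into one of the cycles to obtain a covering pair whose common subpath has one more vertex, contradicting maximality of $|V(P)|$; the symmetric splice handles the two in-neighbours of $p_1$. Writing out these two exchanges completes the nonempty case.

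The remaining case, $P$ empty, is where I expect the real difficulty. Here $C_1,C_2$ are vertex-disjoint, say on vertex sets $U,W$, and maximality of $|V(P)|$ means that \emph{no} covering pair shares a vertex. This has concrete consequences: there is no ``insertion'', i.e.\ no $w\in W$ and no arc $uu'$ of $C_1$ with $u\to w$ and $w\to u'$ in $D$ (such a $w$ could be spliced into $C_1$, producing a covering pair meeting $C_2$ in the single vertex $w$), and symmetrically for inserting a vertex of $U$ into $C_2$; moreover $C_1,C_2$ cannot be merged into a single cycle (that would be hamiltonian). The plan is then to show that these obstructions, combined with $\alpha(D)=2$, force the structure (A). I would argue that $D[U]$ and $D[W]$ are semicomplete (a nonadjacent pair within one part constrains, via $\alpha(D)=2$, its adjacencies to the other part, and combined with the absence of insertions this is untenable unless the part is semicomplete), that each part is strong since it contains the spanning cycle $C_i$ and hence hamiltonian by Theorem~\ref{thm:camion} (giving the freedom to re-choose $C_i$ when attempting merges), and finally that the crossing arcs between $U$ and $W$ must avoid one vertex of $U$ and one vertex of $W$ entirely, supplying the private vertices $u_1,u_2$ required by (A). This crossing-arc analysis, together with the semicompleteness argument, is the main obstacle.

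Finally, the closing assertion of the corollary is immediate from the explicit degree sequences: in (B1) every vertex has total degree $2$; in (B2) exactly two vertices have total degree $3$ and all others total degree $2$; and in (B3) one vertex has total degree $4$ and all others total degree $2$. In every case the sum of the total degrees of two distinct vertices of $S$ is at most $6$.
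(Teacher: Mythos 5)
Your treatment of the hamiltonian case and of the case where the two cycles intersect is correct and is essentially the paper's own argument: the same extremal choice (maximising the common subpath) and the same splice of an arc between the two out-neighbours of $y$ (resp.\ in-neighbours of $x$) yield (B1), (B2) and (B3) with the required independence conditions, and your degree count for the closing assertion is fine.

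The genuine gap is the vertex-disjoint case, which is exactly the case that produces alternative (A) --- and you concede it yourself (``this crossing-arc analysis, together with the semicompleteness argument, is the main obstacle''). What you give there is a plan, not a proof: ``untenable unless the part is semicomplete'' and ``the crossing arcs must avoid one vertex of $U$ and one of $W$'' are the two statements that need proving, and no argument is supplied for either. Moreover, your plan attacks them in an order (semicompleteness first, private vertices second) that is not clearly workable: from a non-adjacent pair $x,y\in U$ you only learn that every vertex of $W$ is adjacent to $x$ or $y$, and this together with the absence of insertions does not visibly produce a contradiction. The paper closes the case with a rotation argument in the opposite order. Take a $(C_1,C_2)$-arc $x_1x_2$ and consider the successor $x_1^+$ of $x_1$ on $C_1$. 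Any arc $y_2x_1^+$ from $C_2$ would give the cycle $C_1[x_1^+,x_1]C_2[x_2,y_2]x_1^+$, which together with $C_2$ is a covering pair whose intersection is a nonempty subpath, contradicting the maximality of the intersection; hence no arc of $D$ enters $x_1^+$ from $C_2$. If $x_1^+$ also has no arc into $C_2$, it is the desired private vertex $u_1$; otherwise repeat with $x_1^+$ in place of $x_1$. Walking around $C_1$ in this way, either a private vertex appears or no vertex of $C_1$ receives any arc from $C_2$, contradicting strong connectivity of $D$. A symmetric argument gives $u_2\in W$, and only then does semicompleteness of $D[U]$ and $D[W]$ follow, as a one-line consequence of $\alpha(D)=2$: a non-adjacent pair inside $U$ together with $u_2$ would be an independent triple. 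Until you supply an argument of this kind (or actually carry out your own plan), the proof of alternative (A) is missing.
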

  \begin{proof}
    Let $D$ have $\alpha{}(D)=2$. By Theorem \ref{thm:CMthm} $D$ has a
    hamiltonian cycle or it has two cycles $C_1,C_2$ that cover $V(D)$
    and whose intersection is a (possibly empty) subpath of both
    cycles.  If $D$ has a hamiltonian cycle we take $S$ to be that
    cycle and we are done as (B1) holds.  So now assume that $D$
    contains no hamiltonian cycle, which by Theorem~\ref{thm:CMthm}
    implies that $D$ contains two cycles $C_1,C_2$ that cover $V(D)$
    and whose intersection is a (possibly empty) subpath of both
    cycles. Let such $C_1$ and $C_2$ be chosen such that $|V(C_1)\cap
    V(C_2)|$ is maximum possible.

We now consider the case when $|V(C_1)\cap V(C_2)|>0$. If $|V(C_1)\cap
V(C_2)|=1$, then let $x$ be the vertex in $V(C_1)\cap V(C_2)$ and let
$A(S)$ to be the union of $A(C_1)$ and $A(C_2)$. Now the first part of
(B3) holds.  If $N_S^+(x)$ is not an independent set, then without
loss of generality assume that $xu \in A(C_1)$ and $xv \in A(C_2)$ and
$uv \in A(D)$.  Now remove the arc $xv$ from $C_2$ and add the path
$xuv$, in order to obtain a new cycle $C_2'$, with $|V(C_1)\cap
V(C_2')|=2>1=|V(C_1)\cap V(C_2)|$, and thereby contradicting the
maximality of $|V(C_1)\cap V(C_2)|$.  Therefore $N_S^+(x)$ is an
independent set. We can analogously show that $N_S^-(x)$ is an
independent set, and therefore part (B3) holds.  This completes the
case when $|V(C_1)\cap V(C_2)|=1$.  We may therefore assume that
$|V(C_1)\cap V(C_2)|\geq 2$.  That is, there are vertices $x,y\in
V(C_1)\cap V(C_2)$ such that the path common to $C_1$ and $C_2$ is $P$
and $P=C_i[x,y]$ for $i=1,2$. Now the first part of (B2) holds. If
$N_S^+(y)$ is not an independent set then analogously to above we get
a contradiction to the maximality of $|V(C_1)\cap V(C_2)|$ (or to $D$
not being hamiltonian).  And, again analogously to above, we can show
that $N_S^-(x)$ is also an independent set.  Therefore (B2) holds in
this case.  This completes the case when $|V(C_1)\cap V(C_2)|>0$.

Now assume that $|V(C_1)\cap V(C_2)|=0$ and therefore $C_1$ and $C_2$
are vertex disjoint.  As $D$ is strongly connected there exists a
$(C_1,C_2)$-arc, say $x_1 x_2 \in A(D)$.  Let $x_1^+$ be the successor
of $x_1$ on $C_1$. If there is any $(C_2,x_1^+)$-arc, $y_2 x_1^+$, in
$D$, then considering the cycle $C_1[x_1^+,x_1] C_2[x_2,y_2] x_1^+$
instead of $C_1$, would contradict the maximality of $|V(C_1)\cap
V(C_2)|$. So there is no $(C_2,x_1^+)$-arc in $D$. If there is an
$(x_1^+,C_2)$-arc in $D$, then consider $x_1^+$ instead of $x_1$.
Continuing this process either gives us a vertex which is not adjacent
to any vertex in $C_2$ or there is no arc from $C_2$ to $x_1$,
$x_1^+$, $\left(x_1^+\right)^+$, etc., a contradiction to $D$ being
strong. So there must be a vertex $u_1 \in V(C_1)$ which is not
adjacent to any vertex in $C_2$.

Analogously we can show that there must be a vertex $u_2 \in V(C_2)$
which is not adjacent to any vertex in $C_1$. This implies that
$D[V(C_i)]$ is semicomplete, as if two vertices, $x_i,y_i$, in
$D[V(C_i)]$ are non-adjacent then $\{x_i,y_i,u_{3-i}\}$ is an
independent set, a contradiction to $\alpha(D)=2$.

\section{Non-separating out-branchings in semicomplete digraphs}\label{sec:SDcase}

  \begin{figure}[H]
\begin{center}
  \tikzstyle{vertexB}=[circle,draw, minimum size=18pt, scale=0.7, inner sep=0.5pt]
\tikzstyle{vertexR}=[circle,draw, color=red!100, minimum size=14pt, scale=0.7, inner sep=0.5pt]
\begin{tikzpicture}[scale=0.6]
  \node (o) at (0,2) [vertexB] {$p_2$};
  \node (u) at (2,0) [vertexB] {$p_3$};
  \node (s) at (2,4) [vertexB] {$p_1$};
  \node (i) at (4,2) [vertexB] {$p_4$};
  \draw[->, line width=0.03cm] (s) to (o);
  \draw[->, line width=0.03cm] (s) to (u);
  \draw[->, line width=0.03cm] (i) to (s);
  \draw[->, line width=0.03cm] (i) to (o);
  \draw[->, line width=0.03cm] (o) to (u);
  \draw[->, line width=0.03cm] (u) to (i);
  \draw[->, line width=0.03cm] (o) to [out=-20,in=200] (i);
  \node () at (2,-1) {$W_1$};

  \node (r1) at (8,2) [vertexB] {$r_1$};
  \node (t1) at (10,0) [vertexB] {$t_1$};
  \node (r2) at (10,4) [vertexB] {$r_2$};
  \node (t2) at (12,2) [vertexB] {$t_2$};
  \draw[->, line width=0.03cm] (r1) to (t2);
  \draw[->, line width=0.03cm] (r2) to (t1);
  \draw[->, line width=0.03cm] (r1) to (t1);
  \draw[->, line width=0.03cm] (t1) to (t2);
  \draw[->, line width=0.03cm] (t2) to (r2);
  \draw[->, line width=0.03cm] (r2) to (r1);
  \node () at (10,-1) {$W_2$};
  
  \end{tikzpicture}
\end{center}
\caption{The semicomplete digraphs $W_1$ and $W_2$.}\label{fig:SDnononsepB}
\end{figure}
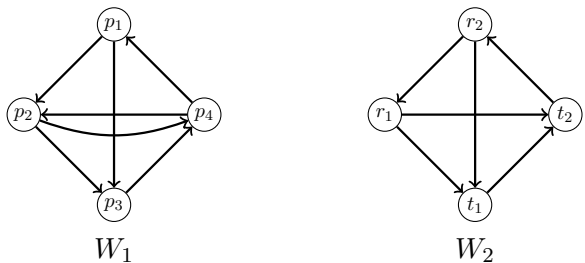

\begin{theorem}  \label{thm:SDnonsepB}
Let $D$ be a strong semicomplete digraph. Then the following holds. \\

  (a) If $D$ has at least two vertices with in-degree one, then $D$
contains no non-separating branching.  Furthermore if $D$ contains
exactly two vertices with in-degree one and is not isomorphic to $W_2$
(see Figure~\ref{fig:SDnononsepB}), then there exists an out-tree
$T^+$ rooted at $r_1$, such that $V(T^+)=V(D)-r_2$ and $D\setminus{}A(T^+)$ is
strong, where $d_D^-(r_1)=d_D^-(r_2)=1$.\\
  
  (b) If $D$ is isomorphic to $W_1$ (see
Figure~\ref{fig:SDnononsepB}), then $D$ contains no non-separating
branching.\\

  (c) If $D$ is not isomorphic to $W_1$ and contains exactly one
vertex, $r$, of in-degree one, then $D$ contains a non-separating
branching, rooted at $r$. \\

  (d) If $\delta^-(D)\geq 2$ and $|V(D)|\leq 3$, then for every $r \in
V(D)$ the digraph $D$ contains a non-separating branching, rooted at
$r$.  \\

  (e) If $\delta^-(D)\geq 2$ and $|V(D)|\geq 4$, then $D$ admits a
nice decomposition $(S_1,S_2,\ldots,S_p)$, and for every $r \in S_1$
the digraph $D$ contains a non-separating branching, rooted at $r$.
 \end{theorem}

\begin{proof}
 Recall that in an out-branching every vertex except the root has
  one arc entering it. Hence if a vertex has in-degree one in $D$, it
  must be the root of any non-separating out-branching. This shows
  that if $D$ admits a non-separating out-branching, it has at most
  one vertex with in-degree one. This proves the first part of (a).

Now assume that $D$ contains exactly two vertices, $r_1$ and $r_2$,
with in-degree one and let $H$ be a hamiltonian  cycle in $D$ \JBJ{($H$ exists by Theorem \ref{thm:camion})} and let
$D'=D\setminus{}A(H)$. If there is only one initial strong component in
$D'-r_2$, then letting $T^+$ be an out-branching in $D'-r_2$ gives us
the desired out-tree.  So assume that there are at least two initial
strong components in $D'-r_2$, $\{r_1\}$ and $S_1$. If $|V(S_1)| \geq
2$ then as $r_1$ is non-adjacent to $\{r_2\} \cup V(S_1)$  in $D'$
and $|\{r_2\} \cup V(S_1)| \geq 3$ we obtain a contradiction (as $H$
was a hamiltonian  cycle\JBJ{, meaning that we removed only 2 arcs incident to $r_1$ when we obtained $D'$ from $D$}).  So $|S_1|=1$ and we let $V(S_1)=\{t_1\}$. As
$d_D^-(t_1) \geq 2$, we have $d_{D'}^-(t_1) =1$ and $N_{D'}^-(t_1) =
\{r_2\}$.  Analogously considering $D'-r_1$ instead of $D'-r_2$ we
obtain an initial strong component $S_2$ in $D'-r_1$ where
$V(S_2)=\{t_2\}$ and $N_{D'}^-(t_2) = \{r_1\}$. Note that $t_1 \not=
t_2$ (as $r_1t_2,r_2t_1 \in A(D')$).  Furthermore $t_1$ and $t_2$ are
not adjacent in $D'$, as if $t_1 t_2 \in A(D')$ then $r_2 t_1 t_2$ is
a path in $D'-r_1$ and $S_2$ is not an initial strong component in
$D'-r_1$. So in $D'$, $r_1$ is non-adjacent to $\{r_2,t_1\}$ and
$t_2$ is non-adjacent to $\{r_2,t_1\}$. Therefore $|V(D)|=4$ and $D$
is isomorphic to $W_2$.  This proves the second part of (a).

  It is easy to check that if $D$ is
  the semicomplete digraph $W_1$ in Figure \ref{fig:SDnononsepB}, then
  every out-branching is separating (the vertex $p_1$ with in-degree one
  must be the root of all out-branchings), which proves part (b).

Now suppose that $D=(V,A)$ is different from $W_1$ and has exactly one
vertex of in-degree one, which implies that $n=|V(D)| \geq 3$.  Let
$H=p_1 p_2 \ldots p_n p_1$ be a hamiltonian  cycle in $D$ and let
$D'=D\setminus{}A(H)$.  Without loss of generality assume that $d_D^-(p_1)=1$,
which implies that $d_{D'}^-(p_1)=0$.  If $D'$ only has one initial
strong component, then $D'$ contains an out-branching, $B_{p_1}^+$,
rooted at $p_1$, which implies that $B_{p_1}^+$ is a non-separating
out-branching in $D$. Therefore we may assume that $D'$ contains at
least two initial strong components, one of which is just the vertex
$p_1$. As $d_{D'}^-(x) \geq 1$ for all $x \in V(D') \setminus \{p_1\}$
we note that any other initial strong component, $S$, in $D'$ must
contain at least two vertices.  Furthermore as there is no arc between
$p_1$ and the vertices in $S$ in $D'$, we must have $V(S)=\{p_2,p_n\}$
and $p_2 p_n, p_n p_2 \in A(D')$.  Therefore $n \geq 4$, as otherwise
$p_2 p_3 \in A(H)$ and $p_2 p_3 \in A(D')$, a contradiction.

If $n=4$, then we note that $D=W_1$, a contradiction (as
$H=p_1p_2p_3p_4p_1$ and $A(D') = \{p_2p_4,p_4p_2,p_1p_3\}$).  So
assume that $n \geq 5$. Let $T$ be obtained from $H$ by deleting the
arc $p_1p_2$ and adding the arcs $p_np_2$ and $p_1 p_3$. Note that $T$
is a strongly connected spanning subgraph of $D$.  Let $D^*=D\setminus{}A(T)$,
and note that $p_1 p_i \in A(D^*)$ for all $i \in [n] \setminus
\{n,3\}$ and $p_1 p_2, p_2 p_n, p_n p_3 \in A(D^*)$ (as the vertex set
of the initial component, $S$, in $D'$ was $\{p_2,p_n\}$).  Therefore
the only initial component in $D^*$ is $\{p_1\}$ and there exists an
out-branching, $B_{p_1}^+$ in $D^*$ rooted at $p_1$, which is
therefore a non-separating branching in $D$. This proves part (c).

We now consider the case when $\delta^-(D)\geq 2$. If $n \leq 3$, then
$D$ is the complete digraph on three vertices and part (d) holds. So
assume that $n \geq 4$.  By Proposition~\ref{prop:nice}, $D$ admits a
nice decomposition $(S_1,S_2,\ldots,S_p)$.

First consider the case when $p=1$. That is $D$ is $2$-arc-strong.  If
$D$ is isomorphic to $S_4$ (see Theorem \ref{thm:decomp2asSD}), then
as can be seen in Figure \ref{fig:S4OK}, $S_4$ has a non-separating
branching $B^+_{r}$ for each $r \in V(S_4)$.  So we may assume that
$D$ is not isomorphic to $S_4$, which by Theorem \ref{thm:decomp2asSD}
implies that $D$ contains two arc disjoint strong spanning subdigraphs
$D_1$ and $D_2$. For every $r \in V(D)$ we note that $D_1$ contains an
out-branching rooted at $r$ and therefore $D$ contains a
non-separating branching, rooted at $r$.  This proves part (e) when
$p=1$.

\begin{figure}[H]
\begin{center}
  \tikzstyle{vertexB}=[circle,draw, minimum size=18pt, scale=0.7, inner sep=0.5pt]
\tikzstyle{vertexR}=[circle,draw, color=red!100, minimum size=14pt, scale=0.7, inner sep=0.5pt]
\begin{tikzpicture}[scale=0.7]
  \node (a) at (0,0) [vertexB] {$a$};
  \node (b) at (4,0) [vertexB] {$b$};
  \node (c) at (4,4) [vertexB] {$c$};
  \node (d) at (0,4) [vertexB] {$d$};
  \draw[->, line width=0.03cm, color=blue] (a) to (b);
  \draw[->, line width=0.03cm,color=red] (b) to (d);
  \draw[->, line width=0.03cm,color=blue] (d) to (c);
  \draw[->, line width=0.03cm,color=blue] (c) to (a);
  \draw[->, line width=0.03cm,color=blue] (a) to [out=100, in=260] (d);
  \draw[->, line width=0.03cm,color=red] (d) to [out=-80, in=80] (a);
  \draw[->, line width=0.03cm,color=red] (c) to [out=-80, in=80] (b);
  \draw[->, line width=0.03cm,color=blue] (b) to [out=100, in=260] (c);

\end{tikzpicture}
\end{center}
\caption{Decomposing $S_4$ into an out-branching $B^+_c$ in red and a strong spanning subdigraph in blue.}\label{fig:S4OK}
\end{figure}
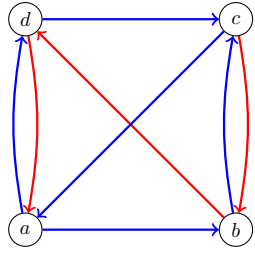
    
We now consider the case when $p \geq 2$. Let $r \in S_1$ be
arbitrary.  Let $st$ be the $(V(D)\setminus V(S_1),V(S_1))$-arc in
$D$. That is, $st$, is the cut-arc entering $S_1$.

Construct a new digraph $H_r$ from $S_1$ by adding a vertex $x$ to
$S_1$ and adding the two arcs $x t$ and $x r$ (if $t=r$ we add two
parallel arcs).  We will first show that $x$ has two arc-disjoint
paths to every other vertex in $H_r$.  As $S_1$ is strong we note that
$x$ can reach all other vertices in $H_r$ if we delete $x t$ or $ x
r$.  Furthermore if we delete any arc $e \in A(S_1)$ then $x$ can
still reach all other vertices in $S_1$ by starting with the arc $x
t$, as $t$ can reach all other vertices in $S_1$ even after deleting
one arc (by the definition of a nice decomposition).  By
Theorem~\ref{thm:Edbrthm} this implies that there exists two
arc-disjoint out-branchings both rooted in $x$ in $H_r$. Deleting $x$
from these gives us two arc-disjoint out-branchings $B_{t}^+, B^+_r$
in $S_1$, rooted at $t$ and $r$, respectively.

We will now show that we may assume that $B_t^+$ is not just an
out-star rooted at $t$. Assume that $B_t^+$ is an out-star rooted at
$t$. We first consider the case when $|S_1| \geq 4$. Let
$l_1,l_2,\ldots,l_{|S_1|-1}$ be the leaves of $B_t^+$ and note that
$\{l_1,l_2,\ldots,l_{|S_1|-1}\}$ is not independent in $D\setminus{}A(B_r^+)$ as
the underlying graph of $B_r^+$ is acyclic. So without loss of
generality we may assume that $l_1l_2 \in D\setminus{}A(B_r^+)$.  Now delete the
arc $tl_2$ from $B_t^+$ and add the arc $l_1l_2$ instead. We have
then obtained a $B_t^+$ (arc disjoint from $B_r^+$) that is not an
out-star, as desired. So we may now consider the case when $|S_1| \leq
3$. Notice that $|S_1|=1$ is impossible as $\delta^-(D)\geq 2$, so we
have $|S_1| \geq 2$. However if $|S_1|=2$, denoting $S_2$ by $\{t,y\}$,
we have $d_D^-(y)=1$, a contradiction. So we must have $|S_1|=3$.  Let
$S_1 = \{t,x,y\}$. As $d_D^-(y), d_D^-(x) \geq 2$ we note that
$xy,yx,tx,ty \in A(D)$. As $S_1$ is strong we note that $xt \in A(D)$
or $yt \in A(D)$ (or both). Without loss of generality assume that $xt
\in A(D)$. We now obtain the desired $B_t^+$ and $B_r^+$ as follows.

\begin{itemize}
  \item If $r=t$, then $B_t^+ = \{tx,xy\}$ and $B_r^+=\{ty,yx\}$.
  \item If $r=x$, then $B_t^+ = \{ty,yx\}$ and $B_r^+=\{xt,xy\}$.
  \item If $r=y$, then $B_t^+ = \{tx,xy\}$ and $B_r^+=\{yx,xt\}$.
\end{itemize}

Now, as $B_t^+$ is not just an out-star it contains a vertex $q$ which
is neither the root or leaf.  As $D$ is a strong semicomplete digraph
it contains a hamiltonian cycle, $H=p_1 p_2 p_3 \ldots p_n p_1$.  Without
loss of generality assume that the cut-arc $s t$ is the arc $p_n p_1$.
Then there must be exactly one arc in $H$ leaving $S_1$, say $p_i
p_{i+1}$. Note that $p_1 p_2 \ldots p_i$ is a hamiltonian path in
$S_1$ and $p_{i+1} p_{i+2} \ldots p_n$ is a hamiltonian path in $D -
V(S_1)$.  Let $Q$ be the union of $B_t^+$ and the path $p_{i+1}
p_{i+2} \ldots p_n p_1$ where we add an arc from every leaf of
$B_{t}^+$ to $p_{i+1}$ (which exists by the definition of the nice
decomposition and the fact that $t$ is not a leaf of $B_{t}^+$).  Note
that $Q$ is a strong spanning subdigraph of $D$.

Now construct $B^+$ by starting with $B_r^+$ and adding an arc from
$q$ (the vertex that was not the root or a leaf of $B_{t}^+$) to every
vertex in $V(D) \setminus V(S_1)$. Note that $B^+$ is an out-branching
in $D$ rooted at $r$ and $D - A(B^+)$ contains all arcs of $Q$ and is
therefore strongly connected.  This completes the proof of part~(e)
and therefore also of the theorem.
\end{proof}

As the digraph $S_4$ has a non-separating out-branching $B^+_v$ for
each of its 4 vertices, the same holds for any digraph obtained from
$S_4$ by adding arcs parallel to existing ones. Thus we one can prove
the following corollary of Theorem \ref{thm:SDmultidecomp}.

  \begin{corollary}
    \label{cor:SDmultiOK}
    Every 2-arc-strong semicomplete directed multigraph $D=(V,A)$ has
    a non-separating out-branching $B^+_v$ for every choice of $v\in
    V$.
    \end{corollary}

  \end{proof}


  \section{Proof of Theorem \ref{thm:main}}\label{sec:proof}

  Before we prove Theorem \ref{thm:main} we need the following lemma.
  
  \begin{lemma}
    \label{lem:4cycle}
    Let $D$ have $\alpha{}(D)=2\leq \lambda{}(D)$ and assume that
    $\delta^-(D) \geq 3$.  If $D$ satisfies (A) in Corollary
    \ref{cor:smallstrong}, then $D$ has a non-separating
    out-branching.
  \end{lemma}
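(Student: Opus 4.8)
The plan is to produce two arc-disjoint subdigraphs of $D$: an out-branching $B^+$ and a strong spanning subdigraph $S$. Once these are found, $D\setminus A(B^+)\supseteq S$ is strong and adding arcs to a strong digraph keeps it strong, so $B^+$ is the desired non-separating out-branching. Write the partition from (A) as $V=V_1\cup V_2$ with $D[V_1],D[V_2]$ strong semicomplete and $u_i\in V_i$ nonadjacent to $V_{3-i}$. First I would record the basic structure. Since $u_i$ receives all its in-arcs inside $V_i$ and $\delta^-(D)\ge 3$, each part has $|V_i|\ge 4$; by Camion's theorem (Theorem~\ref{thm:camion}) each $D[V_i]$ has a Hamiltonian cycle $C_i$; and since $\lambda(D)\ge 2$ there are at least two arcs from $V_1$ to $V_2$ and at least two from $V_2$ to $V_1$, none of which is incident with $u_1$ or $u_2$ (these vertices have no crossing arcs at all).

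For the strong subdigraph I would take $S=C_1\cup C_2\cup\{e,f\}$, where $e$ is an arc from $V_1$ to $V_2$ and $f$ is an arc from $V_2$ to $V_1$: the two Hamiltonian cycles make each side strong and the two crossing arcs allow passage between the sides in both directions, so $S$ is strong. It then remains to build an out-branching using only arcs outside $A(S)$. I would root it at a vertex $r$ and grow it in three stages: an out-forest inside $V_1$ using arcs of $D[V_1]\setminus A(C_1)$, an out-forest inside $V_2$ using arcs of $D[V_2]\setminus A(C_2)$, and a set of crossing arcs distinct from $e,f$ that merge these forests into a single out-tree. The roots of the trees of the forest inside a part are exactly the vertices at which $B^+$ enters that part: the initial strong components of $D[V_i]\setminus A(C_i)$ are the pieces that cannot be reached internally, and every such component not already containing the incoming path of $B^+$ must be entered by a crossing arc of $B^+$ landing in it. When a part has at most one vertex of in-degree $1$ in $D[V_i]$ and is not the exceptional digraph $W_1$, Theorem~\ref{thm:SDnonsepB} lets me replace $C_i$ by a strong remainder whose complement inside $V_i$ has a \emph{single} initial strong component, so that side needs to be entered only once.

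The hard part is exactly the reachability bookkeeping. The genuine danger is an initial strong component $K$ of $D[V_i]\setminus A(C_i)$ all of whose in-arcs have been committed to $S$ and to the Hamiltonian cycle of the other part: such a $K$ would be unreachable by $B^+$ unless it contains the root. The resolution must use $\lambda(D)\ge 2$ and $\delta^-(D)\ge 3$ quantitatively. On the one hand, $\lambda(D)\ge 2$ forces $d^-(K)\ge 2$, so $K$ always retains a spare in-arc beyond the one used by $C_i$; on the other hand, a vertex $w$ that becomes a singleton initial component has in-degree $1$ inside $V_i$ and hence at least two in-arcs from $V_{3-i}$, and these are precisely the crossing arcs available to reach $w$. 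Thus the supply of crossing arcs in each direction grows at least as fast as the number of components that must be entered from the other side, which should leave one crossing arc in each direction free to serve as $e$ and $f$. I would organise this as a system-of-distinct-representatives (Hall-type) selection, assigning to each component that must be entered from outside a distinct crossing arc into it, and choosing the internal strong remainders (not necessarily the fixed cycles $C_i$) adaptively so that no component is starved of an available in-arc. Verifying that such a selection always exists, that the two reserved arcs $e,f$ survive, and that the resulting $B^+$ is acyclic and spanning — together with checking the small configurations $|V_i|=4$ and the $W_1$-type exclusions of Theorem~\ref{thm:SDnonsepB} by hand — is the technical heart of the argument.
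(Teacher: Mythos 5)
Your high-level strategy coincides with the paper's: build an out-branching $B^+$ and an arc-disjoint strong spanning subdigraph, and you correctly identify the one real danger, namely an initial strong component of the leftover digraph inside a part that $B^+$ cannot enter. But there is a genuine gap, and it sits exactly where you place ``the technical heart of the argument'': the Hall-type selection and the adaptive choice of strong remainders are announced, not proved. Worse, the quantitative claim you offer in support of it is false for the construction you actually set up. If $S$ contains a fixed Hamiltonian cycle $C_i$ of $D[V_i]$, then an initial strong component $K$ of $D[V_i]\setminus A(C_i)$ with $|K|\geq 2$ can have \emph{all} of its in-arcs lying on $C_i$: the cycle may enter $K$ two or more times, so $\lambda(D)\geq 2$ is satisfied by $C_i$-arcs alone and $K$ retains no ``spare in-arc.'' If in addition no vertex of $K$ has an in-neighbour in $V_{3-i}$ (which is possible, e.g.\ when $u_i\in K$ and the other vertices of $K$ also receive nothing from across), then $B^+$ cannot enter $K$ at all, regardless of how cleverly the crossing arcs are assigned. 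So the supply-versus-demand count that your SDR argument rests on does not hold for the fixed-cycle choice, and the ``adaptive'' repair you mention is precisely the part that is missing.

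The paper avoids this entirely by never fixing Hamiltonian cycles. Since $u_i$ has all its in-neighbours inside $V_i$ and $\delta^-(D)\geq 3$, each part has at least $4$ vertices and cannot be $W_1$ or $W_2$ (both have maximum in-degree $2$); hence Theorem~\ref{thm:SDnonsepB} applies to each part directly, splitting $D_i$'s arcs into a non-separating out-tree covering all of $V(D_i)$ or all but one vertex (parts (a), (c), (e), according to whether $D_i$ has two, one, or no vertices of in-degree one) and a strong spanning remainder. This caps the number of entry points at two per part, and each in-degree-one entry point automatically has at least two in-arcs from the other part by $\delta^-(D)\geq 3$ (while in the $\delta^-(D_i)\geq 2$ case $\lambda(D)\geq 2$ supplies an arc into $S_1^i$ of the nice decomposition). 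The assembly is then a finite four-way case analysis over which of the three situations holds in each part --- no Hall-type argument, and no per-component bookkeeping, is needed. To fix your proof you would essentially have to reproduce this reduction, i.e.\ structure the whole argument around Theorem~\ref{thm:SDnonsepB} applied to both parts rather than around the cycles $C_1,C_2$.
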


  \begin{proof}
    Let $D$ be a digraph with $\alpha{}(D)=2 \leq \lambda{}(D)$ which
    consists of vertex disjoint strong semicomplete digraphs $D_1,
    D_2$, such that there exists $u_i \in V(D_i)$ that is not adjacent
    to any vertex in $D_{3-i}$ for $i=1,2$. 
    As
    $\delta^-(D) \geq 3$ we note that $d_{D_i}^-(u_i) \geq 3$.
    Therefore $|V(D_1)|,|V(D_2)| \geq 4$ and neither $D_1$ nor $D_2$
    is isomorphic with $W_1$ or $W_2$ (see
    Figure~\ref{fig:SDnononsepB}). We will now construct in $D$ an
    out-branching, $B^+$, and a spanning strong subdigraph, $Q$, which
    are arc-disjoint, as follows.  To start this construction,
    consider the following three cases for $i=1,2$. \\

{\em Case 1. There are at least two vertices of in-degree one in $D_i$.}

As $|V(D_i)| \geq 4$ and \JBJ{$D_i$} is a strong semicomplete digraph, we note
that there are exactly two vertices, $r_1^i$ and $r_2^i$, of in-degree
one \JBJ{(there can be at most 3 vertices of in-degree one in a semicomplete digraph and if there were 3 such vertices in $D_i$, then it would not be strong)}.  By Theorem~\ref{thm:SDnonsepB} and the fact that $D_i$ is not
isomorphic to $W_2$, there exists an out-tree \JBJ{$T_{r_i}^+$ rooted at $r_1^i$ and spanning $V(D_i)\setminus \{r_2^i\}$}  such that \JBJ{$D_i\setminus A(T_{r_i}^+)$} is
strongly connected.  Add the arcs of \JBJ{$T_{r_i}^+$} to $B^+$ and add the arcs
of \JBJ{$D_i\setminus A(T_{r_i}^+)$} to $Q$.  As $\delta^-(D) \geq 3$ we note that
there exists at least two $(D_{3-i},r_1^i)$-arcs and at least two
$(D_{3-i},r_2^i)$-arcs in $D$. \\

{\em Case 2. There is exactly one vertex, $r^i$, of in-degree one in
  $D_i$.}

As $D_i$ is not isomorphic to $W_1$, Theorem~\ref{thm:SDnonsepB}
implies that there is a non-separating out-branching, $B_{r^i}^+$, in
$D_i$, rooted at $r^i$. In this case add the arcs of $B_{r^i}^+$ to
$B^+$ and the remaining arcs of $D_i$ to $Q$.  As $\delta^-(D) \geq 3$
we note that there exists at least two $(D_{3-i},r^i)$-arcs in $D$. \\

{\em Case 3. $\delta^-(D_i) \geq 2$.}

As $|V(D_i)| \geq 4$, then Theorem~\ref{thm:SDnonsepB} (e) implies
that $D_i$ admits a nice decomposition
$(S_1^i,S_2^i,\ldots,S_{p_i}^i)$, and for every $r' \in S_1^i$ the
digraph $D_i$ contains a non-separating branching, rooted at $r'$.  As
$\lambda{}(D) \geq 2$, there must be a $(D_{3-i},S_1^i)$-arc, $ur^i$,
in $D$.  Let $B_{r^i}^+$ be a non-separating branching, rooted at
$r^i$ in $D_i$.  Add the arcs of $B_{r^i}^+$ to $B^+$ and the
remaining arcs of $D_i$ to $Q$. \\

This completes our three cases. Note that $Q$ contains a strong
spanning subdigraph of $D_1$ and of $D_2$. Furthermore in cases 2 and
3, $B^+$ contains an out-branching of $D_i$ rooted at a vertex $r^i$,
such that there exists a $(D_{3-i},r^i)$-arc in $D$. In Case~1, $B^+$
consists of an out-tree, rooted at $r_1^i$, containing all vertices of
$D_i$ except $r_2^i$, such that both $r_1^i$ and $r_2^i$ have at least
two arcs into them from $D_{3-i}$.  We now consider the following
possibilities.

We were in Case~2 or 3 for both $D_1$ and $D_2$.  Add an arc from
$D_1$ to the root of the out-branching of $D_2$ to $B^+$. As
$\lambda{}(D) \geq 2$, we can add a further $(D_1,D_2)$-arc and a
$(D_2,D_1)$-arc to $Q$, in order for $B^+$ and $Q$ to fulfill the
desired properties.

We were in Case~2 or 3 for $D_1$ and Case~1 for $D_2$.  Add an arc
from $D_1$ to $r_1^2$ and to $r_2^2$. As there were at least two
$(D_1,r_2^2)$-arcs in $D$ we can add a further $(D_1,r_2^2)$-arc to
$Q$ and as $\lambda{}(D) \geq 2$, we can add a $(D_2,D_1)$-arc to $Q$.
Now $B^+$ and $Q$ fulfill the desired properties.

We were in Case~2 or 3 for $D_2$ and Case~1 for $D_1$. This case is
analogous to the previous case.

We were in Case~1 for both $D_1$ and $D_2$.  Add an arc from $V(D_1)
\setminus \{r_2^1\}$ to $r_1^2$ and to $r_2^2$ to $B^+$ (which is
possible as $r_1^2$ and $r_2^2$ have at least two arcs into them from
$D_1$).  Also add an arc from $V(D_2) \setminus \{r_2^2\}$ to $r_2^1$
to $B^+$. Now $B^+$ is an out-branching rooted at $r_1^1$ in $D$.  As
there are at least two arcs into $r_1^i$ and into $r_2^i$ from
$D_{3-i}$ we note that there are at least four $(D_2,D_1)$-arcs and at
least four $(D_1,D_2)$-arcs in $D$. We can therefore add a
$(D_2,D_1)$-arc and a $(D_1,D_2)$-arc to $Q$ such that $Q$ and $B^+$
are arc-disjoint.  Now $B^+$ and $Q$ fulfill the desired properties,
completing the proof of the theorem.
    \end{proof}

Let us recall Theorem~\ref{thm:main}. \\

\noindent{}{\bf Theorem~\ref{thm:main}.}
{\em Let $D$ be a $2$-arc-strong digraph with $\alpha(D)\leq 2$.
If either of the following statements hold then there exists an out-branching, $B^+$, in $D$, 
such that $D\setminus{}A(B^+)$ is strongly connected.

\begin{description}
\item[(i):]  $\delta^-(D) \geq 5$, or
\item[(ii):]  $\delta^-(D) \geq 3$ and $D$ is oriented (has no 2-cycle).
\end{description}
} 

\begin{proof}
  Let $D$ be a $2$-arc-strong digraph with $\alpha(D) \leq 2$.  By
  Theorem~\ref{thm:SDnonsepB} we may assume that $\alpha{}(D)=2$.  By
  Lemma~\ref{lem:4cycle}, we may assume that $D$ has a strong spanning
  subdigraph $H$ satisfying one of the conditions (B1)-(B3) in Case~B of
  Corollary~\ref{cor:smallstrong}.  Let $D'=D\setminus A(H)$.

If $D'$ has only one initial strong component, then, by
Proposition~\ref{prop:outbranching} there is an out-branching in $D'$
and the theorem is proved.  So we may assume that
$R_1',R_2',\ldots,R_t'$ are the initial strong components in $D'$ and
$t \geq 2$.  For all $i \in [t]$, let $R_i = \induce{D}{V(R_i')}$.  We
will now prove the following claims.

\2

{\bf Claim A:} {\em $t=2$ and $|V(R_1)|,|V(R_2)| \geq 5$.
  Furthermore, all in-degrees in $D'$ are at least two, except
  possibly for one vertex whose indegree is at least one.  That is,
  there exists $r \in V(D')$, such that $d_{D'}^-(r) \geq 1$ and
  $d_{D'}^-(x) \geq 2$ for all $x \in V(D')\setminus \{r\}$.

 In Case~(i) we actually have $\delta^-(D') \geq 3$.}

{\bf Proof of Claim~A:} First consider Case~(i) (when $\delta^-(D)
\geq 5$).  As $\Delta^-(H)\leq 2$ we note that $\delta^-(D')\geq 3$,
and therefore also $\delta_{D'}^-(R_1) \geq 3$.  This further implies
that $|V(R_i)|=|V(R_i')| \geq 4$, since $N_{D'}^-[x] \subseteq
V(R_i')$, for all $x\in V(R'_i)$ and $i \in [t]$.  Now noting that
there is at most one  vertex of $H$ whose in-degree is more than 1, we
see that $R'_i$ contains a vertex with at least 4 in-neighbours inside
$R'_i$ so $|V(R_i)|=|V(R_i')| \geq 5$ holds.

Now consider the Case~(ii).  In this case we note that all in-degrees
in $D'$ are at least two, except possibly for one vertex whose
indegree is at least one.  Let $n_i=|V(R_i)|$ and note that the number
of arcs in $R_i$ is at least $2n_i-1$ (as all arcs into a vertex in
$R_i$ belong to $R_i$).  As $D$ is oriented this implies that ${n_i
  \choose 2} \geq |E(V_i)| \geq 2n_i-1$.  As ${n_i \choose 2} <
2n_i-1$ if $n_i \in [4]$, we must have $n_i \geq 5$, which implies
that $|V(R_i)| \geq 5$ in all cases.

For the sake of contradiction assume that $t \geq 3$.  Let $x_1 \in
V(R_1)$ be an arbitrary vertex with $d^+_H(x_1)=d^-_H(x_1)=1$ and let
$N_1$ be the set of the two neighbours of $x_1$ in $H$.  As
$|V(R_2)|\geq 5$ there are at least 3 vertices in $V(R_2)$ that are
not in $N_1$.  By part~(B) in Corollary~\ref{cor:smallstrong} we note
that at most two vertices in $H$ have degree more than two, so we can
can choose a vertex $x_2 \in V(R_2) \setminus N_1$ such that
$d^+_H(x_2)=d^-_H(x_2)=1$.  Let $N_2$ be the set of the two neighbours
of $x_2$ in $H$.  Now there exists a vertex $x_3 \in V(R_3) \setminus
(N_1 \cup N_2)$ which implies that $\{x_1,x_2,x_3\}$ is an independent
set in $D$, contradicting $\alpha(D) \leq 2$.  Therefore $t=2$, which
completes the proof of Claim~A.

\2

{\bf Claim B:} {\em $R_1$ and $R_2$ are semicomplete digraphs.
Furthermore, for all  $z \in V(D)$ the following holds,

\[
|N_H^+(z) \cap V(R_1)|,|N_H^+(z) \cap V(R_2)|,|N_H^-(z) \cap V(R_1)|,|N_H^-(z) \cap V(R_2)| \leq 1
\]
}

\2

{\bf Proof of Claim~B:} For the sake of contradiction assume that
$x_1,y_1 \in V(R_1)$ and $x_1$ and $y_1$ are non-adjacent in $D$.  Let
$N_2 = (N_H^+(x_1) \cup N_H^-(x_1) \cup N_H^+(y_1) \cup N_H^-(y_1))
\cap V(R_2)$.  If $V(R_2) \not\subseteq N_2$, then let $z_2 \in
V(R_2)\setminus N_2$ and note that $\{x_1,y_1,z_2\}$ is independent in
$D$, contradicting that $\alpha(D) \leq 2$.  So, $V(R_2) \subseteq
N_2$, which implies that $|N_2| \geq |V(R_2)| \geq 5$, by Claim~A.  By
Corollary~\ref{cor:smallstrong}, we note that $d_H(x_1)+d_H(y_1) \leq
6$, which implies the following,

{\small
\[
6 \geq |N_H^+(x_1) \cap V(R_2)| + |N_H^-(x_1) \cap V(R_2)|  + |N_H^+(y_1) \cap V(R_2)|  + |N_H^-(y_1) \cap V(R_2)|  \geq |N_2| \geq 5
\]
}

First consider the case when $|N_H^+(x_1) \cap V(R_2)| \geq 2$. By the
construction of $H$ we note that $|N_H^+(x_1) \cap V(R_2)|=2$, so let
$N_H^+(x_1) \cap V(R_2)=\{x_2,y_2\}$. By
Corollary~\ref{cor:smallstrong}, $x_2$ and $y_2$ are non-adjacent in
$D$.  As $\alpha(D)=2$ we note that either $x_2$ or $y_2$ has to be
adjacent to $y_1$.  Without loss of generality assume that $y_2$ is
adjacent to $y_1$.  As $y_2$ is adjacent to both $x_1$ and $y_1$ in
$D$ we note that we must have $d_H(x_1)+d_H(y_1) = 6$\JBJ{, as $H$ satisfies one of (B2), (B3) in  Corollary~\ref{cor:smallstrong}} (and
$|N_2|=|V(R_2)|=5$).

If $|N_H^-(x_1) \cap V(R_2)| \geq 2$ or $|N_H^+(y_1) \cap V(R_2)| \geq
2$ or $|N_H^-(y_1) \cap V(R_2)| \geq 2$ then we analogously can show
that $d_H(x_1)+d_H(y_1) = 6$ and there exists two non-adjacent
vertices $x_2$ and $y_2$ in $R_2$.

If we had considered $\{x_2,y_2\}$ instead of $\{x_1,y_1\}$ then we
would analogously have obtain $d_H(x_2)+d_H(y_2) = 6$.  However, by
part~(B) in Corollary~\ref{cor:smallstrong} we note that it is not
possible to have vertex-disjoint sets, $\{x_1,y_1\}$ and
$\{x_2,y_2\}$, such that $d_H(x_1)+d_H(y_1) = 6 = d_H(x_2)+d_H(y_2)$.
Therefore $R_1$ is semicomplete.  Analogously we can show that $R_2$
is also a semicomplete digraphs.

Let $z \in V(D)$ be arbitrary.  For the sake of contradiction assume
that $|N_H^+(z) \cap V(R_1)| \geq 2$.  By
Corollary~\ref{cor:smallstrong} we must have $|N_H^+(z) \cap V(R_1)| =
2$, so let $N_H^+(z) \cap V(R_1) = \{x_1,y_1\}$.  By
Corollary~\ref{cor:smallstrong} $x_1$ and $y_1$ are non-adjacent in
$D$. This contradicts the fact that $R_1$ is semicomplete. Therefore
$|N_H^+(z) \cap V(R_1)| \leq 1$.  Analogously $|N_H^+(z) \cap
R_2|,|N_H^-(z) \cap R_1|,|N_H^-(z) \cap R_2| \leq 1$, which completes
the proof of Claim~B.

\2

{\bf Claim C:} {\em Let $y \in V(D) \setminus (V(R_1) \cup V(R_2))$ be arbitrary. If $y$ has at most one arc entering it
from $V(R_1)$ in $D$, then $y$ is adjacent to all vertices in $V(R_2)$ and furthermore has at least four arcs entering it from $V(R_2)$.

Analogously, if $y$ has at most one arc entering it
from $V(R_2)$ in $D$, then $y$ is adjacent to all vertices in $V(R_1)$ and furthermore has at least four arcs entering it from $V(R_1)$.

The above implies that every vertex in $V(D) \setminus (V(R_1) \cup V(R_2))$ has at least four arcs entering it from
$V(R_1) \cup V(R_2)$ in $D$.}

\2

{\bf Proof of Claim~C:} Assume that $y \in V(D) \setminus (V(R_1) \cup
V(R_2))$ and $y$ has at most one arc entering it from $V(R_1)$ in
$D$. For the sake of contradiction assume that there exists $r_2 \in
V(R_2)$ which is non-adjacent to $y$ (in $D$).  By Claim~B, we note
that $|N_H^+(r_2) \cap V(R_1)|,|N_H^-(r_2) \cap V(R_1)|,|N_H^+(y) \cap
V(R_1)|,|N_H^-(y) \cap V(R_1)| \leq 1$.  As $R_1'$ and $R_2'$ are
initial strong components in $D'$, and therefore all arcs between
$r_2$ and $R_1$ in $D$ belong to $H$, we note that $r_2$ is adjacent
to at most two vertices in $R_1$ (in $D$).  As $y$ has at most one arc
entering it from $V(R_1)$ in $D$ and at most one arc from $y$ to $R_1$
in $H$ (and therefore also in $D$, as $R_1'$ is an initial strong
components in $D'$), we note that $y$ is adjacent to at most two
vertices in $R_1$ (in $D$).  As $|V(R_1)| \geq 5$, by Claim~A, this
implies that there exists a $r_1 \in V(R_1)$ which is not adjacent to
$r_2$ or $y$ in $D$, contradicting $\alpha(D)=2$. Therefore $y$ is
adjacent to every vertex in $R_2$.

As $R_2'$ was an initial strong component in $D'$ and $y \not\in
V(R_2')$ we note that every arc from $y$ to $R_2$ in $D$ belongs to
$H$. By Claim~A and Claim~B, we note that $|N_H^+(y) \cap V(R_2)| \leq
1$ and $|V(R_2)| \geq 5$, which implies that there are at least four
arcs from $V(R_2)$ to $y$ in $D$.  This completes the first part of
the proof of Claim~C.  The second part is proven analogously (by
swapping the names of $R_1$ and $R_2$).

Let $x \in V(D) \setminus (V(R_1) \cup V(R_2))$ be arbitrary.  If $x$
has less than two arcs entering it from $R_i$ then it has four arcs
entering it from $R_{3-i}$ ($i \in [2]$). And if $x$ has at least two
arcs entering it from both $R_1$ and from $R_2$, then it also has at
least four arcs entering it from $V(R_1) \cup V(R_2)$.  This completes
the proof of Claim~C.

\2

{\bf Construction of $R_1^*$.}  Initially let $R_1^*=R_1$. Now for
every $u \in V(D) \setminus (V(R_1^*) \cup V(R_2))$ with at least one
arc into $R_1^*$ in $D$ and at most one arc from $R_2$ to $u$, add $u$
to $R_1^*$.  Continue this process until no further vertex can be
added.



\begin{figure}[H]
\begin{center}
\tikzstyle{vertexB}=[circle,draw, minimum size=14pt, scale=0.6, inner sep=0.5pt]
\tikzstyle{vertexR}=[circle,draw, color=red!100, minimum size=14pt, scale=0.6, inner sep=0.5pt]

\begin{tikzpicture}[scale=0.6]]
  \draw (2,5) rectangle (6,8);
  \node () at (4,6.5) {$R_1$};

  \draw (3,0) rectangle (13,2);
  \node () at (8,1) {$R_2$};

  \node (v1) at (7,7) [vertexB] {$v_1$};
  \node (v2) at (9,7) [vertexB] {$v_2$};
  \node (v3) at (12,7) [vertexB] {$v_l$};

  \draw[dotted] (1.7,4.7) rectangle (8,8.3);
  \draw[dotted] (1.4,4.4) rectangle (11,8.6);
  \draw (1.0,4.0) rectangle (13,9);
  \node () at (14,6) {$R_1^*$};
  \node () at (10,6) {$\cdots$};

  \draw[->, dotted, line width=0.03cm] (7,2) to (v1);
  \draw[->, dotted, line width=0.03cm] (9,2) to (v2);
  \draw[->, dotted, line width=0.03cm] (11,2) to (v3);

  \node () at (19,3) {At most one arc from $R_2$ to $v_i$ for each $i$};

  \draw[->, line width=0.02cm] (13,3) to (12,3);

  \draw[->, line width=0.03cm] (v1) to [out=150,in=0] (5.5,7.4);
  \draw[->, line width=0.03cm] (v2) to [out=150,in=0] (7.5,7.4);
  \draw[->, line width=0.03cm] (v3) to [out=150,in=0] (10.5,7.4);

  \draw[->, line width=0.02cm] (6,5.65) to (v1);
  \draw[->, line width=0.02cm] (6,5.5) to (v1);
  \draw[->, line width=0.02cm] (6,5.35) to (v1);
  \draw[->, line width=0.02cm] (6,5.2) to (v1);

  \draw[->, line width=0.02cm] (6,5.65) to (v2);
  \draw[->, line width=0.02cm] (6,5.5) to (v2);
  \draw[->, line width=0.02cm] (6,5.35) to (v2);
  \draw[->, line width=0.02cm] (6,5.2) to (v2);

  \draw[->, line width=0.02cm] (6,5.65) to (v3);
  \draw[->, line width=0.02cm] (6,5.5) to (v3);
  \draw[->, line width=0.02cm] (6,5.35) to (v3);
  \draw[->, line width=0.02cm] (6,5.2) to (v3);
  \end{tikzpicture}
  \caption{An illustration of the construction of $R_1^*$, where $R_1^*$ is constructed from $R_1$ by adding the 
vertices $v_1,v_2,\ldots,v_l$ in that order. Every $v_i$ has at least one arc into $R_1 \cup \{v_1,v_2,\ldots,v_{i-1}\}$.
By Claim~C there are at least four arcs from $R_1$ into $v_i$ for each $i \in [l]$.} \label{fig:Rstar}
\end{center}
\end{figure}
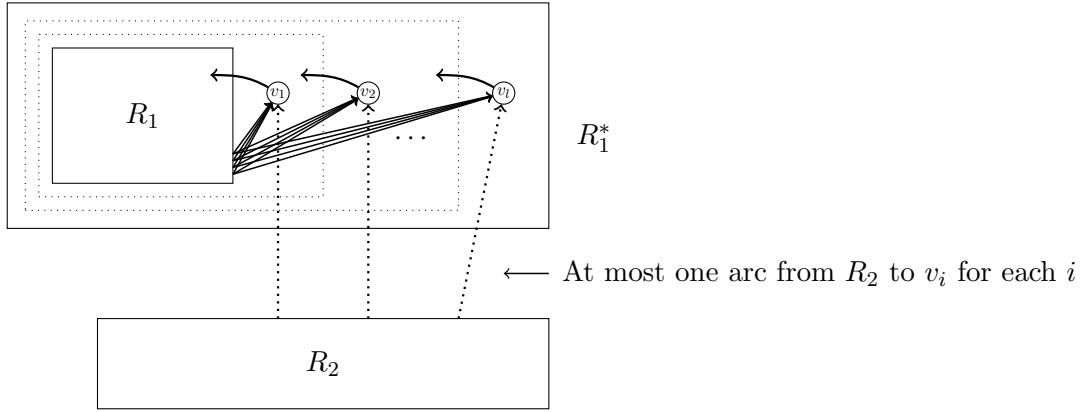

\2

{\bf Claim D:} {\em $R_1^*$ is a strong semicomplete digraph.}

\2

{\bf Proof of Claim~D:} Let $Q = V(R_1^*) \setminus V(R_1)$.  That is,
$Q$ denotes the set of vertices added to $R_1$ in the construction of
$R^*_1$. By construction, when it was added to the current $R^*_1$
each such vertex had at least one arc into the current set $V(R_1^*)$
and at most one arc entering it from $V(R_2)$ in $D$. We will first
show that $R_1^*$ is semicomplete.  Assume for the sake of
contradiction that $q_1,q_2 \in V(R_1^*)$ are non-adjacent in $D$.  By
Claim~B we note that $q_1$ and $q_2$ cannot both belong to $V(R_1)$.
By part 2 of Claim~C we note that we cannot have $q_1 \in V(R_1)$ and
$q_2 \in Q$ (or vice versa), which implies that we must have $q_1, q_2
\in Q$. This implies that $q_1$ and $q_2$ both have at most one arc
into them from $V(R_2)$.  By Claim~B, $|N_H^+(q_1) \cap
V(R_2)|,|N_H^+(q_2) \cap V(R_2)| \leq 1$, which implies that
$|N_D^+(q_1) \cap V(R_2)|,|N_D^+(q_2) \cap V(R_2)| \leq 1$.  Therefore
each of $q_1$ and $q_2$ are adjacent to at most two vertices in $R_2$.
As $|V(R_2)| \geq 5$, there is a vertex $r_2 \in V(R_2)$ which is
non-adjacent to both $q_1$ and $q_2$ in $D$, contradicting that
$\alpha(D) \leq 2$.  Therefore $R_1^*$ is semicomplete.
 
As $R_1$ is strongly connected and every vertex we add in the process
of building $R_1^*$ has an arc into the current set $R_1^*$ and an arc
(actually at least 4 arcs) entering it from $R_1$ (and $R_1 \subseteq
R_1^*$), by Claim~C, we note that the current set $R_1^*$ is strongly
connected in every step of the construction.  Therefore the final
$R_1^*$ is also strongly connected.  This completes the proof of
Claim~D.

\2

{\bf Definitions.} By Claim~A and D and Proposition~\ref{prop:nice} we
note that $R_1^*$ has a nice decomposition $(S_1, \dots , S_p)$.  If
$p \geq 2$ then there is only one arc entering $S_1$ in $R_1^*$ so let
$uv$ be an arc entering $S_1$ in $D$, which does not belong to
$R_1^*$. Such an arc exists as $D$ is $2$-arc-strong.  If $p=1$ then
$R_1^*$ is $2$-arc-strong. In this case let $uv$ be any arc entering
$R_1^*$ in $D$. Let $D^* = D' - uv$ (that is, delete the arc $uv$ from $D'$).

\2

{\bf Claim E:} {\em There exists an out-branching $B_1^+$ in $R_1^*$
  rooted at $v$, such that $R_1^* - A(B_1^+)$ is strongly connected.

Furthermore there exists an out-branching $B_2^+$ in $R_2$, such that 
$R_2 - A(B_2^+)$ is strongly connected. }

\2

{\bf Proof of Claim~E:} As $R_1^*$ is strongly connected by Claim~D,
We can apply Theorem~\ref{thm:SDnonsepB} to $R_1^*$.  By Claim~A we
note that $|V(R_1^*)| \geq 5$ and therefore $R_1^*$ is not isomorphic
to $W_1$.

Also, by Claim~A we note that all vertices of $R_1$, except possibly
one, have in-degree at least two in $R_1$.  As every vertex we add to
$R_1$ in the construction of $R_1^*$ have in-degree at least four
(from $R_1$, by Claim~C), we note that all vertices of $R_1^*$, except
possibly one, have in-degree at least two in $R_1^*$.  Therefore we
are in case (c)-(e) of Theorem~\ref{thm:SDnonsepB}.

Note that if there is a vertex of in-degree one in $R_1^*$, then $v$
is that vertex. Furthermore $v \in S_1$ (where $S_1$ was defined just
above Claim~E, as part of the nice decomposition on $R_1^*$).
Therefore by Theorem~\ref{thm:SDnonsepB} we note that $R_1^*$ contains
a non-separating out-branching rooted at $v$, which completes the
first part of the proof of Claim~E.

The second part of Claim~E, follows analogously, by Theorem~\ref{thm:SDnonsepB}.

\2

{\bf Completion of the proof.} Let $B^+= B_1^+ \cup B_2^+$ (defined in Claim~E) and let $Q = (R_1^* - A(B_1^+)) \cup (R_2 - A(B_2^+))$ 
and let $V^* = V(R_1^*) \cup V(R_2)$.
Note that $B^+$ consists of two vertex-disjoint out-trees (whose union span $V^*$) and $Q$ of two vertex-disjoint strong components 
(whose union also span $V^*$). 
Let $P_{12}$ be a path from $V(R_1^*)$ to $V(R_2)$ in $D^*$
and let $P_{21}$ be a path from $V(R_2)$ to $V(R_1^*)$ in $D^*$
Add the arcs of $P_{12}$ and $P_{21}$ to $Q$. For every vertex $p \in (V(P_{12}) \cup V(P_{21})) \setminus V^*$ do the following.
Add an arc, which is not in $A(P_{12}) \cup A(P_{21})$,  from $V(R_1^*) \cup V(R_2)$ to $p$ to $B^+$. 
This  is possible by Claim~C 
and the fact that there are at most two arcs in $A(P_{12}) \cup A(P_{21})$ leaving $V(R_1^*) \cup V(R_2)$ (at most one leaves $V(R_1^*)$ and 
at most one leaves $V(R_2)$).
Furthermore if $p = u$ (recall that the arc $uv$ was defined above Claim~E), then make sure the added arc leaves $V(R_2)$ (and not $V(R_1^*)$),
which is possible as $u$ was not added to $R_1^*$ and therefore has at least two arcs into it from $V(R_2)$ (here we used that the arc $uv$ enters $R^*_1$). 
Finally add $p$ to $V^*$.

When this process is completed $V^*=V(R^*_1)\cup V(R_2)\cup V(P_{12})\cup V(P_{21})$,
$Q$ induces a strong subgraph on the vertex set $V^*$ and $B^+$ still consist of two out-trees also spanning $V^*$, one of which is rooted at
$v$. Furthermore the arc $uv$ is not used above and if $u \in V^*$ then it belongs to the out-tree not rooted at $v$. We now add the remaining vertices
as follows.  While $V^* \not= V(D)$ let $P'$ be any $(V^*,V^*)$-path in $D^*$ with at least one internal vertex. We can construct $P'$ by letting 
$p_0p_1$ be any arc out of $V^*$ in $D^*$ and then taking any path from $p_1$ back to $V^*$ in $D^*$. Now add $A(P')$ to $Q$ and for every
vertex $p \in V(P') \setminus V^*$ do the following (analogously to above).
Add an arc, which is not in $A(P')$,  from $V(R_1) \cup V(R_2)$ to $p$ to $B^+$, which is possible by Claim~C and the 
fact that there is at most one arc in $A(P')$ leaving $V(R_1) \cup V(R_2)$.
Furthermore if $p = u$ (recall that the arc $uv$ was defined above Claim~E), then make sure the added arc leaves $V(R_2)$ (and not $V(R_1^*)$),
which is possible as $u$ was not added to $R_1^*$ and therefore has at least two arcs into it from $V(R_2)$ (here we used that the arc $uv$ enters $R^*_1$).
Finally add $p$ to $V^*$.

We continue the above process until $V^*=V(D)$. Now $Q$ is a strong spanning subgraph of $D$, which does not include any arcs from $B^+$ and also does
not include the arc $uv$. $B^+$ consists of two out-trees, one rooted at $v$ and $u$ belonging to the out-tree which was not rooted at $v$. Therefore by adding the 
arc $uv$ to $B^+$ we obtain a spanning out-branching of $D$ which is arc-disjoint to $Q$, thereby completing the proof.
\end{proof}


  \section{Non separating spanning trees in digraphs with independence number 2}\label{sec:nonsepT}

\jbj{Let us recall that for a subdigraph $H$ of $D$, we denote by
$D-H$ the sudigraph of $D$ obtained from removing the
vertices of $H$ from $D$, that is $D-H=D[V(D)\setminus V(H)]$}.
\AY{Furthermore we denote by $D\setminus A(H)$ the sudigraph of $D$ obtained from
removing the arcs of $H$ from $D$, that is $V(D-H)=V(D)$ and
$A(D-H)=A(D)\setminus A(H)$.}

A spanning tree $T$ of a connected digraph $D$ is {\bf safe} if for
every pair of distinct vertices $x$ and $y$ of $D$, there exists an oriented path from
$x$ to $y$ in $D$ if and only if there exists also an oriented path
from $x$ to $y$ in \AY{$D \setminus A(T)$}. In particular, a safe spanning tree of a
strong digraph is a non separating spanning tree.

 At several places,
we use the following fact. Assume that $H$ is an induced subdigraph of
$D$ such that $H$ admits a safe spanning tree $T$, and assume also
that there exist $u,v\in H$ and $x\in D\setminus H$ such that $ux,vx
\in A(D)$ and that there exist a path from $u$ to $v$ in $H$. Then
$D[V(H)\cup x]$ admits the safe spanning tree $T+ux$. Indeed, there
exists also a path from $u$ to $v$ in \AY{$H \setminus A(T)$} and thus a path from $u$
to $x$ in \AY{$D[V(H)\cup x]\setminus (A(T)\cup \{ux\})$}.

First, we \jbj{derive some results on} safe spanning trees of
semicomplete digraphs.

\begin{lemma}
\label{lem:safe-st}
Every semicomplete digraph on at least five vertices admits a safe
spanning tree.
\end{lemma}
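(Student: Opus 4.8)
The plan is to induct on $n=|V(D)|$, splitting into the strong and the non-strong case and using the gluing observation stated just before the lemma for the inductive step.

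First the strong case, which I would dispatch for every $n\ge 5$ at once. Since $D$ is strong and semicomplete, Theorem~\ref{thm:camion} gives a Hamiltonian cycle $H$. Because $D$ is semicomplete, $UG(D)=K_n$, so $UG(D\setminus A(H))$ is obtained from $K_n$ by deleting at most the $n$ edges underlying $H$ (an edge survives whenever its pair carries a $2$-cycle in $D$). For $n\ge 5$ the graph $K_n\setminus C_n$ is connected ($K_5\setminus C_5$ is again a $C_5$, and for $n\ge 6$ its minimum degree $n-3\ge n/2$ forces connectivity), hence $UG(D\setminus A(H))$ is connected. I would then choose a spanning tree $T$ of it, realized by arcs of $D$ that avoid $A(H)$ (one arc per tree edge). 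Since $A(T)\cap A(H)=\emptyset$ we get $A(H)\subseteq A(D\setminus A(T))$, so $D\setminus A(T)$ is strong; as $D$ is strong this says exactly that $T$ is safe, a safe spanning tree of a strong digraph being a non-separating one.

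For the non-strong case I would argue by induction on $n$. For the inductive step assume $n\ge 6$. In a semicomplete digraph at most three vertices have in-degree at most $1$ (four such vertices would span at least $\binom{4}{2}=6$ arcs among themselves, forcing one of them to have in-degree $\ge 2$), so some vertex $x$ has $d_D^-(x)\ge 2$; pick two in-neighbours $u,v$ of $x$. The digraph $H:=D-x$ is semicomplete on $n-1\ge 5$ vertices, hence admits a safe spanning tree $T'$ (by the strong case if $H$ is strong, otherwise by the induction hypothesis). Since $H$ is semicomplete, $u$ and $v$ are adjacent in $H$; relabelling if necessary we may assume $uv\in A(H)$, which is a $u$--$v$ path in $H$. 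As $ux,vx\in A(D)$, the gluing observation yields that $T'+ux$ is a safe spanning tree of $D[V(H)\cup x]=D$, completing the step. Note this step never uses strongness, so it covers large strong digraphs as well.

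It remains to settle the base case $n=5$ of the non-strong induction, which I expect to be the main obstacle. Here I would work directly with the strong-component order $D_1\to\cdots\to D_m$ ($m\ge 2$): keep a Hamiltonian cycle of each non-trivial component (Theorem~\ref{thm:camion}) together with one forward arc between each consecutive pair of components, obtaining a subdigraph $S$ that preserves every reachability of $D$ (each component stays strong and the forward chain preserves all cross-component reachabilities). One then checks that the arcs of $D$ outside $S$ have a connected underlying graph and takes $T$ to be a spanning tree among them, so that $D\setminus A(T)\supseteq S$ is reachability-preserving and $T$ is safe. The delicate point is this last connectivity check: a component of size $2$, $3$ or $4$ loses all of its internal edges to its Hamiltonian cycle (for instance $K_4$ minus a $4$-cycle is a perfect matching), so connectivity must be recovered through the abundant forward and backward arcs incident with the remaining vertices; for $n=5$ this is a short finite verification. (The same $S$-construction in fact gives a direct, induction-free proof of the non-strong case for all $n\ge 5$, should one prefer to avoid the base-case analysis, but then the connectivity bookkeeping becomes heavier.)
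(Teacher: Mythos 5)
Your strong case is correct and is essentially the paper's own argument (a spanning tree in the complement of a Hamiltonian cycle), and your inductive step for $n\ge 6$ is also correct: at most three vertices of a semicomplete digraph have in-degree at most one, the gluing observation stated just before the lemma is legitimately available and valid, and deleting a vertex $x$ with $d_D^-(x)\ge 2$ and re-attaching it via $T'+ux$ is a genuinely different (and cleaner) route than the paper's, which instead treats every non-strong semicomplete digraph directly through its strong decomposition. But this means your entire proof rests on the base case ($n=5$, $D$ not strong), and there you have a genuine gap: you never carry out the ``short finite verification'', and the claim you propose to verify is false as stated.

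Concretely, whether the arcs of $D$ outside $S$ form a connected underlying graph depends on \emph{how} the forward arcs of $S$ are chosen, a point your sketch ignores. Take $D$ to be the $5$-vertex tournament with strong components $C_1=\{x_1\}$, $C_2=\{a,b,c\}$ inducing the $3$-cycle $a\to b\to c\to a$, and $C_3=\{x_3\}$, so that $x_1$ dominates all other vertices and all other vertices dominate $x_3$. Your $S$ consists of the arcs $ab,bc,ca$ plus one arc from $C_1$ to $C_2$ and one arc from $C_2$ to $C_3$; if these are chosen as $x_1a$ and $ax_3$, then all four arcs of $D$ incident with $a$ lie in $A(S)$, so $a$ is isolated in $D\setminus A(S)$ and no spanning tree avoiding $A(S)$ exists. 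The construction only works if the two forward arcs meeting a middle component are chosen incident with distinct vertices of that component --- exactly the precaution the paper takes in its $t=3$ case with the large component in the middle, where it deletes $x_1x_2$ and $y_2x_3$ with $y_2\neq x_2$. (A side remark: between distinct strong components of a semicomplete digraph all arcs are forward, so there are no ``backward arcs'' to recover connectivity with.) With this extra idea the base case does go through by a finite check over the possible component structures on five vertices, but as written your proof of the lemma is incomplete, and the deferred verification, attempted with an arbitrary choice of $S$, would fail.
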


\begin{proof}
Let $D$ be a semicomplete \AY{digraph} on at least five vertices. If $D$ is strong,
as $D$ contains at least five vertices, then we find a spanning tree
in the complement of a Hamiltonian cycle of $D$. This spanning tree is
clearly safe.

 So, assume that $D$ is not strong and denote by $C_1,
C_2, \dots ,C_t$ the strongly connected components of $D$ such that
there is no arc from $C_i$ to $C_j$ if $i>j$. We denote by $K$ the
subdigraph of $D$ containing the vertices $V(D)$ and the set of arcs
of $D$ connecting its strong components (that is the arcs $uv$ with
$u\in C_i$, $v\in C_j$ and $i\neq j$). Moreover, for every $i=1,\dots
,t$ let $x_i$ be a vertex of $C_i$ and $P$ be the path $x_1\dots
x_t$. If there exists a spanning tree $T$ of $D$ \jbj{all of whose
  arcs are in the subdigraph $K'=K\setminus A(P)$, then $T$ is a safe
  spanning tree of $D$}. Thus we have to check that \jbj{$K'$} is a
connected subdigraph of $D$.

 First, if there exist $i$ and $j$ such
that $|C_i|\ge 2$ and $|C_j|\ge 2$, then we pick a vertex $y_i$ in
$C_i$ different from $x_i$ and a vertex $y_j$ in $C_j$ different from
$x_j$. In \jbj{$K'$}, every vertex not in $C_i$ is adjacent to $y_i$
and every vertex in $C_i$ is adjacent to $y_j$. So, \jbj{$K'$} is
connected in this case.

 Moreover, if $t\ge 4$ then in \jbj{$K'$}
every vertex of $D\setminus C_1\cup C_2$ is adjacent to $x_1$ and
every vertex of $D\setminus C_{t-1}\cup C_t$ is adjacent to $x_t$. So,
\jbj{$K'$} is connected.

 Thus we may assume that $t\le 3$ and that
there exists $i_0\in \{1,\dots ,t\}$ such that for every $i\neq i_0$,
the component $C_i$ has size 1 exactly. If $t=3$, then, as $D$
contains at least 5 vertices, we have $|V(C_{i_0})|\ge 3$.  If
$i_0=1$, then \jbj{$K'$} contains all the arcs from \jbj{$V(C_1)-x_1$}
to $\{x_2,x_3\}$ and the arc $x_1x_3$. So \jbj{$K'$} is connected. The
case $i_0=3$ is symmetrical. 

\AY{So we may assume that $i_0=2$. Let $y_2 \in V(C_2) \setminus \{x_2\}$ 
be arbitrary and let $K^* = K \setminus \{x_1x_2,y_2x_3\}$. As every vertex 
in $V(C_2)$ is adjacent to $x_1$ or $x_3$ in $K^*$ and $x_1x_3 \in A(K^*)$ we
note that $K^*$ is a spanning connected subgraph of $D$ and we can therefore
in $K^*$ find a safe spanning tree of $D$.}

Finally, if $t=2$, by symmetry again we can assume that $i_0=1$. We have
$|V(C_1)|\ge 4$ and so $C_1$ contains an arc $xy$ such that
\jbj{$C_1\setminus xy$} is strongly connected. 
\AY{In this case} the
arcs from $V(C_1)\setminus \{x\}$ to $x_2$ plus the arc $xy$ form the
arcs of a safe spanning tree of $D$.
\end{proof}





 The following claim will be useful in the proof of the main theorem of
the section.

\begin{claim}
\label{claim:tree-extension}
Let $D$ be a digraph with $\lambda(D)\ge 2$. If $D$ contains a tree
$T$ such that \jbj{$D\setminus A(T)$} is strongly connected and
$V(D)\setminus V(T)$ has size at most two and \jbj{ induces a
  semicomplete digraph}, then $D$ admits a non-separating spanning
tree.
\end{claim}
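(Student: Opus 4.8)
The plan is to reformulate the statement. Write $G:=D\setminus A(T)$, which is strong and spanning by hypothesis, and $W:=V(D)\setminus V(T)$; I want a spanning tree $T'$ of $D$ with $D\setminus A(T')$ strong, equivalently a strong spanning subdigraph $S$ of $D$ whose complement $A(D)\setminus A(S)$ is connected and spanning in the underlying graph (any spanning tree inside it then works). Since $A(T)\subseteq A(D[V(T)])$, every arc incident with $W$ lies in $G$, and as $\lambda(D)\ge 2$ each vertex of $W$ has at least two in-arcs and at least two out-arcs, all in $G$. If $W=\emptyset$ then $T'=T$ works, so assume $W\neq\emptyset$.

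The engine is a single-vertex extension mirroring the fact stated before Lemma~\ref{lem:safe-st}: if $w\notin V(T)$ has two in-neighbours $u,v$ in $G$ and a path from $u$ to $v$ in $G-w$, then $uw$ is not a cut-arc of $G$ (a set certifying $uw$ as its unique leaving arc would contain $u$ but neither $v$ nor $w$, so the $u\to v$ path, avoiding $w$, would have to leave it through another arc), hence $G-uw$ is strong and $T+uw$ is a larger tree with strong complement. In particular, when $G-w$ is strong this always applies, which settles the generic situation; the same works symmetrically for out-arcs.

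The crux is the case where $G-w$ is not strong. Let $C_1,\dots,C_k$ ($k\ge 2$) be its strong components in acyclic order. In the \emph{genuinely obstructed} subcase the out-neighbours of $w$ lie in distinct initial components and the in-neighbours in distinct terminal components (otherwise two of them share a component, a path exists between them, and the single-arc criterion already applies); consequently each such component meets $w$ in exactly one arc, and since $G-w$ has no backward arc while $D$ is $2$-arc-strong, a backward arc of $T$ is forced into each relevant initial component and out of each terminal one. I exploit this by a local exchange: choose an in-arc $u_1w$ and an out-arc $wu_2$ with $u_1$ in a terminal and $u_2$ in an initial component, let $f$ be the arc of $T$ on the path of $T$ joining $u_1$ and $u_2$, and set
\[
T'=(A(T)\setminus\{f\})\cup\{u_1w,\,wu_2\}.
\]
Adding $u_1w,wu_2$ to $T$ creates one cycle through $w$, and deleting $f$ from that cycle leaves a connected spanning subgraph with $|V(D)|-1$ edges, hence a spanning tree, whose complement is $D\setminus A(T')=G-u_1w-wu_2+f$. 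The whole point is that the forced backward tree arc $f$ restores the terminal-to-initial connection previously routed only through $w$, so this digraph is again strong. Proving that $u_1,u_2,f$ can always be chosen so that $G-u_1w-wu_2+f$ is strong, using that every cut of $D$ carries at least two arcs, is the main obstacle; the degenerate case $u_1=u_2$ (a $2$-cycle $u\leftrightarrow w$) is handled separately by attaching $w$ with a single arc.

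Finally, for $|W|=2$, say $W=\{w_1,w_2\}$, I attach the two vertices successively. First attach $w_1$ to $T$ by the single-vertex procedure above, treating $w_2$ as an ordinary untouched vertex, so strong connectivity of the complement on all of $V(D)$ is preserved; this yields a tree $T_1$ on $V(T)\cup\{w_1\}$ with $D\setminus A(T_1)$ strong (the edge count $|V(T)|$ on $|V(T)|+1$ vertices again gives a tree). Now $V(D)\setminus V(T_1)=\{w_2\}$ induces a (trivially) semicomplete digraph, so a second application of the single-vertex procedure produces the desired spanning tree. The hypothesis that $D[W]$ is semicomplete is used to guarantee the adjacency between $w_1$ and $w_2$, which supplies the in-neighbour and path conditions needed at the second step and prevents $w_1,w_2$ from acting as an obstructing non-adjacent pair during the extension.
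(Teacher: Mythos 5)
Your single--vertex extension lemma is correct: if $w\notin V(T)$ has two in-neighbours $u,v$ in $G=D\setminus A(T)$ and $G-w$ contains a $u\to v$ path, then $uw$ is not a cut-arc of $G$, so $T+uw$ works (and symmetrically for out-arcs). This does dispose of the generic case. But the proof stops exactly where the claim needs proving: in your ``genuinely obstructed'' case you only \emph{describe} the exchange $T'=(A(T)\setminus\{f\})\cup\{u_1w,wu_2\}$ and then state that showing $G-u_1w-wu_2+f$ can be made strong ``is the main obstacle''; no argument is given. This is a genuine gap, not a presentational one, and worse, the proposed exchange cannot always succeed. In the obstructed case $u_1w$ is the \emph{unique} arc of $G$ leaving the terminal component $Z$ containing $u_1$, and $wu_2$ is the \emph{unique} arc of $G$ entering the initial component $I$ containing $u_2$; hence strongness of $G-u_1w-wu_2+f$ forces $f$ to be a $T$-arc with tail in $Z$ and head in $I$, while tree-ness of $T'$ forces $f$ to lie on the $T$-path from $u_1$ to $u_2$. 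These two requirements can fail simultaneously for \emph{every} admissible pair $(u_1,u_2)$.

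Here is a concrete witness. Let $G$ consist of a vertex $w$, four directed triangles $A=a_1a_2a_3a_1$, $B=b_1b_2b_3b_1$, $C=c_1c_2c_3c_1$, $F=d_1d_2d_3d_1$, together with the arcs $wa_1$, $wb_1$, $c_1w$, $d_1w$, $a_2c_2$, $a_3d_2$, $b_2c_3$, $b_3d_3$; let $T$ consist of the arcs $a_1c_1$, $a_1d_1$, $b_1d_1$, $c_1a_2$, $c_2a_2$, $c_2a_3$, $c_3a_3$, $a_3b_2$, $d_2b_2$, $b_2d_3$, $d_3b_3$. One checks that $T$ is a tree on $V(D)\setminus\{w\}$ arc-disjoint from $G$, that $G$ is strong, and that $D=G\cup T$ is $2$-arc-strong, so all hypotheses of the claim hold; moreover $G-w$ has initial components $A,B$ and terminal components $C,F$, each joined to $w$ by exactly one arc, so this is your obstructed case. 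The $T$-arcs going from a terminal to an initial component are $c_1a_2,c_2a_2,c_2a_3,c_3a_3$ (into $A$) and $d_2b_2,d_3b_3$ (into $B$), but the $T$-paths between $u_1\in\{c_1,d_1\}$ and $u_2\in\{a_1,b_1\}$ use only the arcs $a_1c_1,a_1d_1,b_1d_1$, all oriented from the initial side to the terminal side; hence no choice of $(u_1,u_2,f)$ yields a tree with strong complement. Yet $D$ does have a non-separating spanning tree: the subdigraph $S$ formed by the four triangles and the arcs $d_1w,wb_1,b_2c_3,c_2a_2,a_3d_2$ is strong, and $D\setminus A(S)$ is connected and spanning, so any spanning tree inside it works --- but finding it requires rebuilding the tree globally, not a single local swap. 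The paper's proof proceeds quite differently (no exchange at all): it works with a terminal strong component of $D[V(T)]\setminus A(T)$, adds arcs from it into $W$ directly, and uses the semicompleteness of $D[W]$ to orient the two added arcs when $|W|=2$. So your proposal is neither the paper's argument nor a complete alternative; the case it leaves open is precisely the case where the claim has content.
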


\begin{proof}
Let $D$ and $T$ as stated, and let $C$ be a terminal strong
component of \jbj{$D[V(T)]\setminus A(T)$}.

\jbj{Suppose first that $V(D)\setminus V(T)=\{x\}$}. As $\lambda(D)\ge 2$, there are  at
least two arcs from $C$ to $x$. Let be $u$ an in-neighbour of $x$ in
$C$. The digraph \AY{$D \setminus (A(T) \cup \{ux\})$} is still strong and so $T+ux$ is a
non-separating spanning tree of $D$.

 Now, assume that \jbj{$V(D)\setminus V(T)=\{x,y\}$}.
 If there are  two
arcs $ux$ and $vx$ from $C$ to $x$, then $T'=T+ux$ is a tree on $n-1$
vertices such that \jbj{$D\setminus A(T')$} is strongly connected and  we can conclude from the  previous case that $D$ admits a separating spanning tree. So,
as $\lambda(D)\ge 2$, there are at least two arcs from $C$ to $\{x,y\}$
and we can assume that one, say $ux$, has head $x$ and the other, say
$vy$, has head $y$. Similarly, \jbj{we can assume by the previous case}, that if $C'$ is an initial strong component
of \jbj{$D[V(T)]\setminus A(T)$}, then there exist two arcs $xu'$ and $yv'$ with
$u',v'\in C'$. As \jbj{$D[\{x,y\}]$} is semicomplete, we can assume
without loss of generality that $xy$ is an arc of $D$. Now it is easy to  check
that \jbj{$D\setminus (A(T)\cup\{vy,xu'\})$} is strongly connected and that $D$ admits the
non-separating spanning tree $T+vy+xu'$. 
\end{proof}

Now we can prove the following.

\begin{theorem}
  \label{thm:nonsep14}
Every digraph $D=(V,A)$  with $\alpha(D)\le 2
\le \lambda(D)$ \jbj{such that $D$ contains a semicomplete digraph on at least 5 vertices has a non separating spanning tree. In particular, every digraph $D=(V,A)$  with $\alpha(D)\le 2
\le \lambda(D)$ such that $|V|\geq 14$ has a non-separating spanning tree.}
\end{theorem}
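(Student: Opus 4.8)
The plan is to prove the first (stronger) assertion and then deduce the numerical statement from Ramsey's theorem. Since $\lambda(D)\ge 2$ the digraph $D$ is strong, so a spanning tree $T$ with $D\setminus A(T)$ strong is exactly a safe spanning tree of $D$; thus it suffices to construct a safe spanning tree. For the ``in particular'' part, observe that $\alpha(D)\le 2$ means the underlying graph $UG(D)$ has no independent set of size $3$; since the Ramsey number $R(3,5)=14$, any such $D$ on at least $14$ vertices contains $5$ pairwise adjacent vertices, i.e.\ an induced semicomplete subdigraph on $5$ vertices, which reduces the second statement to the first.

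For the first statement I would build the safe spanning tree by a growth process. Let $W$ with $|W|\ge 5$ induce a semicomplete subdigraph; by Lemma~\ref{lem:safe-st}, $D[W]$ has a safe spanning tree. I would then take a \emph{maximal} induced subdigraph $H$ with $W\subseteq V(H)$ that admits a safe spanning tree $T$, and argue that $V(H)=V$, so that $T$ is the required tree. The engine is the extension observation recorded right after the definition of a safe spanning tree: if some $x\in V\setminus V(H)$ has two in-neighbours $u,v\in V(H)$ with a directed $u$--$v$ path in $H$, then $D[V(H)\cup\{x\}]$ admits a safe spanning tree, contradicting maximality. Equivalently, $x$ is ``addable'' whenever it has two in-neighbours lying in strong components of $H$ that are comparable in the condensation of $H$ (in particular two in-neighbours inside one strong component).

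The key structural input is a consequence of $\alpha(D)\le 2$: if two strong components of $H$ are incomparable in the condensation, then, since $H$ is induced, there is no arc between them in $D$, so all their vertices are pairwise non-adjacent. Hence three pairwise incomparable components would yield an independent transversal of size $3$, which is impossible; the condensation of $H$ therefore has no antichain of size $3$ and, in particular, at most two terminal (sink) components. I would then set $Y=V\setminus V(H)\neq\emptyset$ and pick a terminal strong component $C$ of $H$; as $C$ is terminal all its out-arcs in $D$ enter $Y$, and $\lambda(D)\ge 2$ forces at least two arcs from $C$ into $Y$. If two of them enter the same $x\in Y$ we are done, since $C$ is strong and $x$ is addable. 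Moreover, when $C$ is the unique sink every component reaches $C$, so any in-neighbour $v\in V(H)$ of such an $x$ satisfies $v\rightsquigarrow C\ni u$ and again makes $x$ addable; this forces each $x\in Y$ fed by $C$ to have a \emph{unique} in-neighbour in $H$, a configuration I would rule out (or reduce) using $\lambda(D)\ge 2$ together with the non-adjacency between incomparable components.

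The hard part is precisely this endgame: showing that maximality with $Y\neq\emptyset$ is impossible except when only very few vertices remain uncovered. My plan is to push the growth until $|Y|\le 2$ --- controlling the obstructed vertices through the width-$\le 2$ condensation structure, the total non-adjacency between incomparable components, and the two arcs guaranteed to leave each extreme component --- and then to finish with Claim~\ref{claim:tree-extension}, which upgrades a tree $T$ with $D\setminus A(T)$ strong and at most two (semicomplete, hence adjacent or single) leftover vertices into a genuine non-separating spanning tree. Balancing the in-neighbour requirement of the extension observation against the arc-directions forced by $\alpha(D)\le 2$ in this last step is where the main technical care will be needed.
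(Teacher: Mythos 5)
Your global strategy coincides with the paper's: reduce the $14$-vertex statement to a $5$-vertex semicomplete subdigraph via $R(3,5)=14$, take a maximal induced subdigraph $R$ containing it that admits a safe spanning tree $T$, constrain the outside vertices via the extension fact, and finish with Claim~\ref{claim:tree-extension}. But the part you defer as ``the endgame'' is exactly where the real content of the proof lies, and the plan you sketch for it does not go through. Your growth engine (add $x$ when it has two in-neighbours, or two out-neighbours, lying in comparable strong components of $R$) cannot in general push $Y=V\setminus V(R)$ down to size at most $2$: the engine can get stuck with $|Y|=3$ or $4$, and Claim~\ref{claim:tree-extension} accepts at most two leftover vertices. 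The paper bridges this gap not by growth but by a different construction: it adds to $T$ one carefully chosen arc $e$ from a terminal strong component $C$ of $R$ into $S=D[Y]$ --- when two arcs from $C$ enter distinct $u,v\in S$ with $uv\in A(S)$, one takes $e$ to be the arc into the head $v$, so that $D\setminus(A(T)\cup\{e\})$ stays strong because $v$ remains reachable through $u$ --- and, in the $|Y|=4$ case, additionally an arc $wv$ inside $S$ where $w$ has two out-neighbours $u,v$ with $uv\in A(S)$; only then is Claim~\ref{claim:tree-extension} applied to the remaining two vertices. Your hope to ``rule out'' the stuck configuration by further growth is misplaced: it cannot be ruled out at that level; the tree must be completed by other means.

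Moreover, you never establish the two facts that make any endgame possible: (a) $S$ induces a semicomplete digraph, and (b) $|Y|\le 4$. For (a) the paper argues that each $x\in Y$ has at most one in-neighbour and at most one out-neighbour in the $5$-vertex semicomplete $D_1$ (any two vertices of $D_1$ are adjacent, so two in-neighbours of $x$ there would make $x$ addable); hence two non-adjacent vertices of $Y$ have at most four neighbours in $D_1$ altogether, and some vertex of $D_1$ is non-adjacent to both, contradicting $\alpha(D)\le 2$. Your width-$\le 2$ condensation observation only bounds the number of neighbours of $x$ in all of $R$ by four, which is useless when $|V(R)|=5$, so even the adjacency of your final two leftover vertices (needed to invoke Claim~\ref{claim:tree-extension}) is unproved. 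For (b) one needs (a) together with Lemma~\ref{lem:safe-st}: if $|Y|\ge 5$ then $S$ has a safe spanning tree $T'$, and $T+T'+e$, with $e$ chosen among the two arcs from a terminal component of $R$ into $S$ as above, is a safe spanning tree of $D$, contradicting maximality of $R$. Without (a), (b), and the $|Y|\in\{3,4\}$ constructions, what you have is a correct setup with the actual proof missing.
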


\begin{proof}
If \jbj{$D$ is semicomplete}, then the result follows from Lemma~\ref{lem:safe-st}.
So we may assume that $\alpha{}(D)=2$.  As the Ramsey number $R(3,5)$ is
14~\cite{radziszowskiEJC94} and $\alpha(D)=2$, \jbj{it follows that if $|V|\geq 14$, then $D$ contains a semicomplete subdigraph 
on five vertices. Hence we may assume below that $D_1$ is a semicomplete digraphs of $D$ on 5 vertices}. By Lemma~\ref{lem:safe-st},
$D_1$ contains a safe spanning tree. So let $R$ be a maximal induced
subdigraph of $D$ containing $V(D_1)$ and admitting a safe spanning tree.
\jbj{We now show that  $R=D$. Suppose for a contradiction that this is not the
case}.

 Let $T$ be a safe spanning
tree of $R$ and consider  a vertex $x$ of \jbj{$S=D[V\setminus V(R)]$}. The vertex
$x$ has at most one in-neighbour in $D_1$. Indeed, otherwise, assume
that $y$ and $z$ are two in-neighbours of $x$ in $D_1$ with \jbj{$yz$} being
an arc of $D_1$ (recall that $D_1$ is semicomplete). But then, $T+yx$
would be a safe spanning tree of $D[V(R)\cup x]$, a contradiction to
the maximality of $R$. Similarly, $x$ has at most one out-neighbour in
$D_1$. So we can conclude that $S$ is a semicomplete subdigraph of
$D$. Indeed, otherwise, $S$ would contain an independent set $\{u,v\}$
of size two. But as $u$ and $v$ have each at most two neighbours in
$D_1$, there would exist in $D_1$ a vertex not \jbj{adjacent  to any of} $u$ or $v$,
contradicting $\alpha(D)=2$.

First, assume that $S$ contains a safe spanning tree $T'$ and denote
by $C$ a strong terminal component of $R$.  As $\lambda (D)\ge 2$,
there exist at least two arcs $xu$ and $yv$ from $C$ to $S$ (with
$x,y\in C$ and $u,v\in S$). If $u\neq v$, then 
there is an arc between $u$ and $v$ as $S$ is semicomplete. \jbj{Without loss of generality}
assume that $uv$ is
an arc of $S$ and let $e$ be the arc $yv$. If $u=v$ then we can choose
arbitrarily $e=xu$ or $e=yv$. \jbj{In both cases }$T+T'+e$ is a safe spanning tree
of $D$, contradicting  the maximality of $R$.

So, $S$ has no safe spanning tree and as $S$ is semicomplete, it follows from 
Lemma~\ref{lem:safe-st}, that  $|V(S)|\le 4$. We also have
$|V(S)|>1$, as a unique vertex always has a safe spanning tree. Thus,
to conclude the proof of the Lemma, we have three cases to handle:
\jbj{$|V(S)|\in \{2,3,4\}$}.

 Assume first that $S$ contains
two vertices. Then, \AY{$D \setminus A(T)$} is strong, $D\setminus T$ has size two and
is semicomplete. So, by Claim~\ref{claim:tree-extension}, $D$ has a
separating spanning tree, a contradiction again to the maximality of
$R$.

 Now assume that $S$ contains three vertices, and denote by $C$
a strong terminal component of $R$.  As previously, as $\lambda (D)\ge
2$, there exist at least two arcs $xu$ and $yv$ from $C$ to $S$ (with
$x,y\in C$ and $u,v\in S$). If $u\neq v$, as $S$ is semicomplete
there is an arc between $u$ and $v$. Without loss of generality assume that $uv$ is
an arc of $S$ and let $e$ be the arc $yv$. If $u=v$ then we can choose
arbitrarily $e=xu$ or $e=yv$. To conclude, denote $T+e$ by $T'$ and
notice that \AY{$D \setminus A(T')$} is strongly connected. As $D\setminus T'$ has size
two and is semicomplete, by Claim~\ref{claim:tree-extension}, $D$ has
a separating spanning tree, a contradiction again to the maximality of
$R$.

 Finally, assume that $S$ contains four vertices. As in the
previous case, we can find an arc $e=zw$ with $z\in R$ and $w\in S$
such that \AY{$D \setminus (A(T) \cup \{e\})$} is strongly connected. As $S$ as four vertices,
$w$ has in-degree or out-degree at least 2 in $S$. Assume that $w$ has
out-degree at least 2 and denote by $u$ and $v$ two out-neighbours of
$w$ in $S$ such that $uv$ is an arc of $D$.  So, if we remove the arc
$wv$ from the digraph \AY{$D \setminus (A(T) \cup \{e\})$,} the resulting digraph still contains
the path $wuv$ from $w$ to $u$ and so is still strongly
connected. That is, if we denote by $T'$ the tree $T+e+wv$, the
digraph \AY{$D \setminus A(T')$} is strongly connected and \AY{$D - T'$} is
semicomplete and has size two. Thus by Claim~\ref{claim:tree-extension},
$D$ has a separating spanning tree, a contradiction again to the
maximality of $R$. \jbj{The case when $w$ has in-degree at least 2 in $S$ is analogous.}
\end{proof}

\begin{figure}[H]
\begin{center}
\tikzstyle{vertexB}=[circle,draw, minimum size=14pt, scale=0.6, inner sep=0.5pt]
\tikzstyle{vertexR}=[circle,draw, color=red!100, minimum size=14pt, scale=0.6, inner sep=0.5pt]

\begin{tikzpicture}[scale=1]
  \node (a) at (2,0) [vertexB] {$a$};
  \node (b) at (2,2) [vertexB] {$b$};
  \node (c) at (2,4) [vertexB] {$c$};
  \node (x) at (4,4) [vertexB] {$x$};
  \node (y) at (4,2) [vertexB] {$y$};
  \node (z) at (4,0) [vertexB] {$z$};
  \draw[->, line width=0.03cm] (a) to (b);
  \draw[->, line width=0.03cm] (b) to (c);
  \draw[->, line width=0.03cm] (x) to (y);
  \draw[->, line width=0.03cm] (y) to (z);
  \draw[->, line width=0.03cm] (a) to (y);
  \draw[->, line width=0.03cm] (y) to (c);
  \draw[->, line width=0.03cm] (c) to (x);
  \draw[->, line width=0.03cm] (x) to (b);
  \draw[->, line width=0.03cm] (b) to (z);
  \draw[->, line width=0.03cm] (z) to (a);
  \draw[->, line width=0.03cm] (c) to [out= 225,in=135] (a);
  \draw[->, line width=0.03cm] (z) to [out=45,in= -45] (x);

  \node (a) at (10,3) [vertexB] {$a$};
  \node (b) at (12,3) [vertexB] {$b$};
  \node (c) at (8,3) [vertexB] {$c$};
  \node (x) at (10,1) [vertexB] {$x$};
  \node (y) at (12,1) [vertexB] {$y$};
  \node (z) at (8,1) [vertexB] {$z$};
  \draw (7.5,0.5) rectangle (8.5,3.5);
  \draw (9.5,0.5) rectangle (10.5,3.5);
  \draw (11.5,0.5) rectangle (12.5,3.5);
  \draw[->, line width=0.03cm] (8.5,2) to (9.5,2);
  \draw[->, line width=0.03cm] (10.5,2) to (11.5,2);
  \draw[->, line width=0.03cm] (11.8,3.5) to [out=140,in= 40] (8.2,3.5);

  \end{tikzpicture}
  \caption{\AY{Two different drawings of the same} 2-arc-strong  co-bipartite digraph $\tilde{D}$ in which every spanning tree is
    separating.}\label{fig:nonsepT}
\end{center}
\end{figure}
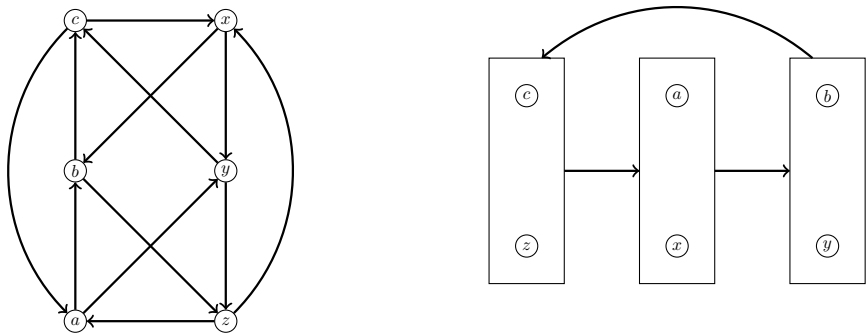

\begin{proposition}
  \label{prop:nonsepT}
  The digraph $\tilde{D}$ in Figure \ref{fig:nonsepT} has no non-separating spanning tree.
\end{proposition}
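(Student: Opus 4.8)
The plan is to first re-read the structure of $\tilde{D}$ from the right-hand drawing in Figure~\ref{fig:nonsepT}: the three boxes are the independent pairs $P_1=\{a,x\}$, $P_2=\{b,y\}$, $P_3=\{c,z\}$, there are no arcs inside a box, and between consecutive boxes all four arcs are present and point the same way, giving $P_1\Rightarrow P_2\Rightarrow P_3\Rightarrow P_1$. In other words $\tilde{D}$ is the blow-up of a directed triangle obtained by replacing each vertex with two non-adjacent copies. In particular every vertex has in-degree and out-degree exactly $2$, and the only out-arcs of a vertex of $P_i$ are its two arcs into $P_{i+1}$ (indices mod $3$). I would record the twin identities $N^-(a)=N^-(x)=\{c,z\}$, $N^-(b)=N^-(y)=\{a,x\}$, $N^-(c)=N^-(z)=\{b,y\}$ (and the analogous equalities for out-neighbourhoods), which say that the two vertices of each box have identical neighbourhoods.

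Next I would cut down the class of candidate trees by a degree count. Since $\tilde{D}$ is $2$-regular and any strong digraph on at least two vertices has minimum in- and out-degree at least $1$, a spanning tree $T$ with $\tilde{D}\setminus A(T)$ strong must satisfy $d_T^+(v)\le 1$ and $d_T^-(v)\le 1$ for every $v$. A spanning tree of a six-vertex digraph all of whose in- and out-degrees are at most $1$ has underlying maximum degree $2$, hence is a path; and at each internal vertex the two incident arcs cannot both point in or both point out, so the orientation is consistent and $T$ is a directed Hamiltonian path. Thus it suffices to show that $\tilde{D}\setminus A(P)$ is not strong for every directed Hamiltonian path $P$. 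Because all out-arcs leaving a box go to the next box, such a $P$ runs through the boxes cyclically, visiting one vertex of each $P_i$ and then the remaining vertex of each, so there are exactly $3\cdot 2^3=24$ of them.

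Finally I would exploit symmetry. The map rotating the boxes ($a\to b\to c\to a$, $x\to y\to z\to x$) and the maps swapping the two vertices inside any single box are automorphisms of $\tilde{D}$, and together they act transitively on the $24$ directed Hamiltonian paths: the rotations move the starting box to any of the three positions, while the three independent in-box swaps realise every choice of which vertex of each box is visited first. Since an automorphism carries a non-separating path to a non-separating one, it is enough to test a single path, say $P=a\to b\to c\to x\to y\to z$. For this $P$ the arcs $a\to y$, $y\to c$, $c\to a$ survive, so $\{a,y,c\}$ spans a directed triangle in $\tilde{D}\setminus A(P)$; and the only other out-arcs of $a,y,c$ in $\tilde{D}$, namely $a\to b$, $y\to z$, $c\to x$, all lie on $P$, so $\{a,y,c\}$ has no out-arc in $\tilde{D}\setminus A(P)$. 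Hence $\tilde{D}\setminus A(P)$ is not strong, and the proposition follows.

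The step I expect to require the most care is the transitivity of the automorphism action (equivalently, confirming that the single exhibited path genuinely represents all $24$). If one prefers to avoid the group-theoretic bookkeeping, the alternative is to argue directly that every directed Hamiltonian path leaves some transversal triple $\{u_1,u_2,u_3\}$ with $u_i\in P_i$ closed, i.e.\ that for each $i$ the unique out-arc of $u_i$ not entering $u_{i+1}$ lies on $P$; this is a short but slightly fiddly case check, and it is exactly the clean symmetry statement that makes it painless.
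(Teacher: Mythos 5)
Your proof is correct, but it attacks the problem from the opposite side compared to the paper's own argument. The paper's proof of Proposition~\ref{prop:nonsepT} analyses the strong complement $S=\tilde{D}\setminus A(T)$: since a spanning tree $T$ has $5$ arcs, $S$ has at most $7$, so $S$ must be either a hamiltonian cycle or a $3$-cycle together with an ear picking up the remaining three vertices (using that the only cycle lengths in $\tilde{D}$ are $3$ and $6$); the automorphisms coming from the four pairs of disjoint $3$-cycles then reduce both cases to the single hamiltonian cycle $abcxyza$, whose arc set, once removed, leaves the two disjoint triangles $ayca$ and $xbzx$ and hence a disconnected underlying graph. You instead pin down $T$ itself: the $2$-in-$2$-out regularity forces $d_T^+(v),d_T^-(v)\le 1$, so $T$ is a directed hamiltonian path; the blow-up structure $P_1\Rightarrow P_2\Rightarrow P_3\Rightarrow P_1$ shows there are exactly $24$ such paths; and the rotation together with the three commuting in-box swaps generate an automorphism group of order $24$ acting (simply) transitively on them, so the single explicit check of $P=abcxyz$ with the closed set $\{a,y,c\}$ finishes the proof --- all of which I have verified is sound, including the transitivity step you flagged as delicate (matching starting boxes via a rotation and then correcting the first-visited vertex of each box via the swaps does exactly what you claim). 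What each approach buys: the paper's route needs the small catalogue of sparse strong spanning subdigraphs and the cycle-length observation, while yours replaces that case analysis by a degree count plus one certificate, which is arguably cleaner here. It is worth noting that your route is precisely the technique the paper itself uses for the $8$-vertex example $\hat{D}$ in Proposition~\ref{prop:hatD} (where $T$ is forced to be a directed hamiltonian path because $d_S^+(x),d_S^-(x)\geq 1$ for all $x$), so the two propositions in the paper could in fact be proved uniformly by your method.
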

\begin{proof}
  Note that $H$ has 12 arcs and 6 vertices so if $H$ would have a pair of arc-disjoint subdigraphs $T,S$ where $T$ is a spanning tree and $S$ a strong spanning digraph, then $|A(S)|\leq 7$  must hold. This implies that $S$ is either a hamiltonian cycle of $H$ or it consist of a cycle $C$ and a $(C,C)$-path $P$ which picks up the remaining vertices of $V$.  Now note that the only cycle lengths of $D$ are 3 and 6. We now use that $H$ has a number of automorphisms: there are 4 pairs of 3-cycles joined by a hamiltonian cycle on their vertices, namely $(abca,xyzx)$, $(ayza,bcxb)$, $(abza, cxyc)$, $(ayca,bzxb)$. Hence up to automorphisms there is only one hamiltonian cycle, namely $C_1=abcxyza$. It is easy to check that \AY{$D \setminus A(C_1)$} is not connected. Hence, if $T,S$ exist then we must have $|A(S)|=7$ and $S$ must consist of a 3-cycle $C$ and a $(C,C)$-path $P$ which picks up the remaining 3 vertices of $V$. By the symmetries above, we may assume that $C=abca$. Again, by permuting the vertices $a,b,c$ if necessary, we can assume that $P$ starts with the arc $cx$. This implies that $P=cxyza$ (as $P$ picks up all the vertices $x,y,z$). Now we see that $S$ contains the hamiltonian cycle $abcxyza$ and we saw above that removing the arcs of this cycle we disconnect the graph.
\end{proof}

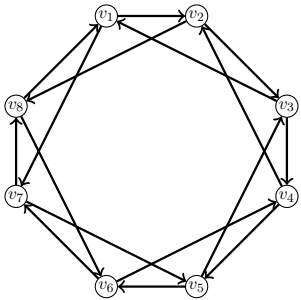
\begin{figure}[H]
\begin{center}
\tikzstyle{vertexB}=[circle,draw, minimum size=14pt, scale=0.6, inner sep=0.5pt]
\tikzstyle{vertexR}=[circle,draw, color=red!100, minimum size=14pt, scale=0.6, inner sep=0.5pt]

\begin{tikzpicture}[scale=0.6]
  \node (v1) at (2,6) [vertexB] {$v_1$};
  \node (v2) at (4,6) [vertexB] {$v_2$};
  \node (v3) at (6,4) [vertexB] {$v_3$};
  \node (v4) at (6,2) [vertexB] {$v_4$};
  \node (v5) at (4,0) [vertexB] {$v_5$};
  \node (v6) at (2,0) [vertexB] {$v_6$};
  \node (v7) at (0,2) [vertexB] {$v_7$};
  \node (v8) at (0,4) [vertexB] {$v_8$};
  \draw[->, line width=0.03cm] (v1) to (v2);
  \draw[->, line width=0.03cm] (v2) to (v3);
  \draw[->, line width=0.03cm] (v3) to (v4);
  \draw[->, line width=0.03cm] (v4) to (v5);
   \draw[->, line width=0.03cm] (v5) to (v6);
  \draw[->, line width=0.03cm] (v6) to (v7);
  \draw[->, line width=0.03cm] (v7) to (v8);
  \draw[->, line width=0.03cm] (v8) to (v1);
   \draw[->, line width=0.03cm] (v1) to (v7);
  \draw[->, line width=0.03cm] (v7) to (v5);
  \draw[->, line width=0.03cm] (v5) to (v3);
  \draw[->, line width=0.03cm] (v3) to (v1);
   \draw[->, line width=0.03cm] (v2) to (v8);
  \draw[->, line width=0.03cm] (v8) to (v6);
  \draw[->, line width=0.03cm] (v6) to (v4);
  \draw[->, line width=0.03cm] (v4) to (v2);

  \end{tikzpicture}
  \caption{A 2-arc-strong  digraph $\hat{D}$ with $\alpha{}(\hat{D})=2$ in which every spanning tree is
    separating.}\label{fig:nonsepT8}
\end{center}
\end{figure}

\begin{proposition}\label{prop:hatD}
  The digraph $\hat{D}$ in Figure \ref{fig:nonsepT8} has no non-separating spanning tree.
\end{proposition}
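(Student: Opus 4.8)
My plan is to reduce the statement to a short finite check. First I would record the structure of $\hat{D}$: reading Figure~\ref{fig:nonsepT8}, its vertex set is $\{v_1,\dots,v_8\}$ and its arcs are the eight \emph{short} arcs $v_iv_{i+1}$ forming the octagon, together with the eight \emph{long} arcs $v_iv_{i-2}$, indices taken modulo $8$. Thus $\hat{D}$ is $2$-regular ($d^+(v)=d^-(v)=2$ for every $v$), it has $16$ arcs, and the rotation $\rho\colon v_i\mapsto v_{i+1}$ is an automorphism. Crucially, the long arcs preserve the parity of the index, while the short arcs are exactly the arcs between the odd class $O=\{v_1,v_3,v_5,v_7\}$ and the even class $E=\{v_2,v_4,v_6,v_8\}$.

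The first step is to show that any non-separating spanning tree must be a directed Hamiltonian path with strong complement. Suppose $T$ is a spanning tree with $S:=\hat{D}\setminus A(T)$ strong. Then $d^+_S(v),d^-_S(v)\ge 1$ for all $v$, and since $d^+_{\hat{D}}(v)=d^-_{\hat{D}}(v)=2$ this forces $d^+_T(v)\le 1$ and $d^-_T(v)\le 1$ for every $v$. As $T$ has $7$ arcs on $8$ vertices, exactly one vertex $a$ has $d^+_T(a)=0$ and exactly one vertex $b$ has $d^-_T(b)=0$, all other in- and out-degrees in $T$ being $1$. Hence every vertex has degree at most $2$ in the underlying tree, so $T$ is a Hamiltonian path; since $a\ne b$ (otherwise $a$ would be isolated) the orientation makes $T$ a directed Hamiltonian path from $b$ to $a$. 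So it suffices to prove that no directed Hamiltonian path $P$ of $\hat{D}$ has $\hat{D}\setminus A(P)$ strong.

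Using $\rho$ I may assume the path starts at $v_1$, as strongness of the complement is invariant under the automorphism. A depth-first enumeration of directed Hamiltonian paths beginning at $v_1$ (at each vertex one of the two out-arcs is chosen, subject to visiting distinct vertices) yields exactly three paths: $P_1=v_1v_2v_3v_4v_5v_6v_7v_8$, $P_2=v_1v_2v_8v_6v_7v_5v_3v_4$, and $P_3=v_1v_7v_5v_3v_4v_2v_8v_6$. I then exhibit in each case a set with no entering arc in the complement. For $P_1$ the complement keeps only the single short arc $v_8v_1$ (all others being short arcs of $P_1$), which runs from $E$ to $O$; since every remaining arc is long and parity-preserving, no arc of the complement leaves $O$ for $E$, so $E$ is unreachable and the complement is not strong. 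For $P_2$ and $P_3$ one checks directly that no arc of the complement enters $\{v_4,v_5,v_6\}$, so that set is unreachable and again the complement is not strong. This contradiction finishes the proof.

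The only genuinely delicate point is the enumeration: I must be sure the depth-first search is exhaustive, so that no Hamiltonian path from $v_1$ is overlooked, since the whole argument rests on these three cases. The degree bookkeeping in the reduction is routine, and the three non-strongness certificates are one-line parity/cut checks, so I expect no difficulty there. As an independent sanity check one may note the ear-decomposition viewpoint — a strong digraph on $8$ vertices with $9$ arcs is a cycle plus a single ear — together with the fact that the octagon is the unique Hamiltonian cycle of $\hat{D}$; this rules out the degenerate case where the ear is a lone chord and makes it plausible that the few remaining cycle lengths can be handled, but the Hamiltonian-path enumeration above is the cleaner route and the one I would write up.
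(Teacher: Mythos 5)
Your proof is correct, and after the opening reduction it follows a genuinely different route from the paper's. Both arguments begin identically: since $\hat{D}$ is $2$-regular, a strong complement forces $d_T^+(x),d_T^-(x)\le 1$, so any non-separating spanning tree is a directed Hamiltonian path. The paper then finishes structurally: it first shows $T$ cannot contain two consecutive arcs of the octagon $C$ (else $T\subseteq C$ and the complement is two disjoint $4$-cycles plus an arc), deduces that some middle arc $p_ip_{i+1}$ lies on $C$ with both neighbouring arcs chords, normalizes $p_i=v_4$, and derives that all four arcs incident with $v_2$ would have to lie in $S$ --- a contradiction. You instead exploit the rotational automorphism to anchor the path at $v_1$, enumerate all directed Hamiltonian paths from $v_1$ (there are indeed exactly three --- I re-ran the depth-first search: branch $v_1v_2v_3$ is forced to $P_1$, branch $v_1v_2v_8v_6$ splits into $P_2$ and a dead end, and branch $v_1v_7$ yields only $P_3$, all other continuations dying before reaching $8$ vertices), and then kill each case with an explicit certificate: the parity cut $O\to E$ for $P_1$, and the in-arc-free set $\{v_4,v_5,v_6\}$ for $P_2$ and $P_3$ (the three arcs entering that set from outside, namely $v_3v_4$, $v_7v_5$, $v_8v_6$, all lie on the path in both cases). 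Your version buys complete transparency --- a finite, independently checkable list of cases with one-line certificates, made small by the symmetry reduction, which the paper does not use; the paper's version buys freedom from enumeration and localizes the obstruction (the vertex $v_2$ gets saturated in $S$), which is arguably more explanatory. One stylistic nit: the directedness of the Hamiltonian path follows from internal vertices having in- and out-degree exactly $1$ (forcing consistent orientation along the underlying path), not from $a\neq b$; that remark is true but not what carries the step.
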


\begin{proof}
  As every vertex of $\hat{D}$ has in- and out-degree 2 we see that if $T,S$ is a pair of arc-disjoint spanning subdigraphs of $\hat{D}$ such that $T$ is connected and $S$ is strongly connected, then $T$ must be a hamiltonian path in $\hat{D}$ (as $d_S^+(x),d_S^-(x) \geq 1$ for every vertex $x \in V(\hat{D})$
and therefore $d_T^+(x),d_T^-(x) \leq 1$). Let $T=p_1p_2\ldots p_8$.
Let $C$ denote the arcs on the hamiltonian cycle, $v_1v_2\ldots{}v_8v_1$ and let $\bar{C} = A(\hat{D}) \setminus C$.
We first prove the following statement.

\begin{description}
  \item[(i)] $p_i p_{i+1}, p_{i+1} p_{i+2} \in C$ is not possible for any $i \in [6]$.  

For the sake of contradiction, assume the above is true 
and without loss of generality that $p_i = v_1$. That is, $v_1 v_2 v_3$ is a subpath of $T$. Continuing along $T$ out of $v_3$ and into $v_1$ we note that
all arcs of $T$ belong to $C$ (i.e. we cannot use the arc $v_3 v_1$ in $T$, so the only possible arc out of $v_3$ is $v_3 v_4$ and the only possible arc
into $v_1$ is $v_8 v_1$, etc.). 
However $S$ would then contain two disjoint $4$-cycles plus an extra arc, meaning it is not strongly connected, a  contradiction.
\end{description}

As $T$ is a hamiltonian  path we note that $p_i p_{i+1} \in C$ for some $i \in \{2,3,4,5\}$ (otherwise $T$ contains a $4$-cycle).
By (i) we must have $p_{i-1} p_{i}, p_{i+1} p_{i+2} \in \bar{C}$. Without loss of generality assume that  $p_{i}=v_4$,
which implies that $v_6 v_4 v_5 v_3$ is a subpath of $T$. As $T$ is a path (and $i \leq 5$) we must have that $v_6 v_4 v_5 v_3 v_1$ is a subpath of $T$
(as $v_3 v_4 \not\in A(T)$).
This implies that the arcs $v_2 v_3, v_3 v_4, v_4 v_2 \in A(S)$ (as $d_S^+(v_3),d_S^-(v_3),d_S^+(v_4),d_S^-(v_4) \geq 1$).
As $S$ has to contain arcs into and out of $\{v_2,v_3,v_4\}$ we must have $ v_1v_2, v_2v_8 \in A(S)$. But now all arcs incident with $v_2$ are in $S$, a 
contradiction.
\end{proof}

By Proposition \ref{prop:hatD}, the following Conjecture would be best possible in terms of the number of vertices.

\begin{conjecture}
  \label{conj:nonsepT}
Every digraph $D$ on at least 9 vertices with $\lambda{}(D)=2$ and
$\alpha{}(D)\leq 2$ has a non-separating spanning tree
\end{conjecture}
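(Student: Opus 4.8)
The plan is to reduce everything to the machinery already developed for Theorem~\ref{thm:nonsep14}. That theorem settles every instance in which $D$ contains a semicomplete subdigraph on at least five vertices; since the Ramsey number $R(3,5)=14$ forces such a subdigraph whenever $|V|\geq 14$, and $R(3,4)=9$ forces a semicomplete subdigraph $D_1$ on four vertices whenever $|V|\geq 9$, the open range is $9\leq |V|\leq 13$ together with the standing assumption that $D$ has no semicomplete subdigraph on five vertices (otherwise we are done immediately). So from the outset I would assume this and keep a fixed semicomplete subdigraph $D_1$ on exactly four vertices in hand.

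First I would dispose of the case where $D$ has a Hamiltonian cycle $C$ (and is oriented), which is the case the paper singles out as provable. The key observation is that it suffices to find a spanning tree $T$ of $UG(D)$ using none of the arcs of $C$: then $D\setminus A(T)\supseteq C$ is strong, so $T$ is non-separating. Thus the whole problem collapses to showing that $UG(D)\setminus A(C)$ is connected. To analyse a potential disconnection, I would take the bipartition $(A,B)$ of $V(D)$ where $A$ is a connected component of $UG(D)\setminus A(C)$ and $B$ its complement; every edge of $UG(D)$ between $A$ and $B$ is then an arc of $C$, and since each vertex lies on only two cycle arcs, any fixed pair of vertices on one side is adjacent to at most four vertices on the other. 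Combined with $\alpha(D)\leq 2$ this forces: if a side contains a non-adjacent pair, the opposite side has at most four vertices. Running through the sub-cases, either both sides carry non-adjacent pairs (so $|V|\leq 8$, excluded), or a side is semicomplete; a semicomplete side of size at least five is impossible under our standing assumption, so the remaining configurations are severely constrained and, for $|V|\geq 9$, should either produce an independent triple among three small cliques or contradict $\lambda(D)\geq 2$, leaving $UG(D)\setminus A(C)$ connected.

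For a digraph with no Hamiltonian cycle I would invoke Theorem~\ref{thm:CMthm} and Corollary~\ref{cor:smallstrong} to obtain two cycles covering $V(D)$ meeting in a subpath, and then attempt to build a safe spanning tree from $D_1$ exactly as in the proof of Theorem~\ref{thm:nonsep14}: take a maximal induced subdigraph $R\supseteq V(D_1)$ admitting a safe spanning tree, show $D[V(D)\setminus V(R)]$ is semicomplete, and extend $R$ using Claim~\ref{claim:tree-extension} and the tree-splicing fact recorded before Lemma~\ref{lem:safe-st}.

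The hard part, and the reason this is stated only as a conjecture, is the base of this induction when $|V|\leq 13$ and there is no semicomplete subdigraph on five vertices. Lemma~\ref{lem:safe-st} guarantees a safe spanning tree only from size five upward, and the four-vertex semicomplete digraphs $S_4$, $W_1$, $W_2$ genuinely fail to have one, so the extension argument cannot be seeded in the clean way available for $|V|\geq 14$. Handling these small, clique-poor digraphs, squeezing a non-separating tree directly out of the two-cycle cover of Corollary~\ref{cor:smallstrong} while using $\lambda(D)\geq 2$ to reconnect the deleted arcs and ruling out the eight-vertex obstruction $\hat{D}$ of Proposition~\ref{prop:hatD} as essentially the unique small obstruction, is the main obstacle, and is likely where a finite but delicate case analysis (or a computer search over the bounded range $9\leq |V|\leq 13$) would be required.
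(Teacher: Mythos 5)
The statement you are trying to prove is stated in the paper as Conjecture~\ref{conj:nonsepT}, i.e., it is \emph{open}: the paper offers only partial evidence, namely Theorem~\ref{thm:nonsep14} (which settles $|V|\geq 14$, or more generally the presence of a semicomplete subdigraph on five vertices), Proposition~\ref{prop:hatD} (showing $9$ vertices would be best possible), and Theorem~\ref{thm:orientHC} (the hamiltonian oriented case). Your proposal is likewise not a proof: after the correct Ramsey reductions ($R(3,5)=14$, $R(3,4)=9$) it explicitly delegates the entire remaining range $9\leq |V|\leq 13$ with no five-vertex semicomplete subdigraph to ``a finite but delicate case analysis (or a computer search)'' that is never carried out. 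That unexecuted case analysis is not a technicality; it is exactly the open content of the conjecture, since Lemma~\ref{lem:safe-st} cannot seed the extension argument of Theorem~\ref{thm:nonsep14} from a four-vertex clique, and Claim~\ref{claim:tree-extension} only absorbs leftovers of size at most two. So the proposal has a genuine gap, and the gap is the whole statement.

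Moreover, the one case you claim to dispose of, hamiltonian oriented digraphs, is handled incorrectly. You reduce it to showing that $UG(D)\setminus A(C)$ is connected and assert that the disconnected configurations ``should either produce an independent triple \dots or contradict $\lambda(D)\geq 2$.'' This is false, and the paper's own proof of Theorem~\ref{thm:orientHC} shows why: when one side $X_1$ carries a single non-adjacent pair (so $X_1$ is not semicomplete, and $D$ need not contain a five-vertex semicomplete subdigraph) and the other side $X_2$ is a clique on $3$ or $4$ vertices, all hypotheses are compatible with $UG(D)\setminus A(C)$ being disconnected; these are precisely the configurations occupying almost all of that proof. In those cases one must abandon $C$ as the retained strong subdigraph and instead build an edge-disjoint pair consisting of a \emph{different} strong spanning subdigraph $S$ (obtained from $C$ by rerouting through chords of $X_2$ and $X_1$) and a spanning tree that \emph{does} use arcs of $C$; see the constructions behind Figures~\ref{fig:X2=4}, \ref{fig:oneinX2} and \ref{fig:X2=3}. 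So even your claimed base case collapses back into a constructive case analysis you have not performed. Finally, note that the conjecture allows digons: your minimum-degree-two argument for the components of $UG(D)\setminus A(C)$ uses orientedness, so hamiltonian digraphs with $2$-cycles are also left untreated.
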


We provide below some support to Conjecture~\ref{conj:nonsepT}, by
proving it for hamiltonian oriented graphs (ie. with no 2-cycle).

\begin{theorem}
\label{thm:orientHC}
Every hamiltonian oriented graph $D=(V,A)$ on at least 9 vertices with
$\lambda{}(D) \AY{\geq 2}$ and $\alpha{}(D)=2$ has a non-separating spanning
tree
\end{theorem}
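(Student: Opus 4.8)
The plan is to exploit a fixed hamiltonian cycle $C=v_1v_2\cdots v_nv_1$ and to reduce, as far as possible, to Theorem~\ref{thm:nonsep14}. Write $D'=D\setminus A(C)$ for the digraph obtained by deleting the $n$ arcs of $C$. The guiding dichotomy is whether the underlying graph $UG(D')$ is connected. If it is, then any spanning tree $T$ of $UG(D')$ realised by arcs of $D'$ already does the job: since $T$ uses no arc of $C$, the digraph $D\setminus A(T)$ still contains the hamiltonian cycle $C$ and is therefore strong, so $T$ is a non-separating spanning tree. Thus all the difficulty is concentrated in the case where $UG(D')$ is disconnected.

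So suppose $UG(D')$ is disconnected and let $A$ be one of its connected components, $B=V\setminus V(A)$. By definition of a component, every edge of $UG(D)$ joining $A$ and $B$ is an arc of $C$, and since $\lambda(D)\ge 2$ there are at least two such arcs in each direction. The key structural observation comes from $\alpha(D)=2$, equivalently that the complement of $UG(D)$ is triangle-free: if $a,a'\in A$ are non-adjacent in $D$, then every vertex of $B$ must be joined to $a$ or to $a'$, necessarily by an arc of $C$; as $a$ and $a'$ are together incident with at most four arcs of $C$, this forces $|V(B)|\le 4$, and the symmetric statement holds with the roles of $A$ and $B$ exchanged. Consequently, for $n\ge 9$ at least one of the two sides must be a semicomplete subdigraph, since otherwise both $|V(A)|$ and $|V(B)|$ would be at most $4$ and $n\le 8$. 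If both sides are semicomplete, then the larger of them has at least five vertices and we are done by Theorem~\ref{thm:nonsep14}; this is in any case automatic once $n\ge 14$ by the Ramsey bound $R(3,5)=14$ already used there, so only the range $9\le n\le 13$ is genuinely new.

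The remaining, and hardest, case is that exactly one side is semicomplete. Since a non-semicomplete side forces the opposite side to have at most four vertices, the semicomplete side is then small (at most four vertices) while the other side $B$ is large ($|V(B)|\ge 5$) but \emph{not} semicomplete; note that a directed pentagon shows that a side with $\alpha=2$ on five vertices need not contain any semicomplete digraph on five vertices, so Theorem~\ref{thm:nonsep14} cannot simply be invoked. Here the plan is to argue directly, using that $B$ carries the bulk of the vertices while all its connections to the small side run through arcs of $C$. Concretely, I would extract from $D[V(B)]$ a tree $T_B$ with $D[V(B)]\setminus A(T_B)$ strong, following the method of the proof of Theorem~\ref{thm:nonsep14}, and then absorb the at most four vertices of the small semicomplete side one or two at a time through the extension mechanism of Claim~\ref{claim:tree-extension}, which is applicable because a terminal (resp. initial) strong component sends (resp. receives) two arcs to the outside by $\lambda(D)\ge 2$. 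I expect this last case to be the main obstacle: reconciling the internal structure of the dense but non-semicomplete side with the few cycle-arcs gluing on the small side, while keeping $T$ acyclic and $D\setminus A(T)$ strong, should require a careful but essentially bounded case analysis, since only $9\le n\le 13$ remains to be settled.
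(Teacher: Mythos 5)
Your reduction is correct and is essentially the same as the paper's opening moves: delete a hamiltonian cycle $C$, note that a connected $UG(D\setminus A(C))$ immediately yields a non-separating spanning tree, and otherwise use $\alpha(D)=2$ together with the fact that all edges between a component and its complement are arcs of $C$ to conclude that any side containing a non-adjacent pair forces the opposite side to have at most four vertices. Hence, for $n\geq 9$, either some side is semicomplete on at least five vertices and Theorem~\ref{thm:nonsep14} finishes, or exactly one side is semicomplete, that side has at most four vertices, and the other side is large but not semicomplete. Up to this point you and the paper agree (the paper additionally shows there are exactly two components, but your version does not need this).

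The genuine gap is that this last case --- which is where the paper spends essentially all of its effort --- is not proved but only sketched, and the sketch fails for concrete reasons. First, you propose to extract from $D[V(B)]$ a tree $T_B$ with $D[V(B)]\setminus A(T_B)$ strong, but $D[V(B)]$ need not be strongly connected at all (its strong connectivity may rely on detours through the small side along arcs of $C$), and even when it is strong no available result produces such a tree: $B$ is not semicomplete and, as your own directed-pentagon remark shows, may contain no semicomplete subdigraph on five vertices, so Theorem~\ref{thm:nonsep14} does not apply to it. Second, the absorption step cannot even start: every edge between the two sides is an arc of $C$, and each vertex has exactly one in-arc on $C$, so every vertex $x$ of the small side has \emph{at most one} in-neighbour in $B$; the extension mechanism underlying the proof of Theorem~\ref{thm:nonsep14} requires two arcs $ux,vx$ from the current vertex set into $x$ together with a path from $u$ to $v$ inside it, which is therefore never satisfiable for the first vertex you try to absorb. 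Moreover, Claim~\ref{claim:tree-extension} is not an iterable extension device: it is a terminal statement covering at most two leftover vertices and is stated for a $2$-arc-strong ambient digraph, so it cannot be applied to induced subdigraphs $D[V(B)\cup\{x,y\}]$ (which need not be $2$-arc-strong), nor can it swallow four vertices ``one or two at a time.'' The paper instead resolves this case by pinning down the structure completely --- when the small side has four vertices, the large side has exactly one non-adjacent pair and $n=9$; when it has three vertices, a matching argument in the complement again forces $n=9$ --- and then exhibiting explicit, case-by-case constructions of a strong spanning subdigraph and an edge-disjoint spanning tree (Figures~\ref{fig:X2=4}, \ref{fig:oneinX2}, \ref{fig:2nonarcsX1} and \ref{fig:X2=3}). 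Without an argument of this kind, your proposal does not establish the theorem.
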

\begin{proof} 
Let $C$ be a hamiltonian cycle of $D$ and let $X_1,X_2,\ldots{},X_k$
be the vertex sets of the connected components of $H=UG(D)\setminus
A(C)$. If $k=1$ we are done, so assume that $k\geq 2$. Note that each
component has at least 3 vertices as $d_H(v)\geq 2$, \AY{and $D$ 
contains no} 2-cycle. 

Suppose first that $k\geq 3$ and consider a
vertex $v\in X_i$ for some $i\in [k]$. In $UG(D)$ the vertex $v$ has
at most 2 neighbours outside $X_i$. If $v$ has no neighbours in some
$X_q$, $q\neq i$, then let $w$ be a non-neighbour of $v$ in $X_j$,
$j\not\in \{i,q\}$ and let $z$ be an arbitrary non-neighbour of $w$ in
$X_q$. Then $\{v,w,z\}$ is an independent set, contradiction. 
\AY{Therefore $k=3$ and}
every vertex in $X_i$ has a neighbour in each of the other
sets $X_j$, implying that we can pick \AY{$x_i\in X_i$,} $i\in [3]$ such
that $\{x_1,x_2,x_3\}$ is an independent set, contradiction. 

Hence $k=2$ and we may assume that $|X_1|\geq |X_2|$. As $|V|\geq 9$ this
implies that $|X_1|\geq 5$ and hence, as every arc between $X_1$ and
$X_2$ belongs to $C$ this implies that $X_2$ induces a complete
subgraph of $UG(D)$ (every vertex of $X_1$ must be adjacent at least
one of the vertices $x,y$ in a pair of non-adjacent vertices \JBJ{$x,y\in
X_2$ so if such a pair existed, at least one of $x,y$ would be incident to 3 arcs of $C$, contradiction}). Hence, by Theorem~\ref{thm:nonsep14}, we can assume that
$|X_2|\leq 4$ and that $D[X_1]$ is not semicomplete. Note that for
every pair of vertices $u,v\in X_1$ such that $D$ has no arc between
these, every vertex of $X_2$ has at least one edge to $\{u,v\}$ in
$A(C)$. Also note that $\delta^0(D[X_i])\geq 1$, $i\in [2]$ as only
the arcs of $C$ go between $X_1$ and $X_2$.

First suppose that $|X_2|=4$. If $X_1$ contains vertices
$u_1,u_2,v_1,v_2$ all distinct except possibly $u_2=v_1$ so that there
is no arc in $D$ between $u_i$ and $v_i$ for $i=1,2$, then we get the
contradiction that the undirected graph induced by $A(C)$ contains
either a 4-cycle or an 8-cycle, contradicting the $C$ is a hamiltonian
cycle of $D$. Thus $X_1$ contains exactly one pair $u,v$ of
non-adjacent vertices and, by Theorem~\ref{thm:nonsep14}, we may assume that
$|X_1|=5$ so $D$ has 9 vertices.  \jbj{As all 4 vertices of $X_2$ are
  adjacent to either $u$ or $v$, exactly two of them are adjacent to
  $u$ and the other two are adjacent to $v$. Now it is easy to see
  that $C$ has at most one arc inside $X_2$ (otherwise $C$ would
  contain a 3-cycle or a 6-cycle as a subdigraph). Consider first the
  case when $C$ uses no arc inside $X_2$. Then} we can label the
vertices of $V$ such that $X_1=\{v_1,v_2,v_4,v_6,v_8\}$,
$X_2=\{v_3,v_5,v_7,v_9\}$ and $C=v_1v_2\ldots{}v_9v_1$.

\begin{figure}[H]
\begin{center}
\tikzstyle{vertexB}=[circle,draw, minimum size=14pt, scale=0.6, inner sep=0.5pt]
\tikzstyle{vertexR}=[circle,draw, color=red!100, minimum size=14pt, scale=0.6, inner sep=0.5pt]

\begin{tikzpicture}[scale=0.6]]
  \node (v1) at (0,8) [vertexB] {$v_1$};
  \node (v2) at (0,6) [vertexB] {$v_2$};
  \node (v3) at (4,7) [vertexB] {$v_3$};
  \node (v4) at (0,4) [vertexB] {$v_4$};
  \node (v5) at (4,5) [vertexB] {$v_5$};
  \node (v6) at (0,2) [vertexB] {$v_6$};
  \node (v7) at (4,3) [vertexB] {$v_7$};
  \node (v8) at (0,0) [vertexB] {$v_8$};
  \node (v9) at (4,1) [vertexB] {$v_9$};
  \draw[->,line width=0.03cm,color=blue] (v1) to (v2);
  \draw[->,line width=0.03cm,color=blue] (v2) to (v3);
  \draw[line width=0.03cm,color=red] (v3) to (v4);
  \draw[->,line width=0.03cm,color=blue] (v3) to (v5);
  \draw[->,line width=0.03cm,color=blue] (v4) to (v5);
  \draw[->,line width=0.03cm,color=blue] (v5) to (v6);
  \draw[->,line width=0.03cm,color=blue] (v6) to (v7);
  \draw[->,line width=0.03cm,color=blue] (v7) to (v8);
  \draw[->,line width=0.03cm,color=blue] (v8) to (v9);
  \draw[->,line width=0.03cm,color=blue] (v9) to (v1);
  \draw[->,line width=0.03cm,color=blue] (-0.5,4) to (v4);

  \draw (-1,-0.5) rectangle (1,8.5);
  \draw (3,-0.5) rectangle (5,8.5);
  \node () at (0,-1) {$X_1$};
  \node () at (4,-1) {$X_2$};
  \node () at (2,-2) {(a)};

  \draw[line width=0.03cm,color=red] (v5) to (v7);
  \draw[line width=0.03cm,color=red] (v7) to (v9);
  \draw[line width=0.03cm,color=red] (v9) to [out=70, in=-70] (v3);
  
  \node (w1) at (8,8) [vertexB] {$v_1$};
  \node (w2) at (8,6) [vertexB] {$v_2$};
  \node (w3) at (12,7) [vertexB] {$v_3$};
  \node (w4) at (8,4) [vertexB] {$v_4$};
  \node (w5) at (12,5) [vertexB] {$v_5$};
  \node (w6) at (8,2) [vertexB] {$v_6$};
  \node (w7) at (12,3) [vertexB] {$v_7$};
  \node (w8) at (8,0) [vertexB] {$v_8$};
  \node (w9) at (12,1) [vertexB] {$v_9$};
  \draw[->,line width=0.03cm,color=blue] (w1) to (w2);
  \draw[line width=0.03cm,color=red] (w2) to (w3);
  \draw[->,line width=0.03cm,color=blue] (w3) to (w4);
  \draw[->,line width=0.03cm,color=blue] (w4) to (w5);
  \draw[->,line width=0.03cm,color=blue] (w5) to (w6);
  \draw[->,line width=0.03cm,color=blue] (w6) to (w7);
  \draw[->,line width=0.03cm,color=blue] (w7) to (w8);
  \draw[->,line width=0.03cm,color=blue] (w8) to (w9);
  \draw[->,line width=0.03cm,color=blue] (w9) to (w1);

  \draw[->,line width=0.03cm,color=blue] (w2) to (8,5);

  \draw[->,line width=0.03cm,color=blue] (w5) to (w3);
  \draw[->,line width=0.03cm,color=blue] (w7) to (w5);
  \draw[->,line width=0.03cm,color=blue] (w9) to (w7);

  \draw[line width=0.03cm, color=red] (w3) to [out=-70,in=70] (w7);
  \draw[line width=0.03cm, color=red] (w5) to [out=-70,in=70] (w9);
  \draw[line width=0.03cm, color=red] (w3) to [out=-60,in=60] (w9);

  \draw (7,-0.5) rectangle (9,8.5);
  \draw (11,-0.5) rectangle (14,8.5);
  \node () at (8,-1) {$X_1$};
  \node () at (12,-1) {$X_2$};
  \node () at (10,-2) {(b)};

\end{tikzpicture}
\end{center}
\caption{Illustrating two cases in the proof when $|X_2|=4$. In (a) we
  illustrate the solution when $v_3v_5$ is an arc. The blue arcs form
  a strong spanning subdigraph $S$ and the red edges, together with a
  spanning tree $T'$ in $D[X_2]$, avoiding $v_1v_2$ and the blue arc
  \AY{into $v_4$} form a spanning tree $T$ which is edge-disjoint from
  $S$. In (b) we indicate a solution when $D$ contains the directed
  path $v_9v_7v_5v_3$.}\label{fig:X2=4}
\end{figure}

Suppose first that $D$ contains the arc $v_3v_5$. Then let $wv_4$ be
an arbitrary arc entering $v_4$ in $D[X_1]$ (this exists as
$\delta^0(D[X_1])\geq 1$) and let $T'$ be a spanning tree avoiding the
arcs $wv_4,v_1v_2$ in $G'=UG(D[X_1])$. This tree exists as
$G'\setminus \{wv_4,v_1v_2\}$ has 5 vertices and 7 edges and hence is
connected. Then we obtain a strong spanning subdigraph $S$ of $D$ from
$C$ by deleting the arc $v_3v_4$ and adding the arcs $v_3v_5,wv_4$ and
note that $S$ is arc-disjoint from the spanning tree formed by $T'$
and the edges of the path $v_4v_3v_9v_7v_5$ in $UG(D)$, see Figure
\ref{fig:X2=4}(a).

 Hence we can assume that $v_5v_3\in A(D)$ and by
an analogous argument we can assume that $v_7v_5,v_9v_7\in A(D)$. Now
let $v_2w$ be an arbitrary arc leaving $v_2$ in $D[V(X_1)]$, let $T''$
be a spanning tree avoiding the arcs $v_1v_2,v_2w$ in $G'$ and let
$S'$ be the strong spanning subdigraph of $D$ obtained from $C$ by
deleting the arc $v_2v_3$ and adding the arcs of the directed path
$v_9v_7v_5v_3$ and the arc $v_2w$ and note that $S'$ is arc-disjoint
from the spanning tree formed by $T''$ and the edges of the path
$v_5v_9v_3v_7$ (in $UG(D)$) and the arc $v_2v_3$. See Figure
\ref{fig:X2=4}(b).

Next we consider the case when $C$ contains one arc of
  $D[V(X_2)]$. \AY{In this case, we may assume that $v_5$ and $v_8$ are the
two vertices in $X_1$ that are non-adjacent in $D$ and $X_2=\{x_4,x_6,x_7,x_9\}$ 
such that $x_4 x_5 x_6$ and $x_7 x_8 x_9$ are subpaths of $C$.
We may furthermore assume without loss of generality that $x_6 x_7 \in A(C)$ 
(the case when $x_9 x_4 \in A(C)$ is identical, by renaming vertices).
This implies that  we can label $C$ as
  $C=v_1v_2\ldots{}v_9v_1$, and $X_1=\{v_1,v_2,v_3,v_5,v_8\}$ and
 $X_2=\{v_4,v_6,v_7,v_9\}$ (and $v_5$ and $v_8$ are non-adjacent in $D$).}

If  $v_7v_9$ is an arc of $D[V(X_2)]$, then let \AY{$j \in [3]$} be
  chosen such that $v_j$ is an out-neighbour of $v_8$ in
  $D[V(X_1)]$ \AY{and let $j' \in [3]\setminus \{j\}$ be arbitrary.}
  Now the strong spanning subdigraph $S$ consisting of
  the cycle $v_1v_2\ldots{}v_7v_9v_1$ and the path $v_7v_8v_j$ is
  arc-disjoint from the spanning tree using the edges
  $v_4v_6,v_4v_7,v_4v_9,v_9v_8,v_8v_{j'},v_5v_1,v_5v_2,v_5v_3$. Hence
  we can assume that $v_9v_7$ is an arc of $D[V(X_2)]$. A similar
  argument shows that we may assume that $v_6v_4$ is an arc of
  $D[V(X_2)]$. Now using that $\delta^0(D)\geq 2$ this implies that
  the remaining arcs in $D[V(X_2)]$ are $v_7v_4,v_9v_6$ and
  $v_4v_9$. See Figure \ref{fig:oneinX2}(a).

  \begin{figure}[H]
\begin{center}
\tikzstyle{vertexB}=[circle,draw, minimum size=14pt, scale=0.6, inner sep=0.5pt]
\tikzstyle{vertexR}=[circle,draw, color=red!100, minimum size=14pt, scale=0.6, inner sep=0.5pt]

\begin{tikzpicture}[scale=0.6]]
  \node (1) at (3,6) [vertexB] {$v_1$};
  \node (2) at (0,5) [vertexB] {$v_2$};
  \node (3) at (0,-1) [vertexB] {$v_3$};
  \node (4) at (7,-1) [vertexB] {$v_4$};
  \node (5) at (3,0.5) [vertexB] {$v_5$};
  \node (6) at (5,2) [vertexB] {$v_6$};
  \node (7) at (7,4.5) [vertexB] {$v_7$};
  \node (8) at (3,3) [vertexB] {$v_8$};
  \node (9) at (5,6) [vertexB] {$v_9$};
  \draw[->,line width=0.03cm, color=blue] (1) to (2);
  \draw[->,line width=0.03cm, color=blue] (2) to (3);
  \draw[->,line width=0.03cm, color=blue] (3) to (4);
\draw[->,line width=0.03cm, color=blue] (4) to (5);
\draw[->,line width=0.03cm, color=blue] (5) to (6);
\draw[->,line width=0.03cm, color=blue] (6) to (7);
\draw[->,line width=0.03cm, color=blue] (7) to (8);
\draw[->,line width=0.03cm, color=blue] (8) to (9);
\draw[->,line width=0.03cm, color=blue] (9) to (1);
\draw[->,line width=0.03cm] (7) to (4);
\draw[->,line width=0.03cm] (4) to (9);
\draw[->,line width=0.03cm] (9) to (7);
\draw[->,line width=0.03cm] (6) to (4);
\draw[->,line width=0.03cm] (9) to (6);
\draw[line width=0.03cm, dotted] (5) to (8);

\draw (-0.5,-1.5) rectangle (3.5,6.5);
\draw (4.5,-1.5)  rectangle (7.5, 6.5);
\node () at (1.5,-2) {$X_1$};
\node () at (6,-2) {$X_2$};
\node () at (3.7,-3) {(a)};

\node (v1) at (13,6) [vertexB] {$v_1$};
  \node (v2) at (10,5) [vertexB] {$v_2$};
  \node (v3) at (10,-1) [vertexB] {$v_3$};
  \node (v4) at (17,-1) [vertexB] {$v_4$};
  \node (v5) at (13,0.5) [vertexB] {$v_5$};
  \node (v6) at (15,2) [vertexB] {$v_6$};
  \node (v7) at (17,4.5) [vertexB] {$v_7$};
  \node (v8) at (13,3) [vertexB] {$v_8$};
  \node (v9) at (15,6) [vertexB] {$v_9$};
  \draw[line width=0.03cm, dotted] (v5) to (v8);
  \draw[->,line width=0.03cm, color=blue] (v1) to (v2);
  \draw[->,line width=0.03cm, color=blue] (v2) to (v3);
  \draw[->,line width=0.03cm, color=blue] (v3) to (v4);
  \draw[line width=0.03cm, color=red] (v4) to (v5);
  \draw[line width=0.03cm,color=red] (12,1) to (v5);
  \draw[->, line width=0.03cm,color=blue] (12,0) to (v5);
\draw[->,line width=0.03cm, color=blue] (v5) to (v6);
\draw[->,line width=0.03cm, color=blue] (v6) to (v7);
\draw[->,line width=0.03cm, color=blue] (v7) to (v8);
\draw[->,line width=0.03cm, color=blue] (v8) to (v9);
\draw[->,line width=0.03cm, color=blue] (v9) to (v1);
\draw[->,line width=0.03cm,color=blue] (v4) to (v9);
\draw[line width=0.03cm, color=red] (v9) to (v7);
\draw[line width=0.03cm, color=red] (v6) to (v4);
\draw[line width=0.03cm, color=red] (v9) to (v6);

\draw[line width=0.03cm,color=red] (v8) to (v1);
\draw[line width=0.03cm,color=red] (v8) to (v2);
\draw[line width=0.03cm,color=red] (v8) to (v3);

\draw (9.5,-1.5) rectangle (13.5,6.5);
\draw (14.5,-1.5)  rectangle (17.5, 6.5);
\node () at (11.5,-2) {$X_1$};
\node () at (16,-2) {$X_2$};
\node () at (13.7,-3) {(b)};
  
\end{tikzpicture}
\end{center}
\caption{Illustrating the two last cases in the proof when $|X_2|=4$.}\label{fig:oneinX2}
\end{figure}
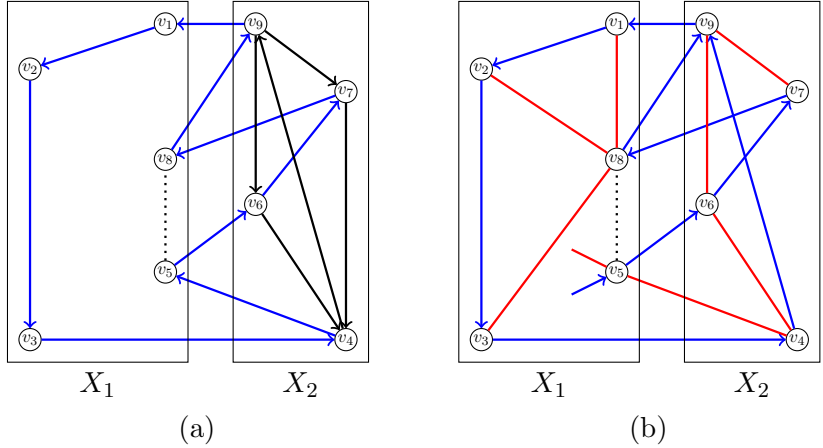

Now choose \AY{$p \in [3]$} such that $v_pv_5$ is an arc of
  $D[V(X_1)]$ and let $S'$ be the strong spanning subdigraph of $D$
  formed by the arcs of the cycle $v_1v_2v_3v_4v_9v_1$ and the path
  $v_pv_5v_6v_7v_8v_9$. 
  \AY{Let $q \in [3] \setminus \{p\}$ be arbitrary and note that $D \setminus A(S')$}
  is connected as it contains the
  spanning tree on the edges $v_4v_5,
  v_4v_6,v_6v_9,v_7v_9,v_8v_1,v_8v_2,v_8v_3,v_5v_q$, see Figure
  \ref{fig:oneinX2}(b). \AY{This completes the case when $|X_2|=4$.}

Consider now the case when $|X_2|=3$. Recall that $X_2$ induces a
3-cycle in $UG(D)$. 
\AY{Let $G$ be the complement of $UG[X_1]$. That is, $V(G)=V(X_1)$ and $uv \in E(G)$ 
if and only if $u$ and $v$ are non-adjacent in $D$.
By Theorem~\ref{thm:nonsep14} we may assume that $\alpha(G) \leq 4$ and as $\alpha(D)=2$
we may assume that $G$ contains no $3$-cycle. As $|V(G)| \geq 6$ we note that we must therefore
have a matching of size two in $G$. Let $uv$ and $u'v'$ be two edges in a matching in $G$.

Note that at least three arcs between $X_2$ and $\{u,v\}$ belong to $C$ (as otherwise there would be an independent set of size 3 
containing $u$ and $v$). There are also at least three arcs between $X_2$ and $\{u',v'\}$ in $C$.
As $|X_2|=3$ these 6 arcs are all the arcs between $X_2$ and $X_1$. Without loss of generality 
assume that $u$ is incident to two arcs between $X_1$ and $X_2$ and $u'$ is also incident to two arcs between $X_1$ and $X_2$,
which implies that both $v$ and $v'$ are incident with exactly one arc between $X_1$ and $X_2$.
As $\alpha(D)=2$ and $\alpha(G)\leq 4$ we now note that $|X_2|=6$ and $E(G)=\{uv,u'v'\}$ or $E(G)=\{uv,u'v',uu'\}$.

Let $u_8=u$, $u_4=v$, $u_6=u'$ and $u_1=v'$. We can now without
loss of generality, label the vertices of $V$ by
$u_1,\ldots{},u_9$ such that $X_1=\{u_1,u_2,u_3,u_4,u_6,u_8\}$,
$X_2=\{u_5,u_7,u_9\}$, $C=u_1u_2\ldots{}u_9u_1$ and there is no arc
between $u_1$ and $u_6$ and no arc between $u_4$ and $u_8$. There may or may not be an arc between $u_6$ and $u_8$. }
 See Figure
\ref{fig:2nonarcsX1}.

\begin{figure}[H]
\begin{center}
\tikzstyle{vertexB}=[circle,draw, minimum size=14pt, scale=0.6, inner sep=0.5pt]
\tikzstyle{vertexR}=[circle,draw, color=red!100, minimum size=14pt, scale=0.6, inner sep=0.5pt]

\begin{tikzpicture}[scale=0.7]]
  \node (u2) at (0,2) [vertexB] {$u_2$};
  \node (u3) at (0,4) [vertexB] {$u_3$};
  \node (u6) at (2,0) [vertexB] {$u_6$};
  \node (u1) at (2,2) [vertexB] {$u_1$};
  \node (u4) at (2,4) [vertexB] {$u_4$};
  \node (u8) at (2,6) [vertexB] {$u_8$};
  \node (u5) at (4,1) [vertexB] {$u_5$};
  \node (u7) at (4,3) [vertexB] {$u_7$};
  \node (u9) at (4,5) [vertexB] {$u_9$};
  \draw[->,line width=0.03cm,color=blue] (u1) to (u2);
  \draw[->,line width=0.03cm,color=blue] (u2) to (u3);
  \draw[->,line width=0.03cm,color=blue] (u3) to (u4);
  \draw[->,line width=0.03cm,color=blue] (u4) to (u5);
  \draw[->,line width=0.03cm,color=blue] (u5) to (u6);
  \draw[->,line width=0.03cm,color=blue] (u6) to (u7);
  \draw[->,line width=0.03cm,color=blue] (u7) to (u8);
  \draw[->,line width=0.03cm,color=blue] (u8) to (u9);
  \draw[->,line width=0.03cm,color=blue] (u9) to (u1);
  \draw[line width=0.03cm, dotted] (u1) to (u6);
  \draw[line width=0.03cm, dotted] (u4) to (u8);
  \draw (-0.5,-0.5) rectangle (2.5,6.5);
  \draw (3.5,-0.5) rectangle (5,6.5);
  \node () at (1,-0.8) {$X_1$};
  \node () at (4,-0.8) {$X_2$};
\end{tikzpicture}
\end{center}
\caption{The hamiltonian cycle $C$ in $D$ when $|X_2|=3$. The dotted
  edges indicate the two pairs of non-adjacent vertices in
  $D[X_1]$.}\label{fig:2nonarcsX1}
\end{figure}
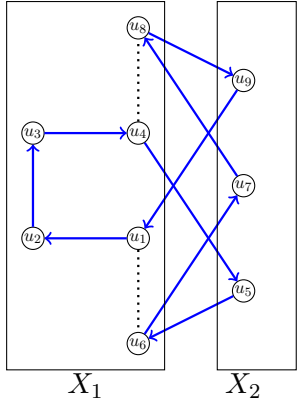

Moreover, as $\delta^+(D)\ge 2$ and $D$ is oriented, we know that
$D[X_2]$ is a directed 3-cycle. If $D$ contains the arc $u_7u_9$, then
as above we can find the desired pair $S,T$, see Figure
\ref{fig:X2=3}(a).

 Otherwise, it means that $D[X_2]$ is the directed
3-cycle $u_9u_7u_5u_9$, and then, as above we can find the desired
pair $S,T$, see Figure \ref{fig:X2=3}(b).\\

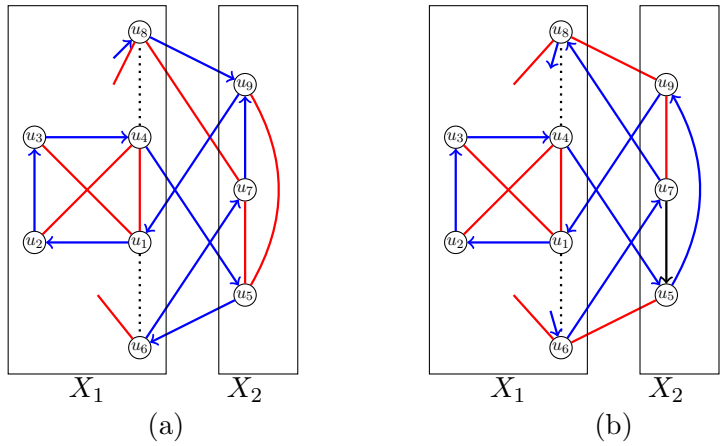
\begin{figure}[H]
\begin{center}
\tikzstyle{vertexB}=[circle,draw, minimum size=14pt, scale=0.6, inner sep=0.5pt]
\tikzstyle{vertexR}=[circle,draw, color=red!100, minimum size=14pt, scale=0.6, inner sep=0.5pt]

\begin{tikzpicture}[scale=0.7]]
  \node (u2) at (0,2) [vertexB] {$u_2$};
  \node (u3) at (0,4) [vertexB] {$u_3$};
  \node (u6) at (2,0) [vertexB] {$u_6$};
  \node (u1) at (2,2) [vertexB] {$u_1$};
  \node (u4) at (2,4) [vertexB] {$u_4$};
  \node (u8) at (2,6) [vertexB] {$u_8$};
  \node (u5) at (4,1) [vertexB] {$u_5$};
  \node (u7) at (4,3) [vertexB] {$u_7$};
  \node (u9) at (4,5) [vertexB] {$u_9$};
  \draw[->,line width=0.03cm,color=blue] (u1) to (u2);
  \draw[->,line width=0.03cm,color=blue] (u2) to (u3);
  \draw[->,line width=0.03cm,color=blue] (u3) to (u4);
  \draw[->,line width=0.03cm,color=blue] (u4) to (u5);
  \draw[->,line width=0.03cm,color=blue] (u5) to (u6);
  \draw[->,line width=0.03cm,color=blue] (u6) to (u7);
  \draw[line width=0.03cm,color=red] (u7) to (u8);
  \draw[->,line width=0.03cm,color=blue] (u8) to (u9);
  \draw[->,line width=0.03cm,color=blue] (u9) to (u1);
  \draw[line width=0.03cm, dotted] (u1) to (u6);
  \draw[line width=0.03cm, dotted] (u4) to (u8);
  \draw[line width=0.03cm, color=red] (u5) to (u7);
  \draw[line width=0.03cm, color=red] (u5) to [out=60,in=-60] (u9);
  \draw[line width=0.03cm, color=red] (u1) to (u3);
  \draw[line width=0.03cm, color=red] (u2) to (u4);
  \draw[line width=0.03cm, color=red] (u1) to (u4);

  \draw[->,line width=0.03cm,color=blue] (1.5,5.5) to (u8);

  \draw[line width=0.03cm, color=red] (u8) to (1.5,5);
  \draw[line width=0.03cm, color=red] (u6) to (1.2,1);

  \draw (-0.5,-0.5) rectangle (2.5,6.5);
  \draw (3.5,-0.5) rectangle (5,6.5);
  \node () at (1,-0.8) {$X_1$};
  \node () at (4,-0.8) {$X_2$};
  \draw[->,line width=0.03cm,color=blue] (u7) to (u9);

  \node (w2) at (8,2) [vertexB] {$u_2$};
  \node (w3) at (8,4) [vertexB] {$u_3$};
  \node (w6) at (10,0) [vertexB] {$u_6$};
  \node (w1) at (10,2) [vertexB] {$u_1$};
  \node (w4) at (10,4) [vertexB] {$u_4$};
  \node (w8) at (10,6) [vertexB] {$u_8$};
  \node (w5) at (12,1) [vertexB] {$u_5$};
  \node (w7) at (12,3) [vertexB] {$u_7$};
  \node (w9) at (12,5) [vertexB] {$u_9$};
  \draw[->,line width=0.03cm,color=blue] (w1) to (w2);
  \draw[->,line width=0.03cm,color=blue] (w2) to (w3);
  \draw[->,line width=0.03cm,color=blue] (w3) to (w4);
  \draw[->,line width=0.03cm,color=blue] (w4) to (w5);
  \draw[line width=0.03cm,color=red] (w5) to (w6);
  \draw[->,line width=0.03cm,color=blue] (w6) to (w7);
  \draw[->,line width=0.03cm,color=blue] (w7) to (w8);
  \draw[line width=0.03cm,color=red] (w8) to (w9);
  \draw[->,line width=0.03cm,color=blue] (w9) to (w1);
  
  \draw[line width=0.03cm, dotted] (w1) to (w6);
  \draw[line width=0.03cm, dotted] (w4) to (w8);
  \draw[->,line width=0.03cm, color=black] (w7) to (w5);
  \draw[->,line width=0.03cm, color=blue] (w5) to [out=60,in=-60] (w9);
  \draw[line width=0.03cm, color=red] (w1) to (w3);
  \draw[line width=0.03cm, color=red] (w2) to (w4);
  \draw[line width=0.03cm, color=red] (w1) to (w4);

  \draw[->,line width=0.03cm,color=blue] (w8) to (9.8,5.3);
  \draw[->,line width=0.03cm,color=blue] (9.8,0.7) to (w6);

  \draw[line width=0.03cm, color=red] (w8) to (9.1,5);
  \draw[line width=0.03cm, color=red] (w6) to (9.1,1);

  \draw (7.5,-0.5) rectangle (10.5,6.5);
  \draw (11.5,-0.5) rectangle (13,6.5);
  \node () at (9,-0.8) {$X_1$};
  \node () at (12,-0.8) {$X_2$};
  \node () at (2.5,-1.5) {(a)};
  \node () at (11,-1.5){(b)};
\draw[line width=0.03cm,color=red] (w9) to (w7);
  
\end{tikzpicture}
\end{center}
\caption{In (a): strong spanning subdigraph (in blue) and spanning
  tree (in red) when $D$ contains the arc $u_7u_5$. In (b): strong
  spanning subdigraph (in blue) and spanning tree (in red) when $D$
  contains the 3-cycle $u_9u_7u_5u_9$}\label{fig:X2=3}
\end{figure}
\end{proof}

\section{Removing a hamiltonian path}\label{sec:removeHP}
Note that Theorem \ref{thm:decomp2asSD} implies that every
2-arc-strong semicomplete digraph $D$ different from $S_4$ has an
out-branching $B^+$ such that \AY{$D \setminus A(B^+)$} is strong. It is easy to
check that $S_4$ also has such an out-branching.  The purpose of this
section is to prove that there exists 2-arc-strong digraphs with
independence number 2 for which no hamiltonian path is
non-separating.\\

For every natural number $r\geq 2$ let $T_r=(V,A)$ be the tournament with vertex set \\
$\{u_0,u_1,\dots{},u_{r+1},v_0,v_1,\dots{},v_{r+1}\}$ and arc set 
$\{u_{i-1}u_i|i\in [r]\}\cup \{v_iv_{i+1}|i\in [r]\}\cup \{u_iv_i|i\in [r]\}\cup \{v_1v_0,v_0u_0,v_0u_1,u_0v_1\}\cup\{u_{r+1}u_r,v_{r+1}u_{r+1},u_rv_{r+1},v_ru_{r+1}\}$ and for all remaining pairs not mentioned above the arcs goes from the vertex of higher index to the one with the lower index. See Figure \ref{fig:Tr}.

\begin{figure}[H]
\begin{center}
  \tikzstyle{vertexB}=[circle,draw, minimum size=18pt, scale=0.7, inner sep=0.5pt]
\tikzstyle{vertexR}=[circle,draw, color=red!100, minimum size=14pt, scale=0.7, inner sep=0.5pt]
\begin{tikzpicture}[scale=1.1]
\node (u0) at (0,2) [vertexB] {$u_0$};
\node (v0) at (0,0) [vertexB] {$v_0$};
\node (u1) at (2,2) [vertexB] {$u_1$};
\node (v1) at (2,0) [vertexB] {$v_1$};
\node (ur) at (10,2) [vertexB] {$u_r$};
\node (vr) at (10,0) [vertexB] {$v_r$};
\node (ur1) at (12,2) [vertexB] {$u_{r+1}$};
\node (vr1) at (12,0) [vertexB] {$v_{r+1}$};
\node (u2)   at (4,2) [vertexB] {$u_2$};
\node (v2)   at (4,0) [vertexB]{$v_2$};
\node (ur-1)   at (8,2) [vertexB]{$u_{r-1}$};
\node (vr-1)   at (8,0) [vertexB]{$v_{r-1}$};
\draw [->, line width=0.03cm] (v0) to (u0);
\draw [->, line width=0.03cm] (v1) to (v0);
\draw [->, line width=0.03cm] (v0) to (u1);
\draw [->, line width=0.03cm] (u0) to (v1);
\draw [->, line width=0.03cm] (u0) to (u1);
\draw [->, line width=0.03cm] (u1) to (u2);
\draw [->, line width=0.03cm] (v1) to (v2);
\draw [->, line width=0.03cm] (ur-1) to (ur);
\draw [->, line width=0.03cm] (vr-1) to (vr);
\draw [->, line width=0.03cm] (ur1) to (ur);
\draw [->, line width=0.03cm] (vr) to (vr1);
\draw [->, line width=0.03cm] (vr) to (ur1);
\draw [->, line width=0.03cm] (ur) to (vr1);
\draw [->, line width=0.03cm] (vr1) to (ur1);
\draw [->, line width=0.03cm, dotted] (u2) to (5,2);
\draw [->, line width=0.03cm, dotted] (v2) to (5,0);
\draw [->, line width=0.03cm, dotted] (7,2) to (ur-1);
\draw [->, line width=0.03cm, dotted] (7,0) to (vr-1);
\draw[->,line width=0.03cm] (u1) to (v1);
\draw[->,line width=0.03cm] (u2) to (v2);
\draw[->,line width=0.03cm] (ur-1) to (vr-1);
\draw[->,line width=0.03cm] (ur) to (vr);

\draw[->, thick,double] (7.5,1) to (4.5,1);

\end{tikzpicture}
\end{center}
\caption{The tournament $T_r$. The fat arc in the middle indicates that all arcs not shown in the figure go from right to left.}\label{fig:Tr}
\end{figure}
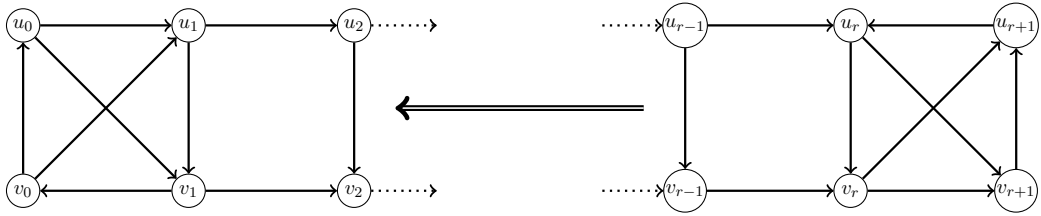

\AY{
\begin{lemma}
  \label{lem:TsepHP}
  For every $r\geq 4$ the tournament $T_r$ is 2-arc-strong.
  Furthermore if $P$ is a hamiltonian  path in $T_r$ starting in $v_0$ then $v_0$ cannot reach $v_r$ in $T_r \setminus A(P)$.
\end{lemma}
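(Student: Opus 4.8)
The plan is to read everything off a \emph{column potential}: set $c(u_i)=c(v_i)=i$ and write $L_i:=\{w:c(w)\le i\}$, $R_i:=V\setminus L_i$. From the definition of $T_r$ one checks that every arc raises $c$ by at most $1$, that for $1\le i\le r$ the only arc inside column $i$ is the down-arc $u_i\to v_i$, and that for each internal index $1\le i\le r-1$ the only arcs leaving $L_i$ are the two ``lane'' arcs $u_iu_{i+1}$ and $v_iv_{i+1}$ (backward arcs go from higher to lower columns, so they never cross $L_i$ upward). In particular $N^+(v_0)=\{u_0,u_1\}$ and $N^-(v_r)=\{v_{r-1},u_r\}$. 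For the first assertion I would simply verify $\delta^0(T_r)=2$ and that no proper set has out-degree exactly $1$: strong connectivity is immediate (ascend along the $u$-path and return along backward arcs), and a hypothetical set $S$ with $d^+(S)=1$ is ruled out by a short case analysis on the least column not contained in $S$ together with the two lane-arcs feeding it. This part is routine; the substance is the reachability statement.

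For the reachability statement, suppose for contradiction that $v_0$ reaches $v_r$ in $H:=T_r\setminus A(P)$ along a path $\pi$. Since $P$ is Hamiltonian it must cross each internal boundary $i$ forward to cover both sides of $L_i$, so at least one of $u_iu_{i+1},v_iv_{i+1}$ lies in $P$; if both did, then $L_i$ (which contains $v_0$ but not $v_r$) would have all its out-arcs in $P$, blocking $\pi$. Hence exactly one lies in $P$ — call its lane $\ell_i\in\{u,v\}$ — and exactly one survives in $H$. Because only one forward arc survives across each internal boundary while $\pi$ is simple with net crossing $+1$, the path $\pi$ crosses each internal boundary exactly once and never backward; that is, $\pi$ is \emph{monotone} in the columns. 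Now use that inside a column the only arc is $u_i\to v_i$: a monotone path may switch from the $u$-lane to the $v$-lane (via a down-arc) but never from $v$ back to $u$. Therefore the surviving lanes used by $\pi$ form a block $u\cdots u\,v\cdots v$, equivalently the sequence $\ell_1\cdots\ell_{r-1}$ equals $v^{a}u^{\,r-1-a}$ for some $a$.

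I would then split on $a$. If $1\le a\le r-2$, then $\ell_a=v$ and $\ell_{a+1}=u$, so $u_{a+1}$ has its forward in-arc $u_au_{a+1}$ outside $P$ and is entered in $P$ from a column $\ge a+2$; but the unique $P$-arc leaving $L_{a+1}$ is $u_{a+1}u_{a+2}$, whose tail is $u_{a+1}$ itself. This forces $P$ to reach column $\ge a+2$ both before and after first visiting $u_{a+1}$, which is impossible for a Hamiltonian path. If instead $a=0$ or $a=r-1$, then $P$ contains the entire $u$-path $u_1\cdots u_r$ or the entire $v$-path $v_1\cdots v_r$; since each boundary is crossed forward exactly once, every $R_i$ is visited by $P$ as one contiguous block, and these blocks are nested. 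Peeling the terminal block $R_{r-1}$ on $\{u_r,v_r,u_{r+1},v_{r+1}\}$ outward one column at a time, the newly added low vertex must be appended at the end of the previous block, which forces a non-existent arc ($u_k\to u_{k-1}$ in the all-$v$ case, $v_k\to v_{k-1}$ in the all-$u$ case) once $r\ge4$. Each branch contradicts either the existence of $\pi$ or the Hamiltonicity of $P$, so $v_0$ cannot reach $v_r$.

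The hard part is this final dichotomy, converting ``monotone $+$ Hamiltonian'' into a contradiction. For $1\le a\le r-2$ the circularity argument becomes clean once one records that the only $P$-arc leaving $L_{a+1}$ is $u_{a+1}u_{a+2}$. The genuinely fiddly case is $a\in\{0,r-1\}$: there one must use that each $R_i$ is a single contiguous block of $P$ and track the nested peeling, and the delicate bookkeeping is that the four-vertex terminal gadget $\{u_r,v_r,u_{r+1},v_{r+1}\}$ admits only Hamiltonian subpaths that force the next low vertex to attach at the wrong end. I would also double-check that the boundary columns $0$ and $r+1$, whose internal arc runs $v\to u$ (the reverse of the bulk), never host a lane switch, so they do not interfere with the monotonicity analysis carried out in the bulk columns $1,\dots,r$.
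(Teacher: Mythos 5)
Your proof is correct, and after its opening move it takes a genuinely different route from the paper's. Both arguments start the same way: for each $i\in[r-1]$ the pair $\{u_iu_{i+1},v_iv_{i+1}\}$ is a $2$-arc-cut separating $v_0$ from $v_r$, so exactly one arc of each pair lies in $P$ and the other survives. The paper then stays with cuts: it also uses the skew pairs $\{u_iu_{i+1},v_{i+1}v_{i+2}\}$, $i\in[r-2]$ (which are $2$-arc-cuts only in this orientation, precisely because the column arcs run $u_i\to v_i$), forcing the lane choice to be \emph{constant}, i.e.\ $A_1\subseteq A(P)$ or $A_2\subseteq A(P)$; it then assumes ``without loss of generality'' that $A_2\subseteq A(P)$ and finishes with a short local count of the out-arcs available in $P$ to $v_0,u_0,u_1,u_2,u_3$ (stated there as ``$P$ cannot contain both $u_2$ and $u_3$''). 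You avoid the skew cuts and instead derive the weaker block structure $\ell_1\cdots\ell_{r-1}=v^{a}u^{\,r-1-a}$ from monotonicity of the surviving $v_0$--$v_r$ path, paying for the weaker structure by having to kill the mixed case $1\le a\le r-2$; your circularity argument there is airtight (all $P$-available in-arcs of $u_{a+1}$ come from $R_{a+1}$, while the unique $P$-exit from $L_{a+1}$ has tail $u_{a+1}$). Your peeling of the pure cases also checks out exactly as sketched: the terminal block is forced to be $v_ru_{r+1}u_rv_{r+1}$ (resp.\ $u_rv_rv_{r+1}u_{r+1}$), the first peel appends $u_{r-1}$ (resp.\ $v_{r-1}$) at its end, and the second peel demands the non-existent arc $u_{r-1}u_{r-2}$ (resp.\ $v_{r-1}v_{r-2}$), which is where $r\ge4$ enters, just as in the paper. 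As for what each route buys: the paper's skew cuts give a stronger structure theorem and a much shorter endgame, but your route covers \emph{both} pure cases explicitly, and this is a genuine advantage rather than redundancy. The paper's ``WLOG'' is not supported by any symmetry of the rooted problem: the only lane-swapping symmetry of $T_r$ is the anti-automorphism $u_i\mapsto v_{r+1-i}$, $v_i\mapsto u_{r+1-i}$ composed with arc reversal, which does not preserve ``hamiltonian path starting at $v_0$''; and indeed in the case $A_1\subseteq A(P)$ the analogue of the paper's local count fails ($v_1$ is forced to be the terminal vertex of $P$, and the out-arcs of $v_0,v_2,v_3$ can be assigned consistently), so that case really does require a global argument like your peeling. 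Two minor remarks: your handling of $2$-arc-strongness is only a plan, but it is no terser than the paper's (``easy to check''), and either version completes routinely; and your closing worry about columns $0$ and $r+1$ is vacuous, since the lane-block claim concerns only crossings of internal boundaries, and the gadget arcs $v_0u_0$, $v_{r+1}u_{r+1}$ cannot create a $v$-to-$u$ return after such a crossing.
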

}

\begin{proof}
  It is easy to check that $v_0$ has two arc-disjoint paths to every other vertex and that every vertex different from $v_0$ has two arc-disjoint paths to $v_0$. This implies that $T_r$ is 2-arc-strong.
  \AY{For the sake of contradiction assume that there is a hamiltonian  path, $P$, in $T_r$, and that $v_0$ can reach $v_r$ in $S = T_r \setminus A(P)$.
  As for every $i\in [r-1]$ the two arcs $u_iu_{i+1},v_iv_{i+1}$ form a 2-arc-cut of $T_r$ seperating $v_0$ from $v_r$,
one of these arcs must belong to $S$ and the other to $P$.
Similarly, as for every $i\in [r-2]$ the two arcs $u_iu_{i+1},v_{i+1}v_{i+2}$ form a 2-arc-cut of $T_r$ seperating $v_0$ from $v_r$ one of these arcs must belong to $S$ and the other to $P$.
Let $A_1 = \{u_1u_2, u_2u_3, \ldots, u_{r-1} u_r\}$ and let $A_2 = \{v_1v_2, v_2v_3, \ldots, v_{r-1} v_r\}$, and note that by the previous argument 
we must have $A_i \subseteq A(S)$ and $A_{3-i} \subseteq A(P)$ for some $i \in [2]$.
Without loss of generality assume that $A_1 \subseteq A(S)$ and $A_2 \subseteq A(P)$, which implies that $P$ cannot contain both $u_2$ and $u_3$, 
a contradiction to the existence of $P$ and $S$.} 
\end{proof}

\AY{The following corollary follows immediatly from Lemma~\ref{lem:TsepHP}.

\begin{corollary}
  \label{cor:TsepHP}
  For every $r\geq 4$ the tournament $T_r$ is 2-arc-strong and for every hamiltonian path $P$ starting in the vertex $v_0$ the digraph $D\setminus{}A(P)$ is not strongly connected. 
\end{corollary}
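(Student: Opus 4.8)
The plan is to deduce the corollary directly from Lemma~\ref{lem:TsepHP}, the only additional ingredient being the defining characterisation of strong connectivity. The first claim---that $T_r$ is $2$-arc-strong for every $r\geq 4$---is verbatim the first assertion of Lemma~\ref{lem:TsepHP}, so there is nothing to prove there. For the second claim, the key observation I would use is that a digraph $H$ is strongly connected if and only if for every ordered pair of distinct vertices $(x,y)$ there is a directed $(x,y)$-path in $H$. Consequently, to show that $T_r\setminus A(P)$ fails to be strong it suffices to exhibit a single ordered pair of vertices that is \emph{not} joined by a directed path.

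Concretely, I would fix an arbitrary hamiltonian path $P$ of $T_r$ starting at $v_0$ and set $S=T_r\setminus A(P)$. By the second part of Lemma~\ref{lem:TsepHP}, the vertex $v_0$ cannot reach $v_r$ in $S$; that is, there is no directed $(v_0,v_r)$-path in $S$. By the characterisation recalled above, this single missing path already witnesses that $S$ is not strongly connected, which is precisely the conclusion of the corollary. There is no genuine obstacle in this argument: all of the combinatorial work has already been carried out in the proof of Lemma~\ref{lem:TsepHP} (the identification of the nested $2$-arc-cuts $\{u_iu_{i+1},v_iv_{i+1}\}$ and $\{u_iu_{i+1},v_{i+1}v_{i+2}\}$ separating $v_0$ from $v_r$), and the step from ``$v_0$ cannot reach $v_r$'' to ``not strongly connected'' is immediate. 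Thus the corollary is a pure reformulation of the lemma, and the proof reduces to this one elementary implication.
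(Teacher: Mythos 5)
Your proof is correct and takes exactly the same route as the paper, which simply states that the corollary ``follows immediately from Lemma~\ref{lem:TsepHP}''; your write-up merely spells out the one-line deduction (the lemma gives a pair $(v_0,v_r)$ with no directed path in $T_r\setminus A(P)$, hence that digraph is not strong). Nothing is missing.
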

}

\begin{theorem}
  \label{thm:noHPok}
  There exist infinitely many 2-arc-strong digraphs $D$ with $\alpha{}(D)=2$
  such that deleting the arcs of any hamiltonian path leaves a non-strong digraph.
\end{theorem}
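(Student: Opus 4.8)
The plan is to leverage the tournaments $T_r$ from Lemma~\ref{lem:TsepHP} as building blocks, since they already form an infinite family of $2$-arc-strong digraphs carrying the separating cut structure; the one thing they lack is the correct independence number, as each $T_r$ is a tournament and hence has $\alpha(T_r)=1$. So the first step is to produce from $T_r$ a $2$-arc-strong digraph $D_r$ with $\alpha(D_r)=2$ while preserving the feature that made $T_r$ useful: the family of minimal directed cuts $\{u_iu_{i+1},v_iv_{i+1}\}$ (and the shifted cuts $\{u_iu_{i+1},v_{i+1}v_{i+2}\}$) that separate $v_0$ from $v_r$. Concretely I would glue two copies of $T_r$ into a co-bipartite digraph on the parts $V(T_r)$ and $V(T'_r)$, adding only enough arcs between the copies to keep $D_r$ strongly $2$-arc-connected, and arranging these crossing arcs so that (i) some crossing pair remains non-adjacent, which together with the fact that each copy is semicomplete forces $\alpha(D_r)=2$, and (ii) no crossing arc provides a shortcut across any internal cut of either copy, so that the cuts above survive as minimal directed cuts of $D_r$. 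Since different values of $r$ give digraphs of different orders, this already yields infinitely many candidate examples.

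The second step is the combinatorial heart: showing that $D_r\setminus A(P)$ fails to be strong for \emph{every} hamiltonian path $P$, not merely for the $v_0$-rooted ones handled by Corollary~\ref{cor:TsepHP}. Here I would reuse the counting argument behind Lemma~\ref{lem:TsepHP}. Inside a fixed copy, each cut $\{u_iu_{i+1},v_iv_{i+1}\}$ can only be crossed forward through its two arcs, whereas there are many backward arcs; moreover the flat arcs all point $u_i\to v_i$, which rules out a purely forward traversal of a copy and therefore forces any hamiltonian path that covers the copy to make a backward crossing. Combined with the position of the path's endpoints, a backward crossing forces an extra forward crossing, and the conflict between the families $A_1=\{u_iu_{i+1}\}$ and $A_2=\{v_iv_{i+1}\}$ (exactly as in Lemma~\ref{lem:TsepHP}) drives both arcs of some cut onto $P$. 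Once both arcs of a cut lie on $P$, that cut is empty in $D_r\setminus A(P)$, so the left side of the corresponding copy cannot reach its right side and $D_r\setminus A(P)$ is not strong.

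The main obstacle is precisely the removal of the ``starting at $v_0$'' hypothesis of Corollary~\ref{cor:TsepHP}: a hamiltonian path of $D_r$ may start and end anywhere, and a path entering a copy on its right and leaving on its left could in principle cross a given cut backward more often than forward, spending no forward arc on it and thereby keeping that cut alive. The purpose of the two-copy construction is exactly to defeat this: a hamiltonian path has only two endpoints, so it can be ``anchored on the bad side'' in at most one of the two copies, and in the other copy it is forced into a $v_0$-rooted-style traversal to which the Lemma~\ref{lem:TsepHP} cut argument applies essentially verbatim. Making this dichotomy rigorous — identifying, for an arbitrary $P$, a copy in which $P$ must use both arcs of some minimal cut — is the step I expect to require the most care. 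The remaining verifications are routine: $2$-arc-strongness of $D_r$ follows from the $2$-arc-strongness of each $T_r$ (Lemma~\ref{lem:TsepHP}) together with two arc-disjoint crossing connections between the copies, and $\alpha(D_r)=2$ follows from the chosen non-adjacent crossing pair.
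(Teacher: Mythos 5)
Your overall architecture --- two copies of $T_r$ joined by a few crossing arcs, plus an endpoint dichotomy forcing one copy to be traversed ``from $v_0$'' --- is exactly the paper's, but your proposal leaves the decisive step unproven, and the conditions you impose on the crossing arcs are too weak to carry it. The paper closes this step by one concrete design choice that your construction never makes: \emph{every} arc from one copy into the other has its head at the other copy's $v_0$ (the paper adds just two arcs from $V(T^i_r)$ to $v^{3-i}_0$ for $i=1,2$). Then $v^{3-i}_0$ is the unique entry vertex of copy $3-i$; since a path visits it at most once, any hamiltonian path $P$ of $D_r$ enters the copy not containing its initial vertex exactly once, so the restriction of $P$ to that copy is a single hamiltonian path starting at $v_0$, and Lemma~\ref{lem:TsepHP} applies as a black box. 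The same single-entry property also finishes the proof of non-strength immediately: since $v^2_0$ cannot reach $v^2_r$ in $T^2_r\setminus A(P')$ and every arc from copy 1 into copy 2 ends at $v^2_0$, no vertex at all can reach $v^2_r$ in $D_r\setminus A(P)$.

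By contrast, your conditions (i) (a non-adjacent crossing pair) and (ii) (the internal $2$-arc cuts survive in $D_r$) permit several entry vertices into a copy: for instance, crossing arcs entering copy 2 at $u^2_1$ and at $v^2_1$ satisfy (ii), since both vertices lie on the $v_0$-side of every cut. With two entry points, a hamiltonian path whose endpoints both lie in copy 1 may enter and leave copy 2 twice, so its restriction to copy 2 is a union of \emph{two} subpaths, neither starting at $v^2_0$; then neither Lemma~\ref{lem:TsepHP} nor Corollary~\ref{cor:TsepHP} applies, and the lemma's proof does not transfer ``essentially verbatim'': its final step (that $A_2\subseteq A(P)$ makes it impossible for $P$ to contain both $u_2$ and $u_3$) uses that $P$ is one path with initial vertex $v_0$. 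You would have to formulate and prove a strengthened cut lemma for such multi-segment restrictions --- precisely the step you flag as ``requiring the most care'' and do not supply. So the gap is genuine: the missing idea is to funnel all entries of each copy through that copy's $v_0$, which makes your dichotomy rigorous for free and reduces the theorem to the already-proved Lemma~\ref{lem:TsepHP}.
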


\begin{proof}
For each \AY{$r\geq 4$} let $T_r$ be the 2-arc-strong tournament defined in
Lemma \ref{lem:TsepHP} and form the digraph $D_r$ from two copies
$T^1_r,T^2_r$ of $T_r$ (whose vertices are superscripted) by adding
two arbitrary arcs from $V(T^i_r)$ to $v^{3-i}_0$ for $i=1,2$. Since
each $T_r$ is a tournament, we have $\alpha{}(D_r)=2$. Moreover,
  as $T^1_r$ and $T^2_r$ arc 2-arc strong, $D_r$ is 2-arc strongly
  connected also. Suppose that $D_r$ has a hamiltonian path $P$ such
that \AY{$D \setminus A(P)$} is strong. Without loss of generality $P$ starts in
$V(T^1_r)$ and thus the restriction of $P$ to $V(T^2_r)$ is a
hamiltonian path $P'$ starting at $v^2_0$. \AY{By 
Lemma \ref{lem:TsepHP} we note that $v^2_0$ cannot reach $v^2_r$ in $T^2_r \setminus A(P')$,
which implies that no vertex in $T^1_r$ can reach $v^2_r$ in $D_r \setminus A(P)$.
So $D_r \setminus A(P)$ is not strong, a contradiction.}
  \end{proof}

\section{Non-separating hamiltonian paths in  graphs
  with independence number 2}\label{sec:nonsepHPG}

In contrast to the result in Theorem \ref{thm:noHPok} above, for the case of undirected graphs of independence number 2 we have the following positive result on non-separating hamiltonian paths.

\begin{theorem}
Let $G$ be a 2-edge-connected graph with $\delta(G)\ge 4$ and
$\alpha(G)\le 2$. Then, $G$ contains a spanning tree and a Hamiltonian
path which are edge-disjoint.
\end{theorem}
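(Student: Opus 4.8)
The plan is to prove the equivalent statement that $G$ has a \emph{non-separating Hamiltonian path}, i.e.\ a Hamiltonian path $P$ with $G\setminus E(P)$ connected: since $G$ is $2$-edge-connected, such a $P$ automatically supplies a spanning tree edge-disjoint from $P$ (any spanning tree of the connected graph $G\setminus E(P)$). I would first dispose of the case where $G$ has a cut vertex $v$. Because $\alpha(G)\le 2$, the components of $G-v$ are cliques and there are at most two of them (three vertices in three distinct components, or two non-adjacent vertices of one component together with a vertex of another, would be an independent set of size $3$); call them $C_1,C_2$. As $G$ is $2$-edge-connected, $v$ has at least two neighbours in each $C_i$, and $\delta(G)\ge 4$ forces $|C_i|\ge 4$. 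Taking a Hamiltonian path of the clique $C_1$ ending at a neighbour $p_1$ of $v$, then the edges $p_1v$ and $vp_2$ with $p_2\in N(v)\cap C_2$, then a Hamiltonian path of $C_2$, gives a Hamiltonian path $P$ of $G$; removing $E(P)$ deletes a Hamiltonian path from each clique $K_{|C_i|}$ (which for $|C_i|\ge 4$ leaves a connected graph) and only one of the at least two $v$--$C_i$ edges on each side, so $G\setminus E(P)$ is connected.

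So I may assume $G$ is $2$-connected. Then $\kappa(G)\ge 2\ge\alpha(G)$, so by the theorem of Chv\'atal and Erd\H{o}s $G$ has a Hamiltonian cycle $C$. The key reduction is that for $P=C-e$ we have $G\setminus E(P)=(G\setminus E(C))+e$; hence it suffices to find a Hamiltonian cycle $C$ for which $G\setminus E(C)$ has at most two components, since then choosing $e$ to be a cycle edge joining the two components (one exists because $C$ traverses both) makes $(G\setminus E(C))+e$ connected. Note that every vertex loses exactly its two cycle edges when passing from $G$ to $G\setminus E(C)$, so $\delta(G\setminus E(C))\ge\delta(G)-2\ge 2$; in particular each component of $G\setminus E(C)$ has at least three vertices.

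The heart of the proof is to show, using only $\alpha(G)\le 2$ and with no rerouting of $C$, that $G\setminus E(C)$ has at most two components. Write the components as $D_1,\dots,D_k$ and call a cycle edge \emph{crossing} if its ends lie in different $D_i$; the crossing edges form a graph $X$ of maximum degree $\le 2$, and every edge of $G$ between distinct components lies in $X$. Suppose $k\ge 3$. If some $D_i$ is not a clique of $G$, pick non-adjacent $a,b\in D_i$; every vertex outside $D_i$ is $G$-adjacent to $a$ or $b$ and hence lies in $N_X(a)\cup N_X(b)$, a set of size $\le 4$, whereas the $\ge 2$ remaining components contribute $\ge 6$ vertices, a contradiction. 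If instead every component is a clique, take three of them $D_1,D_2,D_3$ and build an independent transversal: choose $a\in D_1,\ b\in D_2$ with $ab\notin X$ (possible since a complete bipartite graph between two sets of size $\ge 3$ cannot have maximum degree $\le 2$), then $c\in D_3\setminus(N_X(a)\cup N_X(b))$; such $c$ exists unless $N_X(a)\cup N_X(b)\supseteq D_3$, in which case one of $a,b$, say $a$, has both its $X$-edges inside $D_3$ and hence none into $D_2$, so picking $c'\in D_3\setminus N_X(a)$ and $b'\in D_2\setminus N_X(c')$ gives the independent triple $\{a,b',c'\}$. In either subcase $\alpha(G)\ge 3$, a contradiction. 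Thus $k\le 2$, and the reduction above produces the desired $P$.

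I expect the main obstacle to be exactly this last case analysis, and within it the all-cliques situation: one must rule out three clique-components glued only by a crossing graph of bounded degree, and the delicate point is the short exchange step that still produces an independent triple when the naive choice of $c$ fails. Once the component count is controlled, everything else (the cut-vertex reduction, the Chv\'atal--Erd\H{o}s application, and the single edge $e$ reinserted at the end) is routine.
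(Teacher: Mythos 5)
Your proposal is correct, but it takes a genuinely different route from the paper's proof. The paper starts from an arbitrary Hamiltonian path $P$ (which exists because a connected graph with $\alpha\le 2$ has a spanning tree with at most $\alpha$ leaves), shows that if $P$ is separating then $G\setminus E(P)$ has exactly two components $X_1,X_2$ (using an endpoint of $P$, which carries only one path-edge, to kill the case of three components), and then repairs $P$ through a long case analysis on whether $G[X_1],G[X_2]$ are cliques and on $|X_2|\in\{3,4\}$, rerouting the path by explicit edge exchanges and re-verifying connectivity of the complement in each case. You instead split off the cut-vertex case (where $G-v$ is at most two cliques and an explicit path suffices), invoke Chv\'atal--Erd\H{o}s to obtain a Hamiltonian cycle $C$, and exploit the identity $G\setminus E(C-e)=(G\setminus E(C))+e$, which reduces everything to showing that $G\setminus E(C)$ has at most two components; your crossing-graph argument (maximum degree $2$, the $4<6$ count in the non-clique case, and the independent transversal with the exchange step in the all-cliques case) does exactly that, and I checked the delicate subcase: when $D_3\subseteq N_X(a)\cup N_X(b)$, one of $a,b$ indeed has both crossing edges into $D_3$, hence no edge to the opposite component, and the triple $\{a,b',c'\}$ is independent. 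What your approach buys is the complete elimination of the rerouting case analysis --- any Hamiltonian cycle works and a single edge deletion finishes the proof; the cost is the appeal to Chv\'atal--Erd\H{o}s, which needs $2$-connectivity rather than $2$-edge-connectivity and hence forces the separate cut-vertex preprocessing, whereas the paper's argument is self-contained and works with $2$-edge-connectivity throughout. One cosmetic remark: the equivalence you state at the outset needs no $2$-edge-connectivity in either direction, since edge-disjointness of $T$ and $P$ already means $T\subseteq G\setminus E(P)$.
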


\begin{proof}
Let $G$ be a 2-edge connected graph with $\delta(G)\ge 4$ and
$\alpha(G) \le 2$.  It is easy to see that every connected graph with
independence number at least 2 has a spanning tree with a number of
leaves at most its independence number. Hence $G$ contains a
Hamiltonian path $P$.  If AY{$G \setminus E(P)$} is connected, we are done. Otherwise
let $X_1,X_2,\dots ,X_p$ be the connected components of \AY{$G \setminus E(P)$.}  Notice
that as $\delta(G)\ge 4$ and $P$ is a path, we have $\delta(\AY{G \setminus E(P)})\ge
2$, and in particular, we have $|X_i|\ge 3$ for all $i=1,\dots ,p$. If
$p\ge 3$, consider $u$ an extremity of $P$ and assume without loss of
generality that $u\in X_1$ and that its neighbour $v$ in $P$ is in
$X_1\cup X_2$. It means that all the vertices of $X_2\setminus \{v\}$
and $X_3$ are non neighbours of $u$ and hence must form a complete
subgraph of $G$. In particular, all the edges between $X_2\setminus
\{v\}$ and $X_3$ are edges of $P$, which is not possible as
$|X_2\setminus \{v\}|\ge 2$ and $|X_3|\ge 3$, implying that $P$ would
contain a cycle. So, \AY{$G \setminus E(P)$} contains exactly the two connected
components $X_1$ and $X_2$.

Notice that the case $|X_1|=|X_2|=3$ is not possible, as in this case,
as $\delta(G)\ge 4$, every vertex of $X_1$ would have at least two
neighbours in $X_2$ and every vertex of $X_2$ would have at least two
neighbours in $X_1$, and so $P$ would contain a cycle, which is not
possible. Hence $\max\{|X_1|,|X_2|\}\geq 4$ and we may assume that
$|X_1|\geq 4$.

Assume first that $\alpha(G[X_1])=\alpha(G[X_2])=1$, that is, they are
both complete graphs. In this case, we will show how to build a
Hamiltonian path and a spanning tree of $G$ which are
edge-disjoint. If $x$ is a vertex of any complete graph $K$ on at
least 4 vertices, it is easy to find a Hamiltonian path which starts
in $x$ and a spanning tree which are edge-disjoint. Suppose first that
$|X_2|\ge 4$. Then it follows from the fact that $G$ is 2-edge
connected, that there exist two distinct edges $x_1x_2$ and $y_1y_2$
with $x_1,y_1\in X_1$ and $x_2,y_2\in X_2$. Now consider for $i=1,2$ a
Hamiltonian path $P_i$ of $G[X_i]$ starting in $x_i$ and a spanning
tree $T_i$ of $G[X_i]$ edge-disjoint from $P_i$. We conclude by
considering the Hamiltonian path $(P_1\cup P_2)+ x_1x_2$ of $G$
edge-disjoint from the spanning tree $(T_1\cup T_2)+ y_1y_2$ of
$G$. Hence we may assume that $|X_2|=3$ and denote the vertices in
$X_2$ by $\{x_2,y_2,z_2\}$.  As $\delta(G)\ge 4$, there exist \AY{three
distinct edges of $G$ $x_2x_1,y_2y_1,z_2z_1$ such that
$x_1$, $y_1$} and $z_1$ belong to $X_1$. So, we consider a
Hamiltonian path $P_1$ of $G[X_1]$ starting in \AY{$x_1$} and a spanning
tree $T_1$ of $G[X_1]$ edge-disjoint from $P_1$. We conclude then with
the Hamiltonian path \AY{$P'=(P_1 \cup x_2y_2z_2)+x_2x_1$ of $G$ and the
spanning tree $T_1 +x_2z_2+y_2y_1+z_2z_1$} of
$G$ edge-disjoint from $P$.\\

\AY{If $\alpha(G[X_1])=1$ and $\alpha(G[X_2])=2$, then as $\delta(G) \geq 4$, 
we must have $|X_2| \geq 4$.  In this case we may swap the names of $X_1$ and $X_2$,
which implies that we may assume without loss of generality that $\alpha(G[X_1])=2$. Let} 
$x_1$ and $y_1$ be two  vertices of $X_1$ which are not adjacent in
$G$. As every vertex of $X_2$ is adjacent to $x_1$ or $y_1$ in $G$ and
as the corresponding edges must be edges of $P$, we have $|X_2|\le 4$.  Suppose 
$|X_2|=4$. Then we will   prove that $G[X_1]$ is almost complete, that is it
contains all the possible edges except $x_1y_1$. Denote the vertices of $X_2$ by
$\{x_2,y_2,z_2,t_2\}$, and as $\alpha(G)=2$, we can assume that $x_2$
and $y_2$ are adjacent to $x_1$ and that $z_2$ and $t_2$ are adjacent
to $y_1$. Assume first that $G[X_1]$ contains two non-adjacent vertices
$z_1$ and $t_1$ both distinct from $x_1$ and $y_1$. As
$\alpha(G)=2$, the vertex $x_2$ has to be adjacent to $z_1$ or $t_1$.
We can assume that $z_1x_2$ is an edge of $G$, but then $y_2$ has to
be adjacent to $t_1$ as there is no cycle induced by the edges of
$P$. Similarly, we can assume that $z_1z_2$ is an edge of $P$, but
then $t_2$ cannot be adjacent to any of $z_1,t_1$ without
creating a cycle induced by the edges of $P$. So, $\{z_1,t_1,t_2\}$ is
an independent set, contradicting $\alpha(G)=2$. 
Now  assume
that $X_1$ contains a vertex $z_1\neq y_1$ which is non-adjacent to $x_1$.
Then $z_2$
has to be adjacent to $x_1$ or $z_1$ and as $x_2$ and $y_2$ are already
adjacent to $x_1$ in $P$, $z_2$ must be adjacent to $z_1$. Similarly,
$t_2$ is adjacent to $z_1$, but then $z_1z_2y_1t_2$ would form a cycle
with the edges of $P$, a contradiction. Similarly we can prove  that every
vertex of $X_1$ except $x_1$ is adjacent to $y_1$ so $G[X_1]$
is the graph $K_{|X_1|}-x_1y_1$.
As $|X_1|\ge 4$, it is easy to find
then in $G[X_1]$ two vertex-disjoint paths $P_1$ and $P'_1$ such that
$V(P_1)\cup V(P'_1)=X_1$, the path $P_1$ ends in $x_1$ and the path
$P'_1$ ends in $y_1$ and $G[X_1]-P_1-P_2$ is connected and so contains
a spanning tree $T_1$. On the other hand, as $\delta(G[X_2]-P)\ge 2$,
the graph $G[X_2]$ contains a 4-cycle. One of the edges of this 4-cycle
goes from $\{x_2,y_2\}$ to $\{z_2,t_2\}$. So, denote this 4-cycle by
$abcda$ such that $a$ is adjacent to $x_1$ and $d$ is adjacent to
$y_1$. Now, we consider the Hamiltonian path $P'=(P_1\cup
P'_1)+ab+bc+cd$ of $G$. As $b$ and $c$ have at least one neighbour
each in $X_1$, we know that \AY{$G \setminus E(P')$} has at most 2 connected components,
one containing $X_1\cup \{b,c\}$, the other one containing
$\{a,d\}$. But as $\delta(G)\ge 4$, $a$ has a neighbour in
$X_1\cup \{b,c\}$ different from $x_1$, and so, \AY{$G \setminus E(P')$} is
connected.\\

\AY{The only remaining case is when $|X_2|=3$.} Denote the vertices of $X_2$ by $\{x_2,y_2,z_2\}$ and assume without loss
of generality that $x_2$ and $y_2$ are adjacent to $x_1$ and that
$z_2$ is adjacent to $y_1$. As $z_2$ as degree at least 4 in $G$,
there exist a vertex $z_1\in X_1$ distinct from $x_1$ and $y_1$ such
that $z_1$ is a neighbour of $z_2$ in $P$.  More generally, as
$\delta(G)\ge 4$, every vertex of $X_2$ has exactly two neighbours in
$X_1$, which are then neighbours in $P$, $G[X_2]$ is a complete graph
and finally every vertex of $X_2$ has degree exactly 4. In particular,
$\{x_1,z_2\}$ is an independent set of size 2, and every vertex of
$X_1\setminus \{x_1,y_1,z_1\}$ is adjacent to $x_1$. Now, let us focus
on the extremities of the path $P$. Both cannot lie in
$Y=\{x_1,y_1,z_1,x_2,y_2,z_2\}$ as the only vertices of $Y$ with
degree less than 2 in $P$ are $y_1$ and $z_1$, and they cannot be both
extremities of $P$ as otherwise, $P$ would be the path
$y_1z_2z_1$. So, denote by $p$ an extremity of $P$ not lying in $Y$ and recall that  $x_1$ is adjacent to every vertex of 
$V\setminus Y$ so $x_1p$ is an edge of $G$.
Now consider the Hamiltonian path $P'$ of $G$ defined by
$P'=(P-x_2x_1-x_1y_2)+x_2y_2+x_1p$.  To conclude,
let us prove that \AY{$G \setminus E(P')$} is connected. Indeed, every vertex of
$V\setminus \{y_1,z_1,z_2\}$ is linked to $x_1$, and all the
corresponding edges except $px_1$ are edges of \AY{$G \setminus E(P')$}. So, \AY{$G \setminus E(P')$}
induces a connected graph on $Z=V\setminus \{y_1,z_1,z_2,p\}$.
Moreover, $z_2$ is adjacent to $x_2$ and $x_2z_2$ is not an edge of
$P'$. And by choice, $p$ is not adjacent to $z_2$ and so has a
neighbour in $Z$ different from $x_1$. Finally, $y_1$ and $z_1$ have
both at least one neighbour in \AY{$G \setminus E(P')$} which belongs to $V\setminus
\{y_1,z_1\}$. Thus, \AY{$G \setminus E(P')$} is connected and the proof is complete.
\end{proof}

Notice that we cannot replace $\delta{}(G)\geq 4$  by $\delta(G)=3$ (even if $\lambda(G)=3$) as
shown by the graph built from two 3-cycles linked by a perfect
matching.
\AY{Also, any $3$-regular graph, $G$, has $|E(G)|=3|V(G)|/2$, so cannot contain two edge-disjoint spanning
trees when $|V(G)|>4$, and therefore also not a hamiltonian path and a spanning tree that are edge-disjoint.}

\section{Remarks and open problems}\label{sec:remarks}

All proofs in this paper are constructive and it is not difficult to derive polynomial algorithms for finding the desired objects in case they exist. We leave the details to the interested reader.

\begin{problem}
  Determine the complexity of deciding whether a strong digraph of independence number 2 has a non separating out-branching.
\end{problem}

\JBJ{\begin{problem}
  Determine the complexity of deciding whether a strong digraph of independence number 2 has a non separating spanning tree.
\end{problem}

This problem is NP-complete for general digraphs as shown in \cite{bangTCS438}.\\}

Theorem \ref{thm:nonsep14} suggests that perhaps we can get rid of the requirement on the minimum \JBJ{in-degree} in Theorem \ref{thm:main} when the digraph has enough vertices.

\begin{conjecture}\label{conj:nonsepnoroot}
  There exists an integer $K$ such that every digraph $D$ on at least $K$ vertices with $\lambda{}(D)\geq 2$ and $\alpha{}(D)=2$ has a non-separating out-branching.
\end{conjecture}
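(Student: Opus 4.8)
The plan is to run the proof of Theorem~\ref{thm:main} essentially verbatim, but to replace its one use of the minimum in-degree hypothesis---Claim~A, which bounds the number of initial strong components of $D\setminus A(H)$ and forces them to be large---by a purely Ramsey-type count that needs no degree assumption and exploits instead that $n$ is large. First, by Theorem~\ref{thm:SDnonsepB} we may assume $\alpha(D)=2$, and by Corollary~\ref{cor:smallstrong} either case~(A) holds or $D$ has a strong spanning subdigraph $H$ of type (B1)--(B3). In the latter case set $D'=D\setminus A(H)$ and let $R_1',\dots,R_t'$ be the initial strong components of $D'$. Recall from Corollary~\ref{cor:smallstrong} that all but at most two vertices $x,y$ of $H$ satisfy $d_H^+(v)=d_H^-(v)=1$. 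Choosing in each $R_i'$ a representative $x_i$ with $d_H^+(x_i)=d_H^-(x_i)=1$ (possible except for at most two singleton components contained in $\{x,y\}$), and using that $D'$ has no arc between distinct initial components, every arc of $D$ among the $x_i$ must be an arc of $H$; hence $\{x_i\}$ induces in $UG(D)$ a subgraph of maximum degree at most two, so this set of at least $t-2$ representatives contains an independent set of size at least $\lceil (t-2)/3\rceil$. As $\alpha(D)=2$ this forces $t\le 8$. Thus $t$ is bounded by an absolute constant, and since $n$ is large at least one initial component $R_1=D[V(R_1')]$ has at least five vertices.

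Next I would argue, exactly as in Claim~B of Theorem~\ref{thm:main}, that a large initial component $R_1$ is semicomplete and that every outside vertex meets $R_1$ in at most one in- and one out-neighbour through $H$; by Theorem~\ref{thm:SDnonsepB}(c),(e) the strong semicomplete digraph $R_1$ then admits a non-separating out-branching (in the exceptional case of two in-degree-one vertices one instead gets, by Theorem~\ref{thm:SDnonsepB}(a), an out-tree spanning all but one vertex whose branching arc is supplied from outside, as in Case~1 of Lemma~\ref{lem:4cycle}). Taking $R_1$ as the core, I would absorb the remaining vertices one at a time while maintaining an out-tree $T^+$ with strongly connected complement, following the ``Construction of $R_1^*$'' and ``Completion of the proof'' steps of Theorem~\ref{thm:main}: each absorbed vertex receives one in-arc from the core for $T^+$ and is threaded into the strong complement, while the leftover crossing arcs guaranteed by $\lambda(D)\ge 2$ are reserved to keep the pieces strongly linked, precisely the bookkeeping of Claim~\ref{claim:tree-extension}. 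Case~(A) of Corollary~\ref{cor:smallstrong} is treated separately as in Lemma~\ref{lem:4cycle}, the large semicomplete side now playing the role that $\delta^-\ge 3$ played there.

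The main obstacle is the analogue of Claim~C. In Theorem~\ref{thm:main} the hypothesis $\delta^-(D)\ge 5$ guaranteed that every peripheral vertex receives at least four arcs from the core, leaving ample room to commit one arc to $T^+$ and still route the strong complement through it. Without any degree bound a peripheral vertex may have in-degree exactly two in $D$, so that once one in-arc is used for the out-branching only a single in-arc survives for the complement; one must then show that these tight vertices---of which the count above shows there are only boundedly many---can be absorbed simultaneously, each compatibly with both the out-branching and a strong complement, and without two of them competing for the same crossing arc of a $2$-arc cut. Establishing this simultaneous compatibility, together with ruling out the coincidences of small exceptional sides ($W_1$, $W_2$, $S_4$, or a second initial component with two in-degree-one vertices) meeting across a cut of only two arcs, is the delicate point where the uniformity furnished by a minimum in-degree is genuinely absent, and is the reason the statement remains a conjecture.
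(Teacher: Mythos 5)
This statement is Conjecture~\ref{conj:nonsepnoroot}: the paper offers no proof of it, and your proposal does not supply one either --- indeed you say so yourself in your final paragraph, where you concede that the analogue of Claim~C of the proof of Theorem~\ref{thm:main} is exactly what you cannot establish. That concession is accurate and is the heart of the matter: in Theorem~\ref{thm:main} the hypothesis $\delta^-(D)\geq 5$ (or $\delta^-(D)\geq 3$ plus orientedness) is what guarantees, via Claims~A and~C, that every peripheral vertex receives at least four arcs from the cores, so that one arc can be spent on the out-branching while the strong complement is still routed through the vertex, and so that the single reserved arc $uv$ into $S_1$ of the nice decomposition survives all the absorption steps. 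With only $\lambda(D)\geq 2$ a peripheral vertex may have in-degree exactly $2$ in $D$ with both in-arcs already committed (e.g.\ both lying on $H$, or forming a $2$-arc cut also needed by the complement), and observing that there are only boundedly many such tight vertices does not resolve the simultaneous-assignment problem you correctly identify. So the status after your argument is the same as before it: an open conjecture.

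Beyond the admitted obstacle, two intermediate steps of your sketch are themselves flawed. First, your Ramsey-type count gives $t\leq 8$, but you then infer that ``since $n$ is large at least one initial component $R_1$ has at least five vertices''; this does not follow. The bound $t\leq 8$ limits the \emph{number} of initial strong components of $D'=D\setminus A(H)$, not their sizes: all of them could be singletons or triangles with the bulk of $D$ lying downstream of them in $D'$. Largeness of $n$ does give (via $R(3,5)=14$, as exploited in Theorem~\ref{thm:nonsep14}) a semicomplete subdigraph on five vertices \emph{somewhere} in $D$, but nothing places it inside an initial component of $D'$, which is where your argument needs it. Second, even granting a large $R_1$, weakening the paper's conclusion $t=2$ (Claim~A) to $t\leq 8$ breaks the ``Completion of the proof'' bookkeeping you invoke: that construction is built around exactly two cores $R_1^*$ and $R_2$, with one path $P_{12}$, one path $P_{21}$, and the single reserved arc $uv$; with up to eight initial components the routing of the strong complement and the placement of branching arcs would have to be redesigned from scratch, and Claim~\ref{claim:tree-extension} (which handles at most two leftover vertices inducing a semicomplete digraph) does not cover this. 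In short, your skeleton is the natural one and your diagnosis of the difficulty is sound, but the proposal is a research plan with two broken rungs rather than a proof, and the statement remains, as in the paper, a conjecture.
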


It is not difficult to check that every member of the infinite class of digraphs that we used in Proposition \ref{prop:no2colstrong}
has a non-separating branching from every vertex.

\begin{conjecture}\label{conj:nonseproot}
  There exists an integer $L$ such that every digraph $D$ on at least $L$ vertices  with $\lambda{}(D)\geq 2$ and $\alpha{}(D)=2$ has a non-separating out-branching $B^+_s$ for every choice of $s\in V$.
\end{conjecture}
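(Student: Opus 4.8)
The plan is to prove Conjecture~\ref{conj:nonseproot} by combining the structural dichotomy of Corollary~\ref{cor:smallstrong} with the semicomplete machinery of Theorem~\ref{thm:SDnonsepB} and Corollary~\ref{cor:SDmultiOK}, while keeping careful track of the prescribed root $s$ throughout. Since $\alpha(D)=2$, either $D$ satisfies (A) --- $V(D)$ splits into two strong semicomplete digraphs $D_1,D_2$, each containing a vertex non-adjacent to the other part --- or $D$ has a sparse strong spanning subdigraph $H$ of one of the types (B1)--(B3). I would fix an arbitrary root $s$ and treat the two cases separately, in each case producing an out-branching $B^+$ rooted at $s$ together with an arc-disjoint strong spanning subdigraph $Q$, exactly as in the proofs of Lemma~\ref{lem:4cycle} and Theorem~\ref{thm:main}, but now with $s$ as the designated root rather than a convenient vertex selected by the construction.

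For case (A), say $s\in V(D_1)$. When $D_1$ is $2$-arc-strong and large (hence not isomorphic to $S_4$ or to $W_1,W_2$), Corollary~\ref{cor:SDmultiOK} already yields a non-separating out-branching of $D_1$ rooted at the prescribed $s$; one then attaches $D_2$ through two oppositely directed cross-arcs (available since $\lambda(D)\ge 2$) and extends $B^+$ into $D_2$ via a non-separating out-branching of $D_2$, as in Lemma~\ref{lem:4cycle}. The difficulty arises when $D_1$ is not $2$-arc-strong: by Theorem~\ref{thm:SDnonsepB}(e) a non-separating out-branching of $D_1$ \emph{alone} can only be rooted in the initial set $S_1$ of its nice decomposition. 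If $s\notin S_1$, I would route out of $s$ along forward arcs to the last block of $D_1$, cross into $D_2$, traverse a non-separating out-branching of $D_2$, and re-enter $S_1$ of $D_1$ through a cross-arc, thereby reaching every vertex of $D_1$ from $s$ without using any internal cut-arc of $D_1$; one must then show the leftover arcs still contain a strong spanning subdigraph. Verifying that the required cross-arcs exist in both directions --- and avoiding the non-adjacent vertices $u_1,u_2$ --- is where $\lambda(D)\ge 2$, $\alpha(D)=2$ and the size bound $n\ge L$ are all needed.

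For case (B), I would run the argument of Theorem~\ref{thm:main} up to the construction of the semicomplete blocks $R_1^*$ and $R_2$ from $D'=D\setminus A(H)$, but now choose the auxiliary arc $uv$ entering $R_1^*$ so that its head $v$ lies in the block containing $s$. If $s$ can itself serve as the head of an arc entering the source set $S_1$ of the nice decomposition of $R_1^*$, then Claim~E produces $B_1^+$ rooted at $s$ directly. When $s$ is interior to $R_1^*$ (not in $S_1$), I would again exploit the ambient $2$-arc-connectivity to splice in a detour through $R_2$ and the absorbed path-vertices, exactly as in the final ``Completion of the proof'' paragraph, so that the two out-trees composing $B^+$ can be reconnected with $s$ as the global root while $Q$ remains strong and spanning.

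The step I expect to be the genuine obstacle is rooting inside a strong semicomplete (or near-semicomplete) block that has internal cut-arcs at a prescribed vertex outside its source set $S_1$. Every existing tool (Theorem~\ref{thm:SDnonsepB}, Corollary~\ref{cor:SDmultiOK}) grants root-freedom only in the $2$-arc-strong case; for blocks with cut-arcs the natural root is forced into $S_1$. The plan therefore hinges on a lemma asserting that, under $\lambda(D)\ge 2$ and $\alpha(D)=2$ with $n$ large, the two semicomplete parts are joined by enough cross-arcs that a detour through the opposite part can substitute for the missing internal cut-arcs --- in effect, a rooted analogue of Corollary~\ref{cor:SDmultiOK} for the glued digraph. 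Failing a clean such lemma, the natural fallback is to argue that whenever $D$ decomposes into two arc-disjoint spanning strong subdigraphs the conjecture is immediate (one subdigraph supplies an out-branching rooted at any $s$, the other is the strong remainder), and that the remaining digraphs --- those obstructing such a decomposition, as in Proposition~\ref{prop:no2colstrong} --- form finitely many describable families each of which can be checked by hand to admit non-separating out-branchings from every vertex.
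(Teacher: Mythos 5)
You are attempting to prove a statement that the paper itself leaves open: Conjecture~\ref{conj:nonseproot} appears in Section~\ref{sec:remarks} as an open problem, and the paper offers no proof of it --- only the remark (itself unproved) that the members of the infinite family from Proposition~\ref{prop:no2colstrong} happen to admit non-separating out-branchings from every vertex. So your sketch must stand entirely on its own, and it does not. The most concrete failure is in your case (B): you propose to ``run the argument of Theorem~\ref{thm:main} up to the construction of $R_1^*$ and $R_2$,'' but that entire machinery is powered by the in-degree hypotheses that the conjecture drops. Under $\lambda(D)\geq 2$ alone, a vertex of in-degree $2$ in $D$ can have in-degree $0$ in $D'=D\setminus A(H)$ (since $H$ of type (B1)--(B3) can use both of its in-arcs), so the initial strong components of $D'$ can be singletons and arbitrarily numerous; Claim~A's conclusions $t=2$ and $|V(R_i)|\geq 5$ fail, and with them Claims~B--E (in particular Claim~C's ``at least four arcs entering from $V(R_1)\cup V(R_2)$,'' which your splicing step relies on). The hypothesis $n\geq L$ cannot rescue this: largeness of $D$ does not restore large in-degrees, and indeed the counterexamples $\tilde{D}$ and $\hat{D}$ (Propositions~\ref{prop:nonsepT} and~\ref{prop:hatD}) show only that $L\geq 9$ is necessary, not that largeness supplies any structural leverage beyond Ramsey-type arguments as in Theorem~\ref{thm:nonsep14} --- which produces a spanning \emph{tree}, not a rooted out-branching.

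The second gap you yourself concede: the ``rooted detour lemma'' --- spanning a semicomplete block with cut-arcs by an out-tree rooted at a prescribed $s\notin S_1$, via a detour through the other part, while the leftover arcs stay strong --- is precisely the open content of the conjecture, since Theorem~\ref{thm:SDnonsepB}(e) forces the root into $S_1$ and root-freedom is available only in the $2$-arc-strong case (Corollary~\ref{cor:SDmultiOK}). Note that even the unrooted version, Conjecture~\ref{conj:nonsepnoroot}, is open, so no combination of the paper's results can close this step. Finally, your fallback is unfounded: Proposition~\ref{prop:no2colstrong} exhibits, for every $K$, an \emph{infinite} family of $2$-arc-strong digraphs with $\alpha=2$ and $\delta^0\geq K$ having no pair of arc-disjoint strong spanning subdigraphs --- these obstructions are parametrized by an arbitrary $2$-arc-strong tournament $T$, there is no classification of them (the paper's final Question even leaves open whether $\lambda\geq 3$ suffices for the decomposition), so ``finitely many describable families each checked by hand'' is not available. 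As it stands, your proposal is a reasonable research plan whose two load-bearing steps are exactly the open problems, not a proof.
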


\begin{question}
  Does every 3-arc-strong digraph $D$ with $\alpha{}(D)=2$ have a pair of arc-disjoint spanning strong subdigraphs?
\end{question}

In Proposition \ref{prop:no2colstrong} we showed that 2-arc-strong connectivity and high minimum semi-degree is not enough to guarantee such digraphs.

\bibliography{refs}

\end{document}